\newcommand{\cred}[1]{{\color{red}#1}}
\newcommand{\divergence}{\textnormal{div}}
\newcommand{\BK }{\boldsymbol{K}}  	
\newcommand{\BU}{\boldsymbol{U}}   	
\newcommand{\dof}{{X}}				
\newcommand{\dpdof}{\underline{\mathrm{p}}}		
\newcommand{\dc}{\underline{\mathrm{c}}}		
\newcommand{\tf}{\underline{\mathrm{w}}}		
\newcommand{\pc}{{\mathrm{c}}}		
\newcommand{\pf}{{\mathrm{w}}}		
\newcommand{\pp}{{\mathrm{p}}}		
\newcommand{\dbf}{a}				
\newcommand{\stab}{s}				
\newcommand{\ro}{\boldsymbol{r}}	
\newcommand{\RO}{\boldsymbol{R}}	
\newcommand{\du}{\underline{\mathrm{u}}}		
\newcommand{\pu}{\mathrm{u}}	
\newcommand{\dv}{\underline{\mathrm{v}}}
\newcommand{\pv}{\mathrm{v}}
\newcommand{\ad}{\mathbf{\mathcal{G}}}		
\newcommand{\fv}{\mathrm{U}}					
\newcommand{\RU}{\boldsymbol{\mathrm{U}}}					
\newcommand{\nv}{\boldsymbol{\mathrm{n}}}
\newcommand{\bLambda}{\boldsymbol{\Lambda}}
\newcommand{\BI}{\boldsymbol{I}}		
\newcommand{\bb}{\boldsymbol{b}}		
\newcommand{\bkappa}{\boldsymbol{\kappa}}	
\newcommand{\eye}{{\normalfont \textbf{I}}}
\newcommand{\hs}{{n + \sfrac{1}{2}}}
\newcommand{\dch}{\dc^\hs} 
\newcommand{\pch}{\pc^\hs} 
\newcommand{\br}{s}
\newcommand{\BD}{\boldsymbol{D}}
\newcommand{\dD}{\mathrm{\mathbf{D}}}
\newcommand{\R}{\mathbb{R}}
\newcommand{\N}{\mathbb{N}}
\newcommand{\Poly}{\mathbb{P}}
\newcommand{\scriptC}{\mathcal{C}}
\newcommand{\scriptM}{\mathcal{M}}
\newcommand{\scriptB}{\mathcal{B}}
\newcommand{\scriptT}{\mathcal{T}}
\newcommand{\scriptF}{\mathcal{F}}
\newcommand{\fspd}{$\textnormal{ft}^2/\textnormal{d}$ay}
\newcommand{\lip}[3]{\int_{#1} #2 #3 }
\newcommand{\lipd}[3]{\int_{#1} #2 \cdot #3 }
\theoremstyle{plain}
\newtheorem{theorem}{Theorem}
\newtheorem{lemma}[theorem]{Lemma}
\newtheorem{remark}[theorem]{Remark}
\newtheorem{definition}[theorem]{Definition}
\newtheorem{test}{Test}
\numberwithin{equation}{section}
\numberwithin{theorem}{section}
\numberwithin{test}{subsection}
\algnewcommand{\LineComment}[1]{\State \(\triangleright\) #1}
\begin{document}
\title[Arbitrary order scheme for porous, miscible displacement]{An arbitrary order scheme on generic meshes for miscible
displacements in porous media}

\author{Daniel Anderson}
\address{School of Mathematical Sciences, Monash University, Clayton, Victoria 3800, Australia.
\texttt{daniel.anderson@monash.edu}}
\author{J\'er\^ome Droniou}
\address{School of Mathematical Sciences, Monash University, Clayton, Victoria 3800, Australia.
\texttt{jerome.droniou@monash.edu}}
\date{\today}

\maketitle

\begin{abstract}
	We design, analyse and implement an arbitrary order scheme applicable to generic meshes for a coupled elliptic-parabolic PDE system describing miscible displacement in porous media. The discretisation is based on several adaptations of the Hybrid-High-Order (HHO) method due to Di Pietro et al. [{\it Computational Methods in Applied Mathematics}, 14(4), (2014)]. The equation governing the pressure is discretised using an adaptation of the HHO method for variable diffusion, while the discrete concentration equation is based on the HHO method for advection-diffusion-reaction problems combined with numerically stable flux reconstructions for the advective velocity that we have derived using the results of Cockburn et al.\ [{\it ESAIM: Mathematical Modelling and Numerical Analysis, 50(3), (2016)}]. We perform some rigorous analysis of the method to demonstrate its $L^2$ stability under the irregular data often presented by reservoir engineering problems and present several numerical tests to demonstrate the quality of the results that are produced by the proposed scheme.
\end{abstract}

{\small
\textbf{Keywords}: hybrid high-order methods, porous medium, miscible fluid flow, stability analysis, numerical tests.

%
}

\section{Introduction}

The single-phase flow of incompressible, miscible fluids in a porous medium, arising in the modelling of enhanced oil recovery is described by a coupled system of non-linear elliptic-parabolic equations on the pressure and concentration of invading solvent, sometimes referred to as the Peaceman model as derived by Peaceman in \cite{peaceman1962numerical} (not to be confused with the Peaceman model of wells). Given its complexity, the behaviour of this system can only be fully understood through numerical approximations. These approximations must account for the specificities of flows in porous media: discontinuous data, non-Cartesian (and possibly non-conforming) grids, etc.

In this paper, we design an arbitrary-order numerical scheme for the Peaceman model. The stability of
the approximation is established, and numerical results are provided. Based on these, we provide advice
on how to choose the various parameters of the scheme (spatial order of approximation, time-stepping method, etc.)
to optimise the accuracy of the result and the overall computational cost.

Existence of a weak solution to this miscible flow model has been first established in \cite{feng1995existence}, and
then extended in \cite{chen1999mathematical} to include gravity effects and various boundary conditions.
In practice, the molecular diffusion is much smaller than the dispersion effect (driven by the Darcy viscosity),
and often neglected in numerical simulation. In that case, the parabolic equation takes on a degenerate form,
which makes the analysis of the model even more complex. \cite{az04} establishes the existence of a solution
in the case of a vanishing molecular diffusion and regular source terms. Given the scale of the reservoir
and the well bores, it is customary in simulations to consider wells concentrated on measures (Dirac measures
in 2D, measures along lines in 3D). The corresponding mathematical analysis has been carried out in \cite{droniou2014miscible} for non-zero molecular diffusion and in \cite{DT16} for vanishing molecular diffusion.

A variety of numerical schemes have been considered, with or without convergence
analysis, for the Peaceman model. Finite-element based methods coupled with a modified method of characteristic
for the advective part of the flow, are applied in \cite{ewing1983simulation} and
analysed in \cite{ERW84}. Another method of characteristics, the Eulerian--Lagrangian Adjoint Method,
is coupled with $\mathbb{RT}_k$ mixed finite elements (for the pressure) and $\mathbb{Q}_k$ finite elements (for the
concentration) in  \cite{wang2000approximation,wang2008optimal}. A method combining $\mathbb{RT}_k$
finite elements and discontinuous Galerkin (dG) schemes is analysed in \cite{DGPeaceman}. Optimal error estimates for conforming
$\mathbb{P}_k$ finite elements on triangles are established in \cite{GalerkinPeaceman}, based
on maximal regularity properties of the continuous model.
Finite-element based methods are natural
and well known, but suffer from restrictions on the mesh geometries, which must be conforming and essentially
made of triangles or squares. Other schemes have been considered to deal with meshes with generic geometries,
as encountered in geophysical applications. In the last few
years, schemes have been developed to be applicable on such generic grids. They are mostly based
on finite volume techniques, which have the advantage of providing conservative approximations
of the Darcy fluxes \cite{droniou2014finite}, that can be used in the discretisation of the advective terms.
In \cite{droniou2007convergence},
the mixed finite volume (MFV) of \cite{droniou2006mixed} is adapted to the Peaceman model, its convergence analysed
and numerical tests are provided; as shown in \cite{DEGH09}, this method can be embedded
in a larger family, the hybrid mimetic method (HMM) family, that also contains the SUSHI scheme
of \cite{EYM10} and the mixed-hybrid mimetic finite difference methods of \cite{Brezzi.Lipnikov.ea:05}.
Discrete duality finite volume (DDFV) methods are considered in \cite{DDFV1,DDFV2}.
HMM and DDFV are finite volume schemes with first-order approximation properties, and rely for the
miscible displacement model on upwinding to stabilise
the advective terms; this raises the concern of an over-diffusion of the transition layer between
the invading solvent and the residing oil.

In this work, we develop an arbitrary-order numerical scheme for the Peaceman model, which is applicable
on generic grids. The scheme is an adaptation of the hybrid high-order (HHO) method, initially developed
for stationary diffusion PDEs in \cite{di2014arbitrary,di2015hybrid} and then extended to stationary
advection--diffusion--reaction models in \cite{droniou2015discontinuous}. The HHO method can be seen as
a higher-order extension of the HMM method, and is very close to virtual element methods \cite{Beirao-da-Veiga.Brezzi.ea:13}, to non-conforming mimetic finite difference methods \cite{Lipnikov-Manzini:2014} and to hybridizable dG methods \cite{cockburn2016bridging}. 
The initial degrees of freedom of the HHO method are scalar valued polynomials of arbitrary order $k$ on the cells and faces of the mesh. The cell degrees of freedom can however be eliminated by a local
static condensation procedure, and only the face degrees of freedom remain coupled, in a way that
is however highly parallelisable. The HHO scheme is built on a collection of high-order local reconstruction operators that mimic the quantities present in the weak formulation of the continuous equation. Our executive summary is as follows:
\begin{itemize}
\item There is a real advantage in going for a higher order method.
The choice $k=0$ leads to strong grid effects, that are mostly eliminated by taking
$k=1$. This choice $k=1$ seems to be optimal in terms of accuracy vs. computational
cost; the choices $k=2,3,\ldots$ increase the computational cost with only minor further improvements of the
accuracy.
\item A Crank-Nicolson or second order backward differentiation formula (BDF) time-stepping is sufficient to obtain good results; high-order BDF tend to become
unstable unless the time step is reduced a lot, and do not lead to perceptible improvement (even considering
higher order spatial approximation, that is $k\ge 2$).
\item The specific mesh geometry is mostly irrelevant to the quality of the numerical approximation,
which mostly seems to only depend on the number of faces of the mesh (which is expected, the face
unknowns being the main unknowns in the HHO method).
\end{itemize}

Let us conclude by describing the organisation of the paper. In the next section, we describe
the continuous miscible displacement model, both in strong and weak form. The scheme is described
in Section \ref{sec:scheme}, starting from the time stepping, designed in a classical way to decouple the pressure and concentration equations. As explained above, the HHO method is built on local polynomial
spaces and reconstruction operators; these are respectively described in Sections \ref{sec:discretisation.spaces} and \ref{sec:reconstruction.operators}. The discretised pressure equation
is then presented. In order to discretise the convection terms appearing in
the concentration equation, cell Darcy velocities and corresponding fluxes have
to be designed from the numerical solution of the pressure equation. The reconstruction
of this velocity and fluxes is described in Section \ref{sec:discrete.fluxes}.
We note that this reconstruction has to be carefully performed to preserve the
scheme stability; in particular, this implies discretising the pressure equation at
an order twice the order chosen for the concentration equation. The numerical approximation
of the latter is described in Section \ref{sec:concentration.equation}.
Existence, uniqueness and stability results for our complete scheme are stated in Theorem \ref{thm:stability_of_crank_nicolson} at the end of Section
\ref{sec:scheme}.
Numerical tests are provided in Section \ref{sec:tests}. We analyse in particular
the effect (in terms of cost as well as efficiency)
of varying the spatial degree of the method and of having to use distorted polygonal meshes.
This analysis is done on test-cases involving homogeneous or discontinuous permeability,
with various mesh geometries and by considering both the general eye-ball quality of the solution (compared
with the expected behaviour), as well as quantitative assessments based
on the variation, with respect to the polynomial degrees, of the amount of oil recovered after 10 years. The executive summary above is backed up by the extensive numerical results in
this section.
A short conclusion is given in Section \ref{sec:conclusion}, and the proof of the existence,
uniqueness and stability result is given in the first appendix, Section \ref{sec:appen}. These proofs
show in particular the importance of choosing, in the discretisation
of the concentration equation, Darcy velocity and fluxes adapted to the discretisation
of the pressure equation. A second appendix, Section \ref{appen:implementation}, describes
the practical implementation of the scheme, and provides a link to the code
we developed for the numerical tests.

\section{The continuous model}\label{sec:continuous}

We introduce the following notations that will be used to describe the model, and then present the aforementioned system.

\begin{center}
	\begin{tabular}{ l l }
		$ d \in \{2,3\}$ & the number of dimensions considered in the model, \\
		$\Omega \subset \R^d$ & a bounded Lipschitz domain representing the reservoir,  \\
		$(0, t_f) \subset \R$ & the time interval on which we consider the problem, \\
		$p : (0,t_f) \times \Omega \to \R$ & the pressure in the mixture, \\
		$\BU : (0,t_f) \times \Omega \to \R^d$ & the Darcy velocity of the fluid, \\
		$c : (0,t_f) \times \Omega \to \R$ & the concentration of the invading solvent in the reservoir, \\
		$\hat{c} : (0,t_f) \times \Omega \to \R$ & the concentration of solvent as it is injected, \\
		$\Phi : \Omega \to \R$ & the porosity of the medium, \\
		$\mu : [0,1] \to \R$ & the viscosity of the fluid mixture at a given concentration, \\
		$\BK : \Omega \to \R^{d\times d}$ & the absolute permeability tensor of the medium, \\
		$\BD : \Omega \times \R^d \to \R^{d\times d}$ & the diffusion-dispersion tensor of the medium, \\
		$q^+ : (0,t_f) \times \Omega \to \R$ & the source term corresponding to the injection well, \\
		$q^- : (0,t_f) \times \Omega \to \R$ & the source term corresponding to the production well.
	\end{tabular}
\end{center}
\medskip

Taking the effects of gravity to be negligible, the model reads:
\begin{subequations}\label{scheme:all.eqn}
\begin{align} 
&\begin{cases} \label{eqn:peaceman_pressure}
\divergence (\BU) = q^+ - q^- & \text{in } (0,t_f) \times \Omega , \\[1ex]
\BU = -\cfrac{\BK(x)}{\mu(c)}\nabla p	& \text{in } (0,t_f) \times \Omega ,
\end{cases}  \\[1ex]
&\Phi(x)\frac{\partial c}{\partial t} - \divergence(\BD(x,\BU)\nabla c - c\BU) + q^-c= q^+\hat{c} \qquad \text{in } (0,t_f) \times \Omega . \label{peaceman_concentration}
\end{align}

This system is comprised of two very natural parts. The pressure equation \eqref{eqn:peaceman_pressure} is an anisotropic diffusion equation with diffusivity $-\frac{\BK}{\mu(c)}$. For simplicity of notation, we may instead write
\begin{equation} \label{eqn:permeability-viscosity-ratio}
\bkappa = \cfrac{\BK(x)}{\mu(c)}.
\end{equation}
The concentration equation \eqref{peaceman_concentration} describes the convection of the fluid mixture via an advection-diffusion-reaction equation with diffusivity $\BD$ and advective velocity corresponding to the Darcy velocity of the fluid mixture $\BU$. We will now briefly summarise the models used for the data. Following \cite{peaceman1966improved}, Peaceman derived the diffusion-dispersion tensor $\BD$ to be
\begin{equation} \label{eqn:diffusion_dispersion_tensor}
\BD(x,\BU) = \Phi(x)\left(d_m \eye + |\BU|\left(d_l E(\BU) + d_t(\eye - E(\BU)) \right) \right),
\end{equation}
where $d_m$ is the molecular diffusion coefficient, $d_l$ and $d_t$ are the longitudinal and transverse dispersion coefficients, and $E(\BU)$ is an orthogonal projection in the direction of the Darcy velocity, given by the outer-product
\begin{equation}
E(\BU) = \frac{\BU \otimes \BU}{|\BU|^2} =  \left[\frac{\BU_i \BU_j}{|\BU|^2}\right]_{1\leq i,j \leq d} .
\end{equation}
Physical experiments reveal that the longitudinal dispersion $d_l$ is far stronger than the transverse dispersion $d_t$, and that the molecular dispersion $d_m$ is negligible in comparison \cred{\cite{WLELQ00}}.

The viscosity of the mixture is determined by the mixing rule given in \cite{koval1963method}:
\begin{equation} \label{eqn:viscosity}
\mu(c) = \mu(0)\left(1 + (M^\frac{1}{4} - 1)c\right)^{-4} \qquad c \in [0,1],
\end{equation}
where $\mu(0)$ is the viscosity of the oil and $M$ is the mobility ratio between the oil and the injected solvent, given by $M = \frac{\mu(0)}{\mu(1)}$. 

In order to maintain a balance of mass in the domain, the boundary of the reservoir $\partial \Omega$ is taken to be impermeable. Consequently, we include, denoting by $\nv$ the unit exterior normal to $\partial\Omega$, homogeneous no flow Neumann boundary conditions:
\begin{equation}
\begin{cases}
\BU\cdot \nv = 0 & \text{on } (0,t_f)\times \partial \Omega,  \\
D \nabla c\cdot \nv = 0 & \text{on } (0,t_f)\times \partial \Omega.
\end{cases}
\end{equation}
Additionally, in order to satisfy the no flow boundary conditions and maintain mass, we must further impose that our injection and production source terms are compatible:
\begin{equation} \label{eqn:source_compatability}
\int_\Omega q^+(t,x)dx = \int_\Omega q^-(t,x)dx \qquad \text{in } (0, t_f).
\end{equation}
We must also prescribe an initial condition:
\begin{equation}
c(0,x) = c_0(x).
\end{equation}
In practice we take the initial concentration in the well to be $0$ everywhere. It is also usual to take $\hat{c}$, the concentration at the injection well to be $1$.

Lastly, since the pressure is only defined up to an arbitrary constant, we normalise $p$ by the following condition:
\begin{equation} \label{eqn:pressure_normalization}
\int_\Omega p(\cdot,x)dx = 0 \qquad \text{in } (0,t_f).
\end{equation}
\end{subequations}


\subsection{Weak formulation}\label{sec:weak.formulation}

Noting that the geometry of a typical oil reservoir will contain many geological layers of varying porosity and permeability, it is very important to take into account the fact that these quantities can not be assumed to be smooth and continuous everywhere. The following assumptions given in \cite{droniou2007convergence} are reasonable and allow us to devise both analytically sound and physically acceptable solutions.

\begin{subequations} \label{eqn:regularity_assumptions}
	\begin{equation}
	\begin{gathered} \label{eqn:permeability_assumptions}
	\Phi \in L^\infty(\Omega) \text{ such that there exists $\Phi_* > 0$ satisfying} \\[-0.3em]
	\text{ $\Phi_* \leq \Phi \leq \Phi_*^{-1}$ almost everywhere in $\Omega$},
	\end{gathered}
	\end{equation}
	
	\begin{equation}
	\begin{gathered}
	q^+, q^- \in L^\infty (0, t_f; L^2(\Omega)) \text{ are non-negative and compatible, i.e.}  \\
	\int_{\Omega} q^+(\cdot, x)dx = \int_\Omega q^-(\cdot, x) dx \text{ for almost every $t\in(0, t_f)$},
	\end{gathered}
	\end{equation}
	
	\begin{equation}
	\begin{gathered}
	\BK : \Omega \to M_d(\R) \text{ is a bounded matrix-valued function} \\[-0.3em]
	\text{admitting symmetric, uniformly coercive values, that is, }\\[-0.3em] \exists \alpha_K > 0 \text{ s.t. } \BK(x)\xi \cdot \xi \geq \alpha_K |\xi|^2 
	\text{ for almost every $x\in\Omega$ and all}\\[-0.3em]
	\xi \in \R^d,\ \exists \Lambda_K > 0 \text{ s.t. }
	|\BK(x)| \leq \Lambda_K \text{ for almost every $x\in\Omega$,}
	\end{gathered}
	\end{equation}
	
	\begin{equation}
	\begin{gathered}
	\BD : \Omega \times \R^d \to M_d(\R) \text{ is given by \eqref{eqn:diffusion_dispersion_tensor} with	
	$d_l, d_m, d_t > 0$}.
	\end{gathered}
	\end{equation}
	
	\begin{equation}
	\begin{gathered}
	\mu \in \scriptC(\R) \text{ is positively bounded, that is } 0 \leq a \leq \mu(c) \leq b \text{ for} \\[-0.3em]
	\text{positive constants $a, b$ for all $c \in \R$. This is clearly satisfied by \eqref{eqn:viscosity}}, 
	\end{gathered}
	\end{equation}
	
	\begin{equation}\begin{aligned}
	&\hat{c} \in L^\infty((0,t_f)\times \Omega) \text{ satisfies } 0 \leq \hat{c} \leq 1 \text{ almost everywhere},\\
	&c_0 \in L^\infty(\Omega) \text { satisfies } 0 \leq c_0 \leq 1 \text{ almost everywhere}.
	\end{aligned}
\end{equation}
\end{subequations}

Under the regularity assumptions \eqref{eqn:regularity_assumptions}, the existence of a weak solution to \eqref{eqn:peaceman_pressure}--\eqref{peaceman_concentration} is established in \cite{feng1995existence,chen1999mathematical}.
The HHO scheme presented in Section \ref{sec:scheme} consists in discretising the following
equations, satisfied by this weak solution $(p, \BU, c)$:
\begin{equation} 
\label{eqn:weak_spaces}
\left\{\begin{split}
&p \in L^\infty(0,t_f;H^1(\Omega))\,,\;\BU \in L^\infty(0,t_f;L^2(\Omega))^d,  \\
&c \in \scriptC([0,t_f];L^2(\Omega)) \cap L^2(0,t_f;H^1(\Omega))\mbox{ with }\\
& \Phi \partial_t c \in L^2(0,t_f;(W^{1,4}(\Omega))')\mbox{ and }c(0) = c_0,
\end{split}\right.
\end{equation}
\begin{equation} \label{weak_pressure_equation}
\left\{
\begin{split}
&\text{For almost every } t \in (0,t_f), \text{ for all } \varphi \in H^1(\Omega), \\[0.5ex]
&- \int_\Omega \BU(t,\cdot)\cdot \nabla \varphi = \int_\Omega \left(q^+(t,\cdot) - q^-(t,\cdot) \right)\varphi,
\end{split}
\right.
\end{equation}
and
\begin{equation} \label{weak_concentration_equation}
\left\{
\begin{split}
&\text{For almost every } t \in (0,t_f), \text{ for all } \varphi \in W^{1,4}(\Omega), \\[0.5ex]
&\langle \Phi \partial_t c(t),\varphi\rangle_{(W^{1,4})',W^{1,4}} + \int_\Omega \BD(\cdot,\BU(t, \cdot))\nabla c(t, \cdot) \cdot \nabla \varphi \\
& \qquad- \int_\Omega c(t, \cdot) \BU(t, \cdot) \cdot \nabla \varphi + \int_\Omega q^-(t, \cdot)c(t, \cdot)\varphi = \int_\Omega q^+(t, \cdot)\hat{c}(t)\varphi.
\end{split}
\right.
\end{equation}

\section{Numerical scheme}\label{sec:scheme}

The scheme for the pressure equation consists in the standard HHO method for variable diffusion problems \cite{di2015hybrid}, taking into account the coupling of the pressure with the concentration equation. Adapting \cite{cockburn2016bridging}, we derive conservative discrete version of the Darcy velocity and its fluxes that are required to discretise the advective
term $\divergence(\BU c)$ in the concentration equation. The spatial terms in this equation are discretised by
using the HHO method for linear advection--diffusion--reaction \cite{droniou2015discontinuous}, incorporating the aforementioned discrete Darcy velocity and fluxes. The time stepping presented here is based on the Crank-Nicolson
method, due to its strong stability properties \cite{leveque2007finite} (we also tested BDF time steppings).

\subsection{Time stepping}

We define our time-stepping as follows, let $N \in \N$ be the number of time-steps to be taken and let
\begin{equation}
\Delta t = \frac{t_f}{N}, \qquad t^n = n\Delta t, \qquad n = 0,1,...,N.
\end{equation}

Let us denote the pressure, concentration and Darcy velocity at time-step $n$ by
$p^n = p(t^n, \cdot)$, $c^n = c(t^n, \cdot)$ and $\BU^n = \BU(t^n, \cdot) = -\bkappa(c^n, \cdot)\nabla p^n$.
The Crank-Nicolson time stepping of \eqref{peaceman_concentration} consists in writing
\begin{equation} \label{eqn:crank_nicolson_continuous_concentration}
\begin{split}
\Phi(x)\frac{c^{n+1} - c^n}{\Delta t} &- \divergence(D(x,\BU^\hs)\nabla c^\hs - c^\hs\BU^\hs) + q^-c^\hs \\&= q^+\hat{c}(t^\hs),
\end{split}
\end{equation}
where the intermediate time-stepped values are defined by
\begin{equation} \label{eqn:half_time_concentration}
\xi^\hs = \frac{\xi^{n+1} + \xi^n}{2} \qquad\mbox{($\xi=t$, $c$ or $\BU$)}.
\end{equation}
The problem data, eg.\ the well terms $q^+$ and $q^-$ are evaluated at $t^\hs$. 
The formulation \eqref{eqn:crank_nicolson_continuous_concentration} leads to the following
direct relation between $c^n$ and $c^\hs$:
\begin{equation}
\label{eqn:crank_nicolson_half_step_continuous_concentration}
\begin{split}
\Phi(x)\frac{2(c^\hs - c^n)}{\Delta t} &- \divergence(D(x,\BU^\hs)\nabla c^\hs - c^\hs\BU^\hs) \\& + q^-c^\hs = q^+\hat{c}(t^\hs).
\end{split}
\end{equation}
We note that this formulation is equivalent to performing a half time-step with an implicit Euler scheme to obtain $c^\hs$ and then linearly extrapolating $c^n$ with $c^\hs$ to obtain $c^{n+1}$ using \eqref{eqn:half_time_concentration}.

\subsection{Discretisation spaces}\label{sec:discretisation.spaces}

Let us briefly introduce the notion of a mesh and define the polynomial spaces central to the scheme.
\begin{definition} \label{definition_mesh}
	A mesh $\scriptM$ of $\Omega \subset \R^d$ is a collection of cells and faces $(\scriptT, \scriptF)$ where:
	\begin{enumerate}
		\item $\scriptT$ (The cells or control volumes) is a subdivision of $\Omega$ into a disjoint family of open, non-empty polygons (or polyhedra in higher dimensions). Formally,
		\begin{equation}
		\bigcup_{T\in\scriptT} \overline{T} = \overline{\Omega}.
		\end{equation}
		\item $\scriptF$ (The faces or edges) is a disjoint family of non-empty affine subsets of $\Omega$, with positive $(d-1)$ dimensional measure such that for each cell $T \in \scriptT$, there exists $\scriptF_T \subset \scriptF$ where
		\begin{equation}
		\bigcup_{F \in \scriptF_T}\overline{F} = \partial T,
		\end{equation}
		such that each face $F \in \scriptF$ borders exactly one or two cells.
	\end{enumerate}
\end{definition}
Additionally, we may employ the following notation
\begin{center}
	\begin{tabular}{ l l }
		$\scriptF_T \subset \scriptF$ & the faces that border the cell $T$, \\
		$\scriptF_b \subset \scriptF$ & the faces on the boundary of the domain $\Omega$,  \\
		$|T|$ or $|F|$ & the Lebesgue measure of the cell $T$ or face $F$, \\
		$\overline{x}_T$ & the centre of mass of the cell $T$,  \\
		$\nv_{TF}$ & a normal vector to the face $F$ facing outward from the cell $T$,  \\
		$h_T$ or $h_F$ & the diameter of a cell $T$ or a face $F$,  \\
		$h = \max_{T\in\scriptT} h_T$ & the maximum diameter of any cell.
	\end{tabular}
\end{center}

Take $\scriptM_h = (\scriptT_h, \scriptF_h)$ a generic polygonal/polytopal mesh of $\Omega$ as above.
The degrees of freedom of the scheme are scalar valued polynomials on the cells $T \in \scriptT_h$ and the faces $F \in \scriptF_h$. No continuity conditions between cells and faces, or
between cells and cells,
are imposed on the degrees of freedom. Selecting an integer $m \geq 0$, the following notation will help us describe the polynomial spaces; here, $K$ is a set of dimension $l$ (that is,
the affine space spanned by $K$ has dimension $l$).

\begin{center}
	\begin{tabular}{ l l }
		$\Poly^m(K)$	& the space of $l$-variate polynomials of degree $\leq m$\\
		$\nabla \Poly^m(K)$	&	the space of functions $\{ \nabla u : u \in \Poly^m(K) \}$,  \\
		$\pi^m_K : L^2(K) \to \Poly^m(K)$		&	the $L^2$ orthogonal projector onto $\Poly^m(K)$,
	\end{tabular}
\end{center}
where the $L^2$ orthogonal projector is given by:
\begin{equation} \label{eqn:projection_definition}
\pi^m_K(u) = v \in \Poly^m(K) \text{ is such that, for all } w \in \Poly^m(K),\int_K (u - v)w = 0.
\end{equation}
The spaces of degrees of freedom are then given as follows.

\begin{definition} Let $\scriptM_h$ be a mesh.
	For each cell $T \in \scriptT_h$, the space of local degrees of freedom on $T$ is defined by
	\begin{equation}
	\dof_T^m = \Poly^m(T) \times \left\{ \bigtimes_{F\in\scriptF_T} \Poly^m(F) \right\}.
	\end{equation}
	The space of global degrees of freedom on the mesh is defined as
	\begin{equation}
	\dof^m_h = \left\{ \bigtimes_{T\in\scriptT_h} \Poly^m(T) \right\} \times \left\{ \bigtimes_{F\in\scriptF_h} \Poly^m(F) \right\}.
	\end{equation}
\end{definition}

For a given $k\ge 0$, these spaces will be used to approximate both the pressure (with $m=2k$)
and the concentration (with $m=k$).
An arbitrary set of global degrees of freedom is denoted by
\begin{equation}
\du_h = ((\pu_T)_{T\in\scriptT_h}, (\pu_F)_{F\in\scriptF_h}) \in \dof^m_h,
\end{equation}
and, similarly, a set of  local degrees of freedom is
\begin{equation}
\du_{T} = (\pu_T, (\pu_F)_{F\in\scriptF_T}) \in \dof^m_T.
\end{equation}
For a given $\du_h\in \dof^m_h$, we define $\pu_h\in L^2(\Omega)$ as the piecewise polynomial
function given by $(\pu_h)_{|T}=\pu_T$ for all $T\in \scriptT_h$.
The space of degrees of freedom with zero average is then
\begin{equation}
\dof^m_{h,*} = \left\{ \du_h \in \dof^m_h\,:\, \int_{\Omega} \pu_h = 0 \right\}.
\end{equation}

\subsection{Reconstruction operators}\label{sec:reconstruction.operators}

We now introduce the local reconstruction operators which are central to the scheme. The cornerstone of the HHO method is a high-order local gradient reconstruction operator based on the cell and face polynomial degrees of freedom.

\begin{definition} \label{def:local_reconstruction}
	Let $\bLambda$ be a bounded, real, symmetric, coercive tensor-valued function on $\Omega$. Take a cell $T \in \scriptT_h$ of $\scriptM$. The local reconstruction operator $\ro_{T,\bLambda}^{m+1} : \dof_T^m \to \Poly^{m+1}(T)$ is defined such that, for a bundle of local degrees of freedom $\du_{T} = (\pu_T, (\pu_F)_{F\in\scriptF_T}) \in \dof_T^m$ and any test function $w \in \Poly^{m+1}(T)$,
	\begin{equation} \label{eqn:gradient_reconstruction}
	\lipd{T}{\bLambda\nabla\ro_{T,\bLambda}^{m+1} \du_{T}}{\nabla w} = \lipd{T}{\bLambda \nabla \pu_T}{\nabla w} + \sum_{F\in\scriptF_T} \lip{F}{(\pu_F - \pu_T)}{\nabla w \cdot (\bLambda \nv_{TF})},
	\end{equation}
	and
	\begin{equation} \label{eqn:local_reconstruction}
	\int_T \ro_{T,\bLambda}^{m+1} \du_{T} = \int_T \pu_T.
	\end{equation}
\end{definition}

In addition to the local reconstruction operator, we define the local high-order correction operator.
\begin{definition}
	The high-order correction operator $\RO_{T,\bLambda}^{m+1} : \dof_T^m \to \Poly^{m+1}(T)$ is defined such that
	\begin{equation} \label{high_order_potential}
	\RO_{T,\bLambda}^{m+1} \du_{T} = \pu_T + (\ro^{m+1}_{T,\bLambda} \du_{T} - \pi^m_T\ro_{T,\bLambda}^{m+1} \du_{T}).
	\end{equation}
	We remark that the bracketed term of \eqref{high_order_potential} is orthogonal to $\Poly^m(T)$ by construction, and hence this operator can be seen as adding a high-order orthogonal correction to the cell unknown $\pu_T\in \Poly^m(T)$.
\end{definition}

\subsection{The pressure equation}\label{sec:pressure.equation}

Let us fix an integer $k\ge 0$. The HHO scheme for \eqref{scheme:all.eqn} consists in discretising
the pressure in $\dof^{2k}_h$ and the concentration in $\dof^k_h$. The choice of
an order $2k$, instead of $k$, for the pressure is driven by stability considerations,
which are made clear in the proof of Theorem \ref{thm:stability_of_crank_nicolson} (see Remark \ref{rem:2k.flux}).

To write the scheme on the pressure, we introduce the following local bilinear forms:
$\dbf_{T,\bLambda} : \dof^{2k}_T \times \dof^{2k}_T \to \R$ and $\stab_{\bLambda,T} : \dof^{2k}_T \times \dof^{2k}_T \to \R$, defined by
\begin{align} \label{eqn:pressure_diffusion_local_bilinear_form}
\dbf_{T,\bLambda}(\du_{T}, \tf_{T}) &= \lipd{T}{\bLambda \nabla\ro_{T, \bLambda}^{2k+1} \du_{T}}{\nabla\ro_{T, \bLambda}^{2k+1} \tf_{T}} + s_{\bLambda,T}(\du_{T}, \tf_{T}),  \\
\stab_{\bLambda,T}(\du_{T}, \tf_{T}) &= \sum_{F\in\scriptF_T} \cfrac{{\bLambda_{TF}}}{h_F}\lip{F}{\pi^{2k}_F(\pu_F - \RO^{2k+1}_{T,\bLambda} \du_{T})}{\pi^{2k}_F(\pf_F - \RO^{2k+1}_{T,\bLambda} \tf_{T})}, \label{eqn:diffusion_stabilisation}
\end{align}
where 
\begin{equation}\label{def:LambdaF}
{\bLambda_{TF}} = \|\nv_{TF} \cdot {\bLambda_{|T}} \nv_{TF} \|_{L^\infty(F)}
\end{equation}
is a controlling factor for the size of $\bLambda$ (in cell $T$) across the face $F$. The first term of $\dbf_{T,\bLambda}$ can readily be recognised as a discrete analogue of the weak diffusive terms of the weak formulation. The function $\stab_{\bLambda,T}$ is the diffusive stabilisation term whose purpose is to enforce a least-squares penalty between face unknowns, and the projection of the high-order correction of the cell unknown. This is required to ensure that the cell and face unknowns are related, and to ensure that the global bilinear form defined in \eqref{eqn:global_pressure_diffusion_bilinear_form} below is symmetric positive definite on $\dof^{2k}_{h,*}$ (see Lemma \ref{lemma:diffusive_bilinear_form_is_positive}).

Finally we can define the global pressure bilinear form that mimics the weak pressure equation \eqref{weak_pressure_equation}. We denote by $\dbf_{h,\bLambda} : \dof^{2k}_h \times \dof^{2k}_h \to \R$, the global bilinear form such that
\begin{equation} \label{eqn:global_pressure_diffusion_bilinear_form}
\dbf_{h,\bLambda}(\du_h,\tf_h) = \sum_{T\in\scriptT_h} \dbf_{T,\bLambda}(\du_{T}, \tf_{T}).
\end{equation}
We also write the linear functional $l^{p,\hs}_h : \dof^{2k}_h \to \R$ such that
\begin{equation} \label{eqn:global_diffusion_linear_functional}
l^{p,\hs}_h(\tf_h) = \lip{\Omega}{(q^+(t^\hs,\cdot)-q^-(t^\hs,\cdot))}{\pf_h},
\end{equation}
which mimics the right-hand side of the weak pressure equation. 

Recalling that the pressure equation has diffusion tensor $\bLambda = \bkappa(c)$ given by \eqref{eqn:permeability-viscosity-ratio}, the discrete pressure $\dpdof_h^\hs \in \dof^{2k}_{h,*}$ on $[t^n,t^{n+1})$ is computed by solving
\begin{equation} \label{eqn:discrete_pressure_equation}
\dbf_{h,\bkappa^{\hs}}(\dpdof_h^\hs, \tf_h) = l^{p,\hs}_h(\tf_h)\qquad \forall \tf_h\in \dof^{2k}_{h,*}
\end{equation}
where $\bkappa^{\hs}=\bkappa(\tilde{\pc}^\hs)$ with
$\tilde{\pc}^{\hs}$ the following extrapolation of the concentration at time $t^\hs$:
\begin{equation}\label{eqn:concentration_extrapolation}
\tilde{\pc}^{\hs} = \frac{3}{2} \pc^n - \frac{1}{2} \pc^{n-1}
\end{equation}
(we take $\pc_T^{-1}=\pc_T^0=\pi^k_T c_0$ for all $T\in\scriptT_h$). Note that this choice of extrapolation decouples the
pressure equation from the concentration equation (see Algorithm \ref{algo1} below).
Of course, \eqref{eqn:discrete_pressure_equation} only defines the pressure up to an additive
constant, so we normalise by imposing
\begin{equation}\label{eqn:discrete_pressure_equation:norm}
\int_\Omega \pp_h^\hs=0.
\end{equation}

\subsection{The discrete fluxes}\label{sec:discrete.fluxes}

As can be readily seen from the concentration equation \eqref{peaceman_concentration}, the distribution of the concentration depends heavily on the Darcy velocity of the invading fluid as defined by the pressure equation \eqref{eqn:peaceman_pressure}. In the HHO framework, the discretisation of advective
terms is done by using both the velocity in each cell and its fluxes through the faces \cite{droniou2015discontinuous}.
This velocity and fluxes must be properly chosen to ensure the numerical stability of the
discretised concentration equation -- the fact that $\divergence (\BU) = q^+ - q^-$ at the continuous
level is what ensures stability estimates on the continuous concentration, and this must be
mimicked at the discrete level.

For simplicity of notation, we let here $\dpdof_h=\dpdof_h^\hs$ and $\bkappa=\bkappa^\hs$.
Define $\bkappa_{TF}$ from $\bkappa$ by \eqref{def:LambdaF}.
Following \cite[Section 3.1]{cockburn2016bridging},
the numerical flux $\fv_{TF}$, outward from cell $T$ through the face $F$, is given by:
\begin{equation} \label{eqn:discrete_fluxes}
\fv_{TF} = -\bkappa \nabla \ro^{2k+1}_{T,\bkappa} \dpdof_{T} \cdot \nv_{TF} + \frac{{\bkappa_{TF}}}{h_F}\br^{2k,\dagger}_{\partial T}\left(\pi^{2k}_F (\RO^{2k+1}_{T,\bkappa} \dpdof_{T} - \pp_F)\right),
\end{equation}
where, setting $\dof^{2k}_{\partial T}=\bigtimes_{F\in\scriptF_T} \Poly^{2k}(F)$
(identified with a subspace of $L^2(\partial T)$),
$\br^{2k,\dagger}_{\partial T}$ is the adjoint, for the $L^2(\partial T)$ inner product,
of the operator $\br^{2k}_{\partial T} : \dof^{2k}_{\partial T} \to \dof^{2k}_{\partial T}$ defined by:
for all $\tf_{\partial T}=(\pf_F)_{F\in\scriptF_T}\in \dof^{2k}_{\partial T}$,
\begin{equation}
\br^{2k}_{\partial T}(\tf_{\partial T}) =  \left(\pi^{2k}_{F}[\pf_F - \ro^{2k+1}_{T,\bkappa} (0,\tf_{\partial T}) + \pi^{2k}_T \ro^{2k+1}_{T,\bkappa} (0,\tf_{\partial T})]\right)_{F\in\scriptF_T}.
\end{equation}

The first term of \eqref{eqn:discrete_fluxes} can be seen as the ``naive'' discrete flux that we would obtain if we simply substituted the discrete pressure into the definition of the continuous fluxes. The second term can thus be thought of as a discrete conservative correction to the ``naive'' flux.

The discrete Darcy velocity in cell $T$ is then given by
\begin{equation} \label{eqn:darcy_velocity_volumetric_reconstruction}
\RU_T = -\bkappa \nabla r^{2k+1}_{T,\bkappa} \dpdof_{T}.
\end{equation}

Finally, it is important to note that the discrete fluxes and Darcy velocity satisfy the following local conservation condition from \cite{cockburn2016bridging}.

\begin{theorem} \label{thm:conservation_of_the_discrete_fluxes}
	Let $T \in \scriptT_h$ and let $\dpdof_h$ be the solution to \eqref{eqn:discrete_pressure_equation}. Then for any $\tf_{T} \in \dof^{2k}_T$, the discrete Darcy velocity and fluxes \eqref{eqn:discrete_fluxes} satisfy
	\begin{equation} \label{eqn:conservation_of_the_discrete_fluxes}
	\dbf_{T,\bkappa}(\dpdof_{T}, \tf_{T}) = -\lipd{T}{\RU_T}{\nabla \pf_T} + \sum_{F\in\scriptF_T}\lip{F}{\fv_{TF}}{(\pf_T - \pf_F)}.
	\end{equation}
\end{theorem}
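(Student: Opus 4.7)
Following the HHO/HDG bridging construction of \cite{cockburn2016bridging}, I would split
$\dbf_{T,\bkappa}(\dpdof_{T},\tf_{T}) = A + S$ with
$A := \int_T \bkappa\,\nabla\ro^{2k+1}_{T,\bkappa}\dpdof_{T}\cdot\nabla\ro^{2k+1}_{T,\bkappa}\tf_{T}$
and $S := \stab_{\bkappa,T}(\dpdof_{T},\tf_{T})$, and treat each piece separately: $A$ should
produce the volumetric term $-\int_T\RU_T\cdot\nabla\pf_T$ together with the ``naive''
contribution $\RU_T\cdot\nv_{TF}$ on each face, while $S$ should reconstruct the stabilising
correction appearing in $\fv_{TF}$.

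\emph{Handling $A$.} Since $\ro^{2k+1}_{T,\bkappa}\dpdof_{T}\in\Poly^{2k+1}(T)$, I can apply the defining relation \eqref{eqn:gradient_reconstruction} with $\du_{T}=\tf_{T}$ and test function $w = \ro^{2k+1}_{T,\bkappa}\dpdof_{T}$ and then substitute \eqref{eqn:darcy_velocity_volumetric_reconstruction} to obtain
$A = -\int_T \RU_T\cdot\nabla\pf_T + \sum_{F\in\scriptF_T}\int_F (\pf_T-\pf_F)\,\RU_T\cdot\nv_{TF}$.
This already matches the target identity up to replacing $\RU_T\cdot\nv_{TF}$ by the full numerical flux $\fv_{TF}$ on each face.

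\emph{Handling $S$.} I would decompose $\tf_{T} = (\pf_T,0) + (0,\tilde\pf_{\partial T})$, where $\tilde\pf_{\partial T} := ((\pf_T)_{|F}-\pf_F)_{F\in\scriptF_T}\in\dof^{2k}_{\partial T}$. A short linearity argument using \eqref{eqn:gradient_reconstruction} shows that $\ro^{2k+1}_{T,\bkappa}\tf_{T}$ and $\pf_T - \ro^{2k+1}_{T,\bkappa}(0,\tilde\pf_{\partial T})$ differ only by a constant, which is absorbed by $\pi^{2k}_T$ in the definition \eqref{high_order_potential} of $\RO^{2k+1}_{T,\bkappa}$. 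The key algebraic identity that results is
\begin{equation*}
\pi^{2k}_F\bigl(\RO^{2k+1}_{T,\bkappa}\tf_{T} - \pf_F\bigr) \;=\; \br^{2k}_{\partial T}(\tilde\pf_{\partial T})_{|F} \qquad\text{for all } F\in\scriptF_T.
\end{equation*}
Plugging this into the definition \eqref{eqn:diffusion_stabilisation} of $\stab_{\bkappa,T}$ gives
$S = \sum_F\tfrac{\bkappa_{TF}}{h_F}\int_F \br^{2k}_{\partial T}(\tilde\pf_{\partial T})_{|F}\,\pi^{2k}_F(\RO^{2k+1}_{T,\bkappa}\dpdof_{T}-\pp_F)$, and transposing $\br^{2k}_{\partial T}$ onto its adjoint (tracking the face weights $\bkappa_{TF}/h_F$) turns this into
$\sum_F\int_F(\pf_T-\pf_F)\,\tfrac{\bkappa_{TF}}{h_F}\,\br^{2k,\dagger}_{\partial T}(\pi^{2k}_F(\RO^{2k+1}_{T,\bkappa}\dpdof_{T}-\pp_F))_{|F}$, which is exactly $\sum_F\int_F(\pf_T-\pf_F)$ multiplied by the correction part of $\fv_{TF}$ in \eqref{eqn:discrete_fluxes}. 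Adding $A$ and $S$ then yields \eqref{eqn:conservation_of_the_discrete_fluxes}.

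\emph{Main obstacle.} The delicate step is the displayed equality $\pi^{2k}_F(\RO^{2k+1}_{T,\bkappa}\tf_{T} - \pf_F) = \br^{2k}_{\partial T}(\tilde\pf_{\partial T})_{|F}$: it hinges on correctly tracking the undetermined constant in $\ro^{2k+1}_{T,\bkappa}\tf_{T}$ (fixed by \eqref{eqn:local_reconstruction}) and on how the high-order correction in \eqref{high_order_potential} uses $\pi^{2k}_T$ to cancel that constant. Once this identification of $\br^{2k}_{\partial T}$ is in place, propagating the face weights $\bkappa_{TF}/h_F$ across the adjoint pairing is essentially bookkeeping.
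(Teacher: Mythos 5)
The paper never proves Theorem \ref{thm:conservation_of_the_discrete_fluxes}; it imports it from \cite{cockburn2016bridging}. Your derivation is the standard one and most of it checks out: the consistency part $A$ is handled correctly (test \eqref{eqn:gradient_reconstruction} for $\tf_T$ against $w=\ro^{2k+1}_{T,\bkappa}\dpdof_T$ and use the symmetry of $\bkappa$), and your key identity $\pi^{2k}_F(\RO^{2k+1}_{T,\bkappa}\tf_T-\pf_F)=\br^{2k}_{\partial T}(\tilde\pf_{\partial T})_{|F}$ is true. It follows from $\ro^{2k+1}_{T,\bkappa}(\pf_T,((\pf_T)_{|F})_{F})=\pf_T$ -- a consequence of \eqref{eqn:gradient_reconstruction}, coercivity of $\bkappa$ and the closure condition \eqref{eqn:local_reconstruction} -- so that in fact $\ro^{2k+1}_{T,\bkappa}\tf_T$ \emph{equals} $\pf_T-\ro^{2k+1}_{T,\bkappa}(0,\tilde\pf_{\partial T})$ (not merely up to a constant), together with the fact that $\mathrm{id}-\pi^{2k}_T$ annihilates constants. (Minor slip: the decomposition should be $\tf_T=(\pf_T,((\pf_T)_{|F})_F)-(0,\tilde\pf_{\partial T})$, not $(\pf_T,0)+(0,\tilde\pf_{\partial T})$; this does not affect your conclusion.)

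The step you dismiss as ``essentially bookkeeping'' is the one genuine gap. Because $\br^{2k}_{\partial T}$ couples all the faces of $T$ through $\ro^{2k+1}_{T,\bkappa}(0,\cdot)$, it does \emph{not} commute with multiplication by the face-dependent weight $\bkappa_{TF}/h_F$. Writing $\delta_{\partial T}=(\pi^{2k}_{F'}(\RO^{2k+1}_{T,\bkappa}\dpdof_T-\pp_{F'}))_{F'}$, the stabilisation gives $S=\sum_{F}\tfrac{\bkappa_{TF}}{h_F}\int_F\delta_F\,\br^{2k}_{\partial T}(\tilde\pf_{\partial T})_{|F}$, and the $L^2(\partial T)$-adjoint identity turns this into
\begin{equation*}
S=\sum_{F\in\scriptF_T}\int_F\bigl(\br^{2k,\dagger}_{\partial T}\hat\delta_{\partial T}\bigr)_{|F}\,(\pf_T-\pf_F),
\qquad
\hat\delta_{\partial T}=\Bigl(\tfrac{\bkappa_{TF'}}{h_{F'}}\pi^{2k}_{F'}(\RO^{2k+1}_{T,\bkappa}\dpdof_T-\pp_{F'})\Bigr)_{F'\in\scriptF_T},
\end{equation*}
i.e.\ the weight must sit \emph{inside} the argument of $\br^{2k,\dagger}_{\partial T}$ (equivalently, $\br^{2k,\dagger}_{\partial T}$ must be the adjoint for the $\bkappa_{TF}/h_F$-weighted inner product on $\partial T$). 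If you instead leave the scalar $\bkappa_{TF}/h_F$ outside the operator, as a literal reading of \eqref{eqn:discrete_fluxes} suggests, the identity \eqref{eqn:conservation_of_the_discrete_fluxes} fails whenever these weights vary over the faces of a cell. Make the placement of the weight explicit and the argument is complete.
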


\subsection{Concentration equation}\label{sec:concentration.equation}

The discrete concentration equation is formulated in terms of a stationary advection--diffusion--reaction equation as in \cite{droniou2015discontinuous},
with source and reaction terms incorporating the time-stepping. We recall that, for the concentration
equation, the relevant discrete space is $\dof^k_h$.

We consider $\RU=((\RU_T)_{T\in\scriptT_h},(U_{TF})_{T\in\scriptT_h, F\in\scriptF_T})$ the Darcy velocity and fluxes
reconstructed above at time $t^\hs$.
The following discrete advective derivative $\ad_{T,\RU}^k \dv_{T}$ is designed to be a high-order approximation to the continuous quantity $\BU \cdot \nabla v$ on the cell $T$.

\begin{definition} \label{def:advective_derivative}
	For all $T \in \scriptT_h$, the discrete advective derivative $\ad_{T,\RU}^k : \dof_T^k \to \Poly^k(T)$ is such
that, for any $\dv_{T} \in \dof^k_T$ and any test function $w \in \Poly^k(T)$,
	\begin{equation}
	\lip{T}{(\ad_{T,\RU}^k \dv_{T})}{w} = \lip{T}{(\RU_T \cdot \nabla \pv_T)}{w} + \sum_{F \in \scriptF_T} \lip{F}{\fv_{TF} (\pv_F - \pv_T)}{w}.
	\end{equation}
\end{definition}

Recalling Definition \eqref{eqn:diffusion_dispersion_tensor}, we define the discrete reconstructed diffusion tensor $\dD$ for each cell $T \in \scriptT_h$, for any $x \in T$ by
\begin{equation} \label{eqn:diffusion_tensor_reconstruction}
\dD(x) = \Phi(x)(d_m \eye + |\RU_T(x)|(d_l E(\RU_T(x)) + d_t (\eye - E(\RU_T(x)))).
\end{equation}

The local advection--reaction bilinear form  $\dbf_{T,R, \RU} : \dof^k_h \times \dof^k_h \to \R$ is then defined as
\begin{equation} \label{eqn:local_advection_bilinear_form}
\dbf_{T,R, \RU}(\du_{T}, \tf_{T}) = -\lip{T}{\pu_T}{(\ad_{T,\RU}^k \tf_{T})} + \lip{T}{R\pu_T}{\pf_T}  + \stab_{T,\RU}(\du_{T}, \tf_{T}),
\end{equation}
where $R$ are the reaction terms, encompassing the time-stepping, given by
\begin{equation}\label{eqn:concentration_reaction_terms}
R = \frac{2 \Phi}{\Delta t} + q^-(t^\hs,\cdot).
\end{equation}
Furthermore, $\stab_{T,\RU} : \dof^k_T \times \dof^k_T \to \R$ is the advective stabilisation term given by
\begin{equation} \label{eqn:advective_stabilisation_form}
\stab_{T,\RU}(\du_{T}, \tf_{T}) = \sum_{F \in \scriptF_T} \lip{F}{[\fv_{TF}]^-(\pu_F - \pu_T)}{(\pf_F - \pf_T)}
\end{equation}
where $[\fv_{TF}]^-=\max(0,-\fv_{TF})$.
The global advection-reaction bilinear form $\dbf_{h, R, \RU} : \dof^k_h \times \dof^k_h \to \R$ is defined such that
\begin{equation} \label{eqn:global_advection_bilinear_form}
\dbf_{h,R, \RU}(\du_h, \tf_h) =  \sum_{T \in \scriptT_h} \dbf_{T,R, \RU}(\du_{T}, \tf_{T}).
\end{equation}
Combining the diffusion and advection--reaction bilinear forms, the complete bilinear form for advection--diffusion--reaction is $\dbf_{h,\dD, R,\RU} : \dof^k_h \times \dof^k_h \to \R$ such that
\[
\dbf_{h,\dD, R, \RU}(\du_h, \tf_h) = \dbf_{h,\dD}(\du_h, \tf_h) + \dbf_{h,R, \RU}(\du_h, \tf_h),
\]
and the global linear functional $l^{c,\hs}_h : \dof^k_h \to \R$ is
\begin{equation} \label{eqn:concentration_rhs_linear_form}
l^{c,\hs}_h(\tf_h) = \int_\Omega \left(q^+(t^\hs,\cdot) \hat{c}(t^\hs,\cdot) + \frac{2\Phi}{\Delta t}\pc^n_h\right) \pf_h,
\end{equation}

The discrete concentration scheme then consists in seeking $\dc^\hs_h\in\dof^k_h$ such that
\begin{equation} \label{eqn:discrete_advection_diffusion_equation}
\dbf_{h,\dD, R,\RU}(\dc^\hs_h, \tf_h) = l^{c,\hs}_h(\tf_h),\qquad
\forall \tf_h \in \dof^k_h,
\end{equation}
then extrapolating to obtain $\dc^{n+1}_h$. We recall that $\dD$ and $\RU$ are computed
from $\dpdof^\hs_h$, and therefore depend on $n$.

\medskip

The full scheme is summarised in Algorithm \ref{algo1}.

\begin{algorithm}[!h]
\caption{Complete scheme for the pressure--concentration \label{algo1}}
\begin{algorithmic}[1]
\State \emph{Set $\pc_T^{-1}=\pc_T^0=\pi^k_T c_0$ for all $T\in\scriptT_h$}
\For{$n=0$ to $N-1$}
	\State Compute $\dpdof^\hs_h$ by \eqref{eqn:discrete_pressure_equation}--\eqref{eqn:discrete_pressure_equation:norm},
from $\pc_h^{n-1}$ and $\pc_h^n$.
	\State Compute $\RU$ by \eqref{eqn:discrete_fluxes} and \eqref{eqn:darcy_velocity_volumetric_reconstruction} with $\dpdof_h=\dpdof_h^\hs$.
	\State Compute $\dc_h^\hs$ by \eqref{eqn:discrete_advection_diffusion_equation} and set $\dc_h^{n+1}=2\dc_h^\hs-\dc_h^n$.
\EndFor
\end{algorithmic}
\end{algorithm}

At each iteration, $(\dpdof^\hs_h,\dc_h^\hs)$ are computed by solving two
decoupled linear equations, one corresponding to an HHO scheme for a pure diffusion
equation, the other one to an HHO scheme for a diffusion--advection--reaction equation.



Essential questions when designing a numerical scheme are the existence, uniqueness and stability
of its solution; here, stability is understood as the grid size and time steps go to zero. The following
theorem brings an answer to these questions. It states the existence and uniqueness of the
solution to the scheme, and provides a bound on the cell unknowns corresponding to the concentration.
We note that this bound is, of course, uniform with respect to the grid size, but also with respect to
other important parameters, in particular the molecular, longitudinal and transverse dispersion coefficients
$d_m$, $d_l$ and $d_t$. A stability with respect to $d_m$ is all the more essential since this coefficient
tends to be taken equal to $0$ in numerical tests \cite{wang2000approximation,droniou2007convergence}.

\begin{theorem}[Existence, uniqueness and estimates for the discrete solution] \label{thm:stability_of_crank_nicolson}
	Let $\scriptM = (\scriptT_h, \scriptF_h)$ be a mesh of $\Omega$ and take $N \geq 1$. If $(q^+, q^-, \BK, \BD, \Phi, \hat{c}, c_0)$ are data satisfying \eqref{eqn:regularity_assumptions}, then there exists a unique solution $(\dpdof_h, \dc_h)$ to the iterative scheme described in Algorithm \ref{algo1}.
Morever, for all $n=0,\ldots,N$,
	\begin{equation}\label{est:cN}
	\|\pc^{n}_h \|^2_{L^2(\Omega)} \leq {\frac{e^2}{\Phi_*^2}\left( \| \pc^0_h \|^2_{L^2(\Omega)}
+2t_f^2 \|q^+\|^2_{L^\infty(0,t_f;L^2(\Omega))}\right)}.
	\end{equation}
\end{theorem}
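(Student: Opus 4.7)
Well-posedness of the pressure system \eqref{eqn:discrete_pressure_equation}--\eqref{eqn:discrete_pressure_equation:norm} is immediate from the coercivity of $\dbf_{h,\bkappa^\hs}$ on $\dof^{2k}_{h,*}$ granted by Lemma \ref{lemma:diffusive_bilinear_form_is_positive}. For the concentration system I will evaluate $\dbf_{h,\dD,R,\RU}(\du_h,\du_h)$ and use the manipulations described below to rewrite it as a non-negative diffusion contribution plus $\int_\Omega R\,\pu_h^2 + \tfrac{1}{2}\int_\Omega(q^++q^-)\pu_h^2 + \tfrac{1}{2}\sum_T\sum_{F\in\scriptF_T}\int_F|\fv_{TF}|(\pu_T-\pu_F)^2$; since $R\geq 2\Phi_*/\Delta t>0$, this is positive definite on $\dof^k_h$, yielding existence and uniqueness of $\dc_h^\hs$, and hence of $\dc_h^{n+1}$ via the extrapolation.

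\medskip

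The heart of the matter is the $L^2$ bound \eqref{est:cN}. The plan is to test \eqref{eqn:discrete_advection_diffusion_equation} with $\tf_h=\dc_h^\hs$. Using $\pc_h^{n+1}=2\pc_h^\hs-\pc_h^n$ collapses the time-derivative contribution to the discrete energy differential $\tfrac{1}{2\Delta t}\int_\Omega \Phi[(\pc_h^{n+1})^2-(\pc_h^n)^2]$, and $\dbf_{h,\dD}(\dc_h^\hs,\dc_h^\hs)\geq 0$ by construction. The critical step---where the choice of pressure degree $2k$ becomes essential---is to apply Theorem \ref{thm:conservation_of_the_discrete_fluxes}, summed over $T$, with the pressure-space test $\pf_T=(\pc_T^\hs)^2\in\Poly^{2k}(T)$ and $\pf_F=(\pc_F^\hs)^2\in\Poly^{2k}(F)$; this is admissible precisely because $\pc^\hs$ lives in the degree-$k$ space.

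\medskip

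Combining the resulting identity with Definition \ref{def:advective_derivative} (whose test space contains $\pc_T^\hs$), the elementary relation
\[
2\fv_{TF}(\pc_T^\hs-\pc_F^\hs)\pc_T^\hs - \fv_{TF}\bigl((\pc_T^\hs)^2-(\pc_F^\hs)^2\bigr) = \fv_{TF}(\pc_T^\hs-\pc_F^\hs)^2,
\]
and the decomposition $\tfrac{1}{2}\fv_{TF}+[\fv_{TF}]^-=\tfrac{1}{2}|\fv_{TF}|$ needed to absorb the advective stabilisation \eqref{eqn:advective_stabilisation_form}, I expect to arrive at the key identity
\[
-\sum_T\int_T \pc_T^\hs\,\ad_{T,\RU}^k\dc_T^\hs + \sum_T \stab_{T,\RU}(\dc_T^\hs,\dc_T^\hs) = \tfrac{1}{2}\int_\Omega(q^+-q^-)(\pc_h^\hs)^2 + \tfrac{1}{2}\sum_T\sum_{F\in\scriptF_T}\int_F|\fv_{TF}|(\pc_T^\hs-\pc_F^\hs)^2.
\]
Inserted into the tested equation, the $q^-$ term combines with $R$ to leave a non-negative $\tfrac{1}{2}(q^++q^-)(\pc_h^\hs)^2$ on the left. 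Discarding every non-negative piece then reduces the energy balance to
\[
\frac{1}{2\Delta t}\int_\Omega\Phi\bigl[(\pc_h^{n+1})^2-(\pc_h^n)^2\bigr] \leq \int_\Omega q^+(t^\hs)\hat c(t^\hs)\pc_h^\hs \leq \|q^+(t^\hs)\|_{L^2(\Omega)}\|\pc_h^\hs\|_{L^2(\Omega)},
\]
using $\hat c\in[0,1]$.

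\medskip

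To conclude, set $y_n:=\int_\Omega\Phi(\pc_h^n)^2$. From $2\pc_h^\hs=\pc_h^{n+1}+\pc_h^n$ and $\Phi\geq\Phi_*$, one has $\|\pc_h^\hs\|_{L^2(\Omega)}\leq\tfrac{1}{2\sqrt{\Phi_*}}(\sqrt{y_{n+1}}+\sqrt{y_n})$. The factorisation $y_{n+1}-y_n=(\sqrt{y_{n+1}}-\sqrt{y_n})(\sqrt{y_{n+1}}+\sqrt{y_n})$ then turns the previous display into the telescoping recursion $\sqrt{y_{n+1}}-\sqrt{y_n}\leq \tfrac{\Delta t}{\sqrt{\Phi_*}}\|q^+(t^{n+\hs})\|_{L^2(\Omega)}$. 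Summing gives $\sqrt{y_n}\leq\sqrt{y_0}+\tfrac{t_f}{\sqrt{\Phi_*}}\|q^+\|_{L^\infty(0,t_f;L^2(\Omega))}$; squaring, combined with $\Phi_*\|\pc_h^n\|^2\leq y_n$ and $y_0\leq\Phi_*^{-1}\|\pc_h^0\|^2$, delivers \eqref{est:cN}. The main obstacle is the algebraic balance of the third paragraph: with a pressure space of degree only $k$, the test $\pf_T=(\pc_T^\hs)^2$ would be inadmissible, the cancellation producing $|\fv_{TF}|(\pc_T^\hs-\pc_F^\hs)^2$ would fail, and one would be forced into a much weaker, $\Delta t$-dependent estimate---exactly the point anticipated in Remark \ref{rem:2k.flux}.
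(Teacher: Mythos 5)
Your proposal is correct, and its core coincides with the paper's own argument: testing \eqref{eqn:discrete_advection_diffusion_equation} with $\dc_h^\hs$, feeding the squared degrees of freedom $\tfrac12\bigl((\pc_T^\hs)^2,((\pc_F^\hs)^2)_{F\in\scriptF_T}\bigr)\in\dof^{2k}_T$ into the flux-conservation identity of Theorem \ref{thm:conservation_of_the_discrete_fluxes}, and the two identities $(a-b)a=\tfrac12(a^2-b^2)+\tfrac12(a-b)^2$ and $\tfrac12\fv_{TF}+[\fv_{TF}]^-=\tfrac12|\fv_{TF}|$ are precisely the content of Lemma \ref{lemma:advective_bilinear_form_is_positive} and Remark \ref{rem:2k.flux}, and your energy identity for the time-derivative term matches the paper's. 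Where you genuinely depart is the concluding step. The paper applies Cauchy--Schwarz and Young's inequality with $\varepsilon=\Phi_*/(2t_f)$, sums and telescopes, and then invokes a discrete Gronwall lemma, which is the source of the factor $e^2$ in \eqref{est:cN}. Your square-root device --- setting $y_n=\int_\Omega\Phi(\pc_h^n)^2$, bounding $\|\pc_h^\hs\|_{L^2(\Omega)}\le\tfrac{1}{2\sqrt{\Phi_*}}(\sqrt{y_{n+1}}+\sqrt{y_n})$, and cancelling that factor against $y_{n+1}-y_n=(\sqrt{y_{n+1}}-\sqrt{y_n})(\sqrt{y_{n+1}}+\sqrt{y_n})$ --- dispenses with Gronwall altogether and yields $\|\pc_h^n\|^2_{L^2(\Omega)}\le\tfrac{2}{\Phi_*^2}\bigl(\|\pc_h^0\|^2_{L^2(\Omega)}+t_f^2\|q^+\|^2_{L^\infty(0,t_f;L^2(\Omega))}\bigr)$, which is sharper than, and immediately implies, \eqref{est:cN} since $2<e^2$; this is a cleaner route for this particular estimate (its only cost is that it exploits the absence of any zero-order growth term, whereas Gronwall would survive one). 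Two minor points: in your opening plan the decomposition should read $\int_\Omega\tfrac{2\Phi}{\Delta t}\pu_h^2+\tfrac12\int_\Omega(q^++q^-)\pu_h^2$ rather than $\int_\Omega R\,\pu_h^2+\tfrac12\int_\Omega(q^++q^-)\pu_h^2$ (otherwise $q^-$ is counted one and a half times) --- your detailed derivation gets this right; and, like the paper, you silently apply $\dbf_{h,\bkappa^\hs}(\dpdof_h^\hs,\widetilde{\tf}_h)=l^{p,\hs}_h(\widetilde{\tf}_h)$ to the non-zero-average test $\widetilde{\tf}_h$, which is legitimate because constants lie in the kernel of $\dbf_{h,\bkappa^\hs}$ and $l^{p,\hs}_h$ vanishes on constants by \eqref{eqn:source_compatability}, but deserves a sentence.
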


The proof of Theorem \ref{thm:stability_of_crank_nicolson} is provided in the appendix.

\section{Tests}\label{sec:tests}

In this section, we illustrate the results obtained from the HHO scheme for the simulation of miscible fluid displacement in an oil reservoir. Some of the following test cases first appeared in \cite{wang2000optimal} for the ELLAM-MFEM method and have been subsequently applied in \cite{droniou2007convergence} for the MFV scheme. In every simulation here, we use the spatial domain $\Omega = (0,1000)^2$ measured in ft$^2$ and consider the time period $[0, 3600]$ (approximately $10$ years) measured in days. The injection and production source terms are Dirac masses, approximated as usual by a piecewise constant function on the relevant mesh cell. The injection well is located at $(1000,1000)$ with an injection rate of $30$ ft$^2$/day. The production well is correspondingly located at $(0,0)$ with a production rate of $30$ ft$^2$/day. We always take the injected concentration $\hat{c} = 1$ with an initial condition given by $c_0(x) = 0$. The viscosity of the oil is given by $\mu(0) = 1.0$ cp and the mobility ratio is $M=41$ (see \eqref{eqn:viscosity}).
We  assume that molecular diffusion is negligible, setting $d_m = 0.0$ \fspd\ contrasting with the dispersion effects $d_l = 50.0$ \fspd\ and $d_t = 5.0$ \fspd.
The porosity of the medium is taken to be a constant $\Phi(x) = 0.1$. For each test we present surface and contour plots of the concentration $c$, the principle quantity of interest.

All tests were ran on a laptop with processor Intel i7-4710MQ 2.5Ghz, 6MB Cache
and 16GB of RAM at 1600Mhz.

\begin{remark}
	Taking $d_m > 0$ is required to meet the regularity assumptions \eqref{eqn:regularity_assumptions} but physically unrealistic due to the fact that the magnitude of the molecular diffusion present in miscible fluid flow is negligibly small compared to the dispersive effects. We will present  results taking $d_m = 0$ to demonstrate the suitability of the scheme to real-world parameters. We recall that the stability result of Theorem \ref{thm:stability_of_crank_nicolson} is
independent of $d_m$, and is therefore uniformly valid up to the limit $d_m\to 0$.
\end{remark}

Tables \ref{tab:mesh_parameters} and \ref{tab:mesh_parameters2} give the numbers of edges and  sizes (maximum ratio of area to perimeter across all cells) for each mesh used in the following tests. The four mesh varieties are shown in Figure \ref{fig:meshes}. The triangular, Cartesian and Kershaw meshes were first introduced in the FVCA5 benchmark \cite{herbin2008benchmark} as mesh families 1, 2 and 4.1 respectively. The hexagonal-dominant mesh was used in \cite{BDM10,di2015extension}.

\begin{figure}[h]
	\centering
	\begin{subfigure}{.24\textwidth}
		\centering
		\includegraphics[scale=0.15]{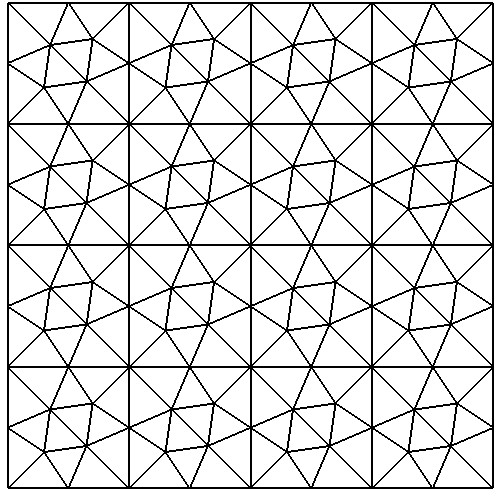}
		\caption{Triangular}
	\end{subfigure}%
	\begin{subfigure}{.24\textwidth}
		\centering
		\includegraphics[scale=0.15]{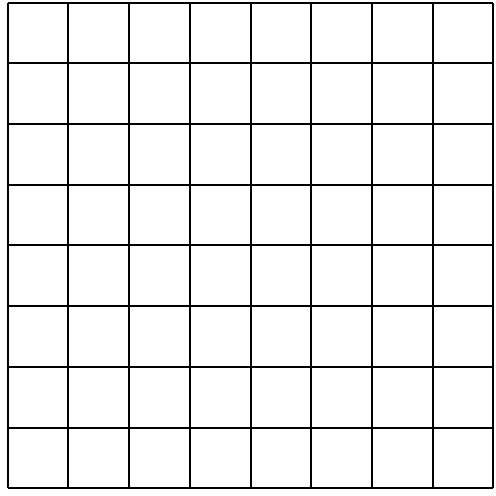}
		\caption{Cartesian}
	\end{subfigure}
	\begin{subfigure}{.24\textwidth}
		\centering
		\includegraphics[scale=0.155]{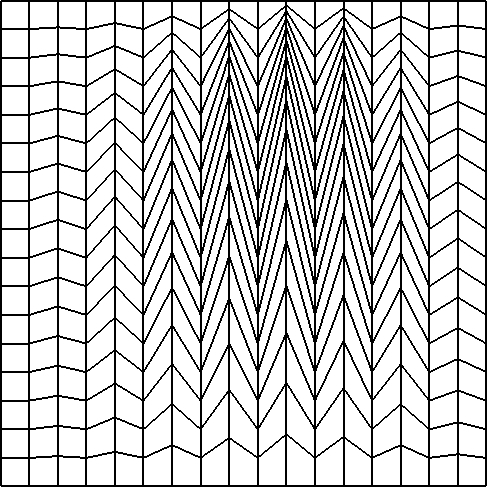}
		\caption{Kershaw} \label{fig:kershaw_mesh}
	\end{subfigure}%
	\begin{subfigure}{.24\textwidth}
		\centering
		\hspace{5px}\input{img/meshes/pi6_tiltedhexagonal_2.tex}
		\caption{Hexagonal} \label{fig:tilted_hexagonal_mesh}
	\end{subfigure}
	\caption{The four varieties of meshes. Figures obtained from \cite{di2014arbitrary}.}
	\label{fig:meshes}
\end{figure}

\begin{center}
\begin{table}[!h]
\begin{tabular}{| c | c | c | c | c |}
\hline
\multicolumn{5}{|c|}{Number of edges}\\\hline 
 & Triangular & Cartesian & Kershaw & Hexagonal  \\\hline
 Mesh 1 & 92 & 40 & 612 & 62 \\\hline
 Mesh 2 & 352 & 144 & 2380 & 220 \\\hline
 Mesh 3 & 1376 & 544 & 5304 & 824 \\\hline
 Mesh 4 & 5540 & 2112 & 9384 & 3184 \\\hline
 Mesh 5 & 21632 & 8320 & 14620 & 12512 \\\hline
\end{tabular}\\
\begin{tabular}{| c | c | c | c | c |}
\hline
\multicolumn{5}{|c|}{Mesh size}\\\hline 
 & Triangular & Cartesian & Kershaw & Hexagonal  \\\hline
 Mesh 1 & 31.8 & 62.5 & 16.2 & 70.6 \\\hline
 Mesh 2 & 15.9 & 31.2 & 8.96 & 36.7 \\\hline
 Mesh 3 & 7.95 & 15.6 & 6.12 & 18.5 \\\hline
 Mesh 4 & 3.98 & 7.81 & 4.64 & 9.26 \\\hline
 Mesh 5 & 1.99 & 3.91 & 3.73 & 4.63 \\\hline
\end{tabular}
\caption{Mesh parameters for the meshes used in homogeneous permeability tests.}
\label{tab:mesh_parameters}
\end{table}
\begin{table}
\begin{tabular}{| c | c | c |}
	\hline
	\multicolumn{3}{|c|}{Number of edges}\\\hline 
	& Triangular & Cartesian \\\hline
	Mesh 1 & 545 & 144  \\\hline
	Mesh 2 & 2140 & 840  \\\hline
	Mesh 3 & 4785 & 1860  \\\hline
	Mesh 4 & 8480 & 3280  \\\hline
	Mesh 5 & 13225 & 5100  \\\hline
\end{tabular}\\
\begin{tabular}{| c | c | c |}
	\hline
	\multicolumn{3}{|c|}{Mesh size}\\\hline 
	& Triangular & Cartesian  \\\hline
	Mesh 1 & 12.72 & 31.25  \\\hline
	Mesh 2 & 6.36 & 12.5  \\\hline
	Mesh 3 & 4.24 & 8.33  \\\hline
	Mesh 4 & 3.18 & 6.25  \\\hline
	Mesh 5 & 2.54 & 5.00  \\\hline
\end{tabular}
\caption{Mesh parameters for the meshes used in discontinuous permeability tests.}
\label{tab:mesh_parameters2}
\end{table}
\end{center}

\subsection{Numerical results} \label{sec:numres}

In these first four tests, we use a polynomial degree $k = 1$ for the spatial discretisation.
The time-step is $\Delta t = 18$ days ($N \approx 200$ time-steps over $10$ years).

\begin{test} \label{test:peaceman2}\rm
	We take a homogeneous permeability tensor $\BK = 80\BI$ uniformly over the domain. We experiment on a $32\times 32$ Cartesian mesh (Cartesian Mesh 4 in Table \ref{tab:mesh_parameters}.) The mobility within the solvent saturated regions caused by the large adverse mobility ratio $M$ and the lack of molecular diffusion should result in the front of the injected fluid progressing most rapidly along the diagonal between the injection and production wells. These effects are seen on the surface and contour plots in Figure \ref{fig:peaceman_test2} on the Cartesian mesh and imply that the flow is indeed strongest along the diagonal direction as expected. This effect is well studied in the literature and is referred to as the macroscopic fingering phenomenon \cite{ewing1983mathematics}. Notably, our results at $t = 3$ with $k \geq 1$ are far more realistic than those given in \cite{droniou2007convergence}, which suffer from fluid progressing much too rapidly along the boundary of the domain. The scheme in \cite{droniou2007convergence} corresponds to a variant of the HHO method with $k=0$; our own tests with $k=0$ reproduced similar results as in this reference (see, e.g., Figures \ref{fig:compare_ks_t3}\textsc{(a)} and \ref{fig:compare_ks}\textsc{(a)}).
\end{test}

\begin{figure}[h]
	\centering
	\begin{subfigure}{.5\textwidth}
		\centering
		\includegraphics[scale=0.3]{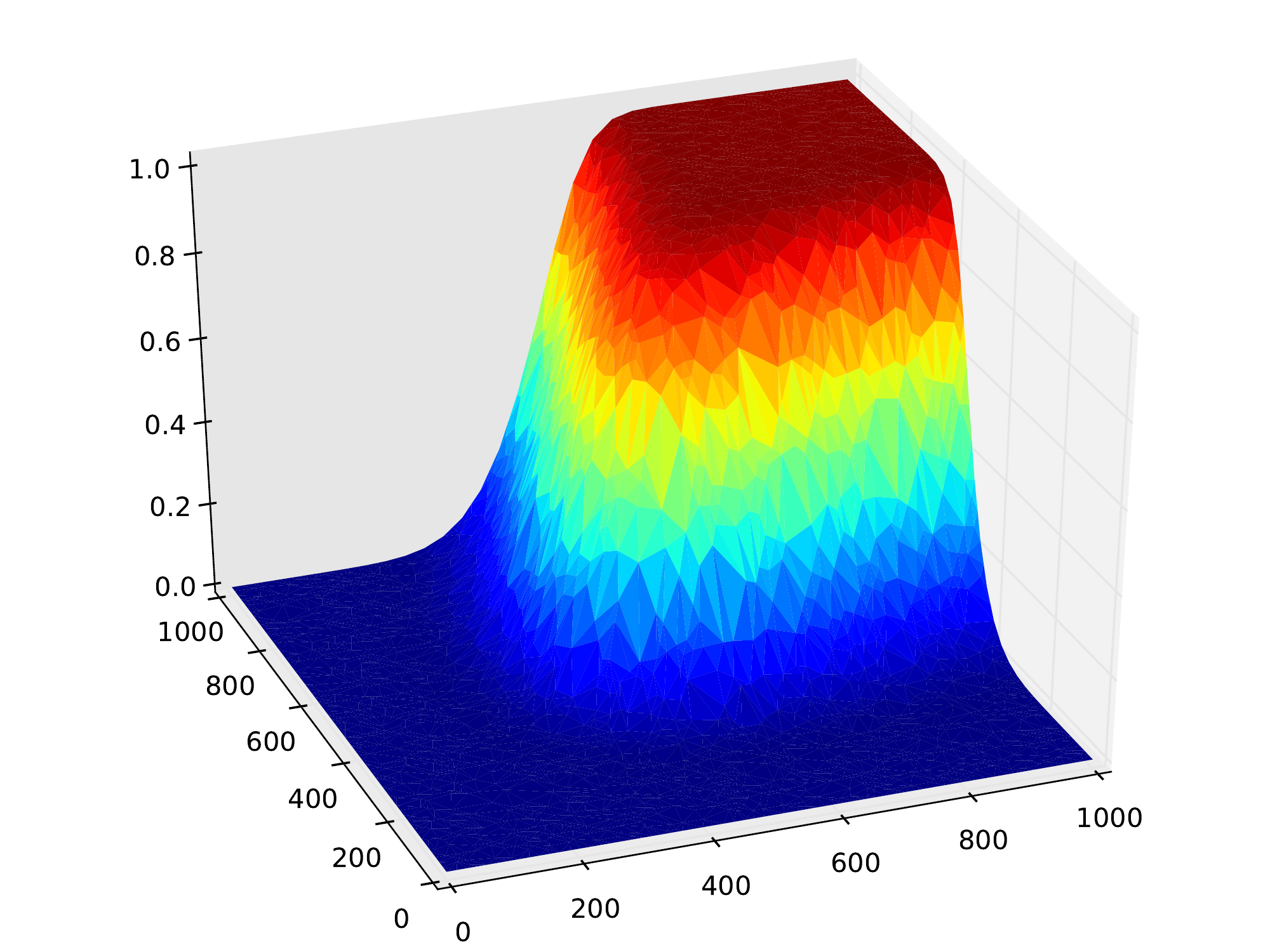}
		\caption{Surface plot at $t = 3$ years}
	\end{subfigure}%
	\begin{subfigure}{.4\textwidth}
		\centering
		\includegraphics[width=5.18cm,height=5cm]{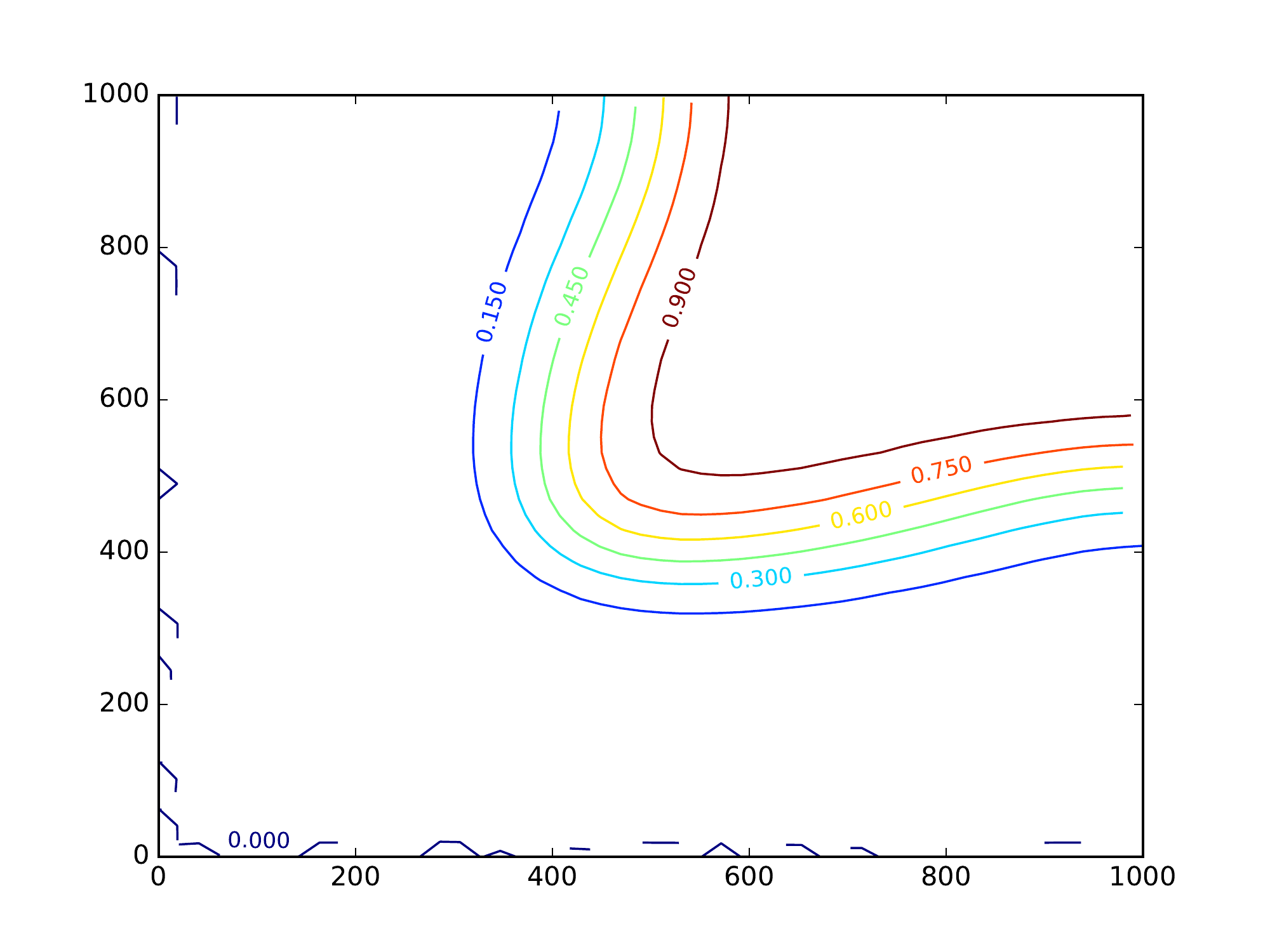}
		\caption{Contour plot at $t = 3$ years}
	\end{subfigure}
	\begin{subfigure}{.5\textwidth}
		\centering
		\includegraphics[scale=0.3]{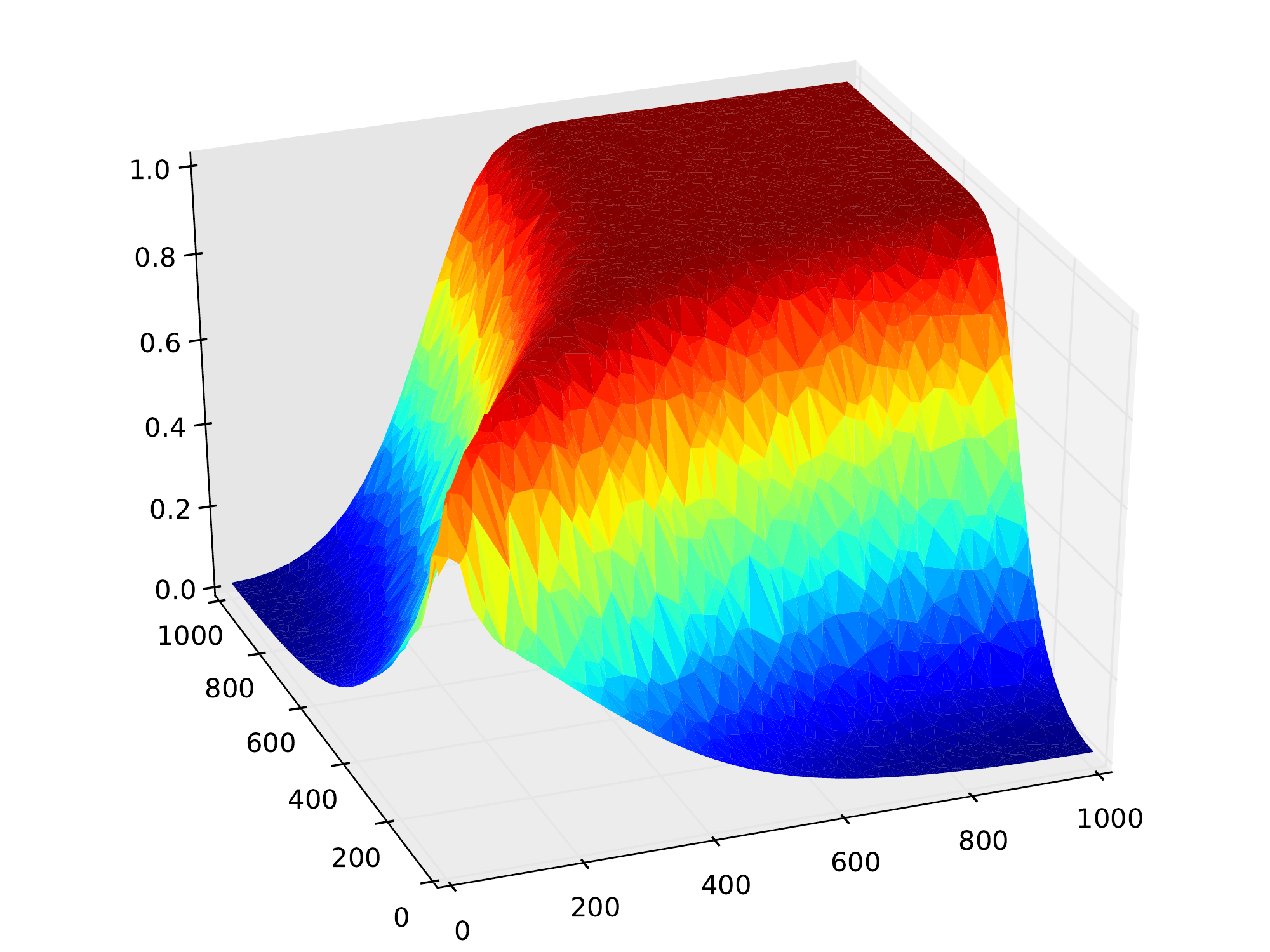}
		\caption{Surface plot at $t = 10$ years}
	\end{subfigure}%
	\begin{subfigure}{.4\textwidth}
		\centering
		\includegraphics[width=5.18cm,height=5cm]{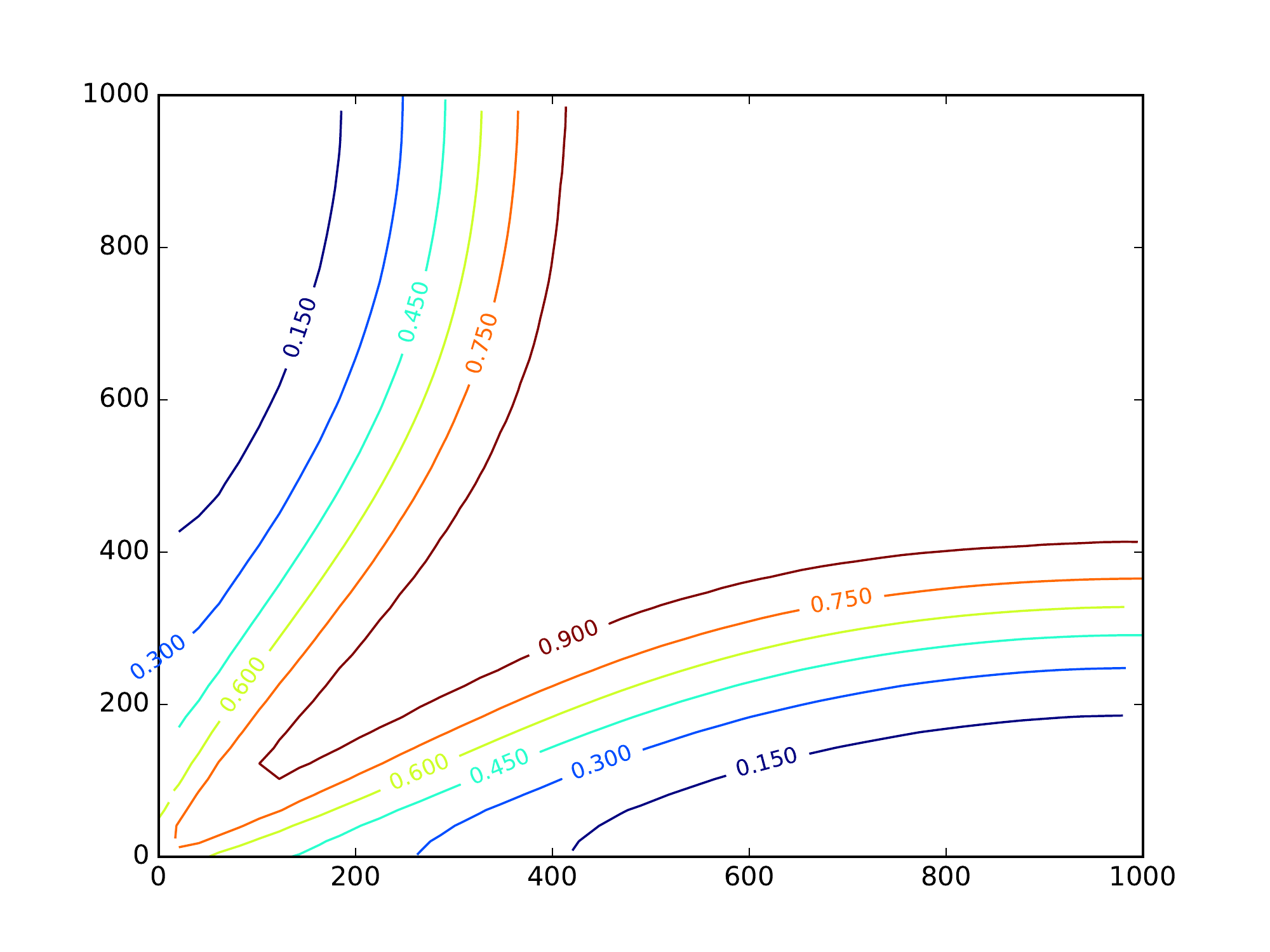}
		\caption{Contour plot at $t = 10$ years}
	\end{subfigure}
	\caption{Concentration of the invading solvent in Test \ref{test:peaceman2} with $k=1$ and $\Delta t = 18$ on Cartesian Mesh 2 with constant permeability.}
	\label{fig:peaceman_test2}
\end{figure}

\begin{test}\rm
	We retain the parameters of Test \ref{test:peaceman2} and use a hexagonal mesh in place of the Cartesian mesh (Hexagonal Mesh 4 in Table \ref{tab:mesh_parameters}). The fingering phenomenon is also observed in the results, which are shown in Figure \ref{fig:peaceman_test2hexa}, where we notice that there is a slight bias along the opposite diagonal at $t = 3$. This bias
is expected, given that the mesh is skewed in this direction (see Figure \ref{fig:meshes}), but it remains
rather small and does not seem to impact much the final result at $t=10$ years.
\end{test}

\begin{figure}[h]
	\centering
	\begin{subfigure}{.5\textwidth}
		\centering
		\includegraphics[scale=0.3]{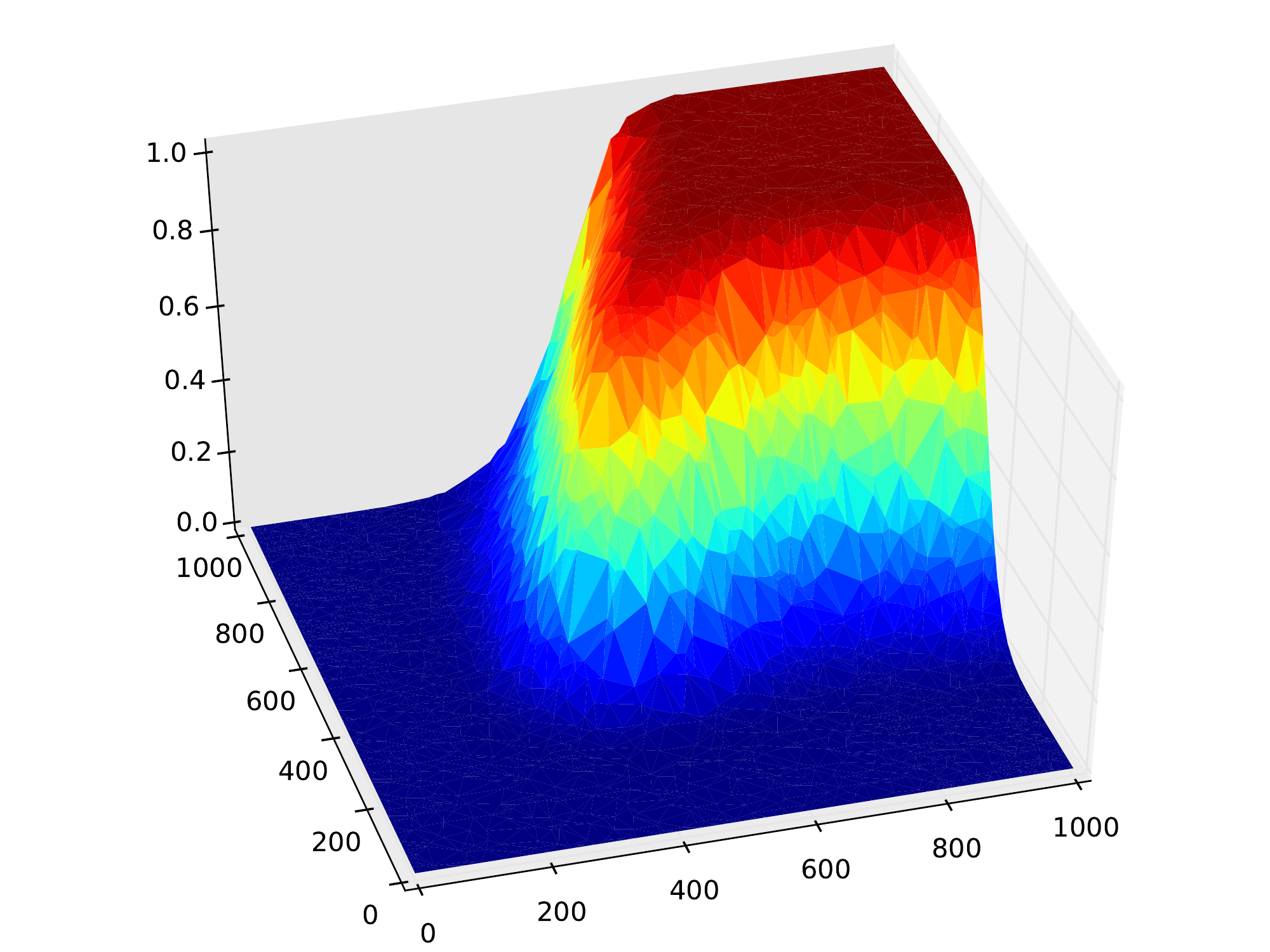}
		\caption{Surface plot at $t = 3$ years}
	\end{subfigure}%
	\begin{subfigure}{.4\textwidth}
		\centering
		\includegraphics[width=5.18cm,height=5cm]{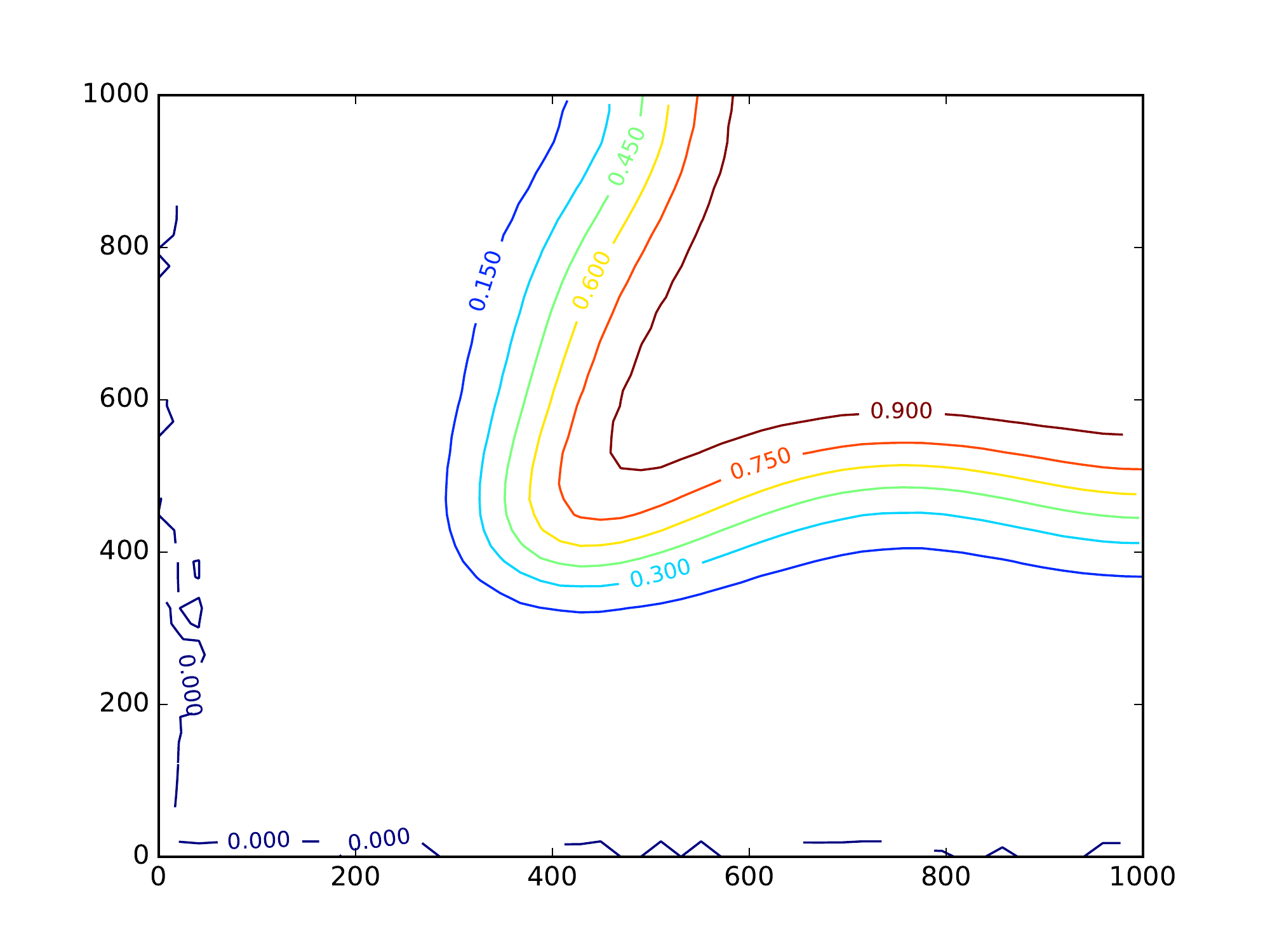}
		\caption{Contour plot at $t = 3$ years}
	\end{subfigure}
	\begin{subfigure}{.5\textwidth}
		\centering
		\includegraphics[scale=0.3]{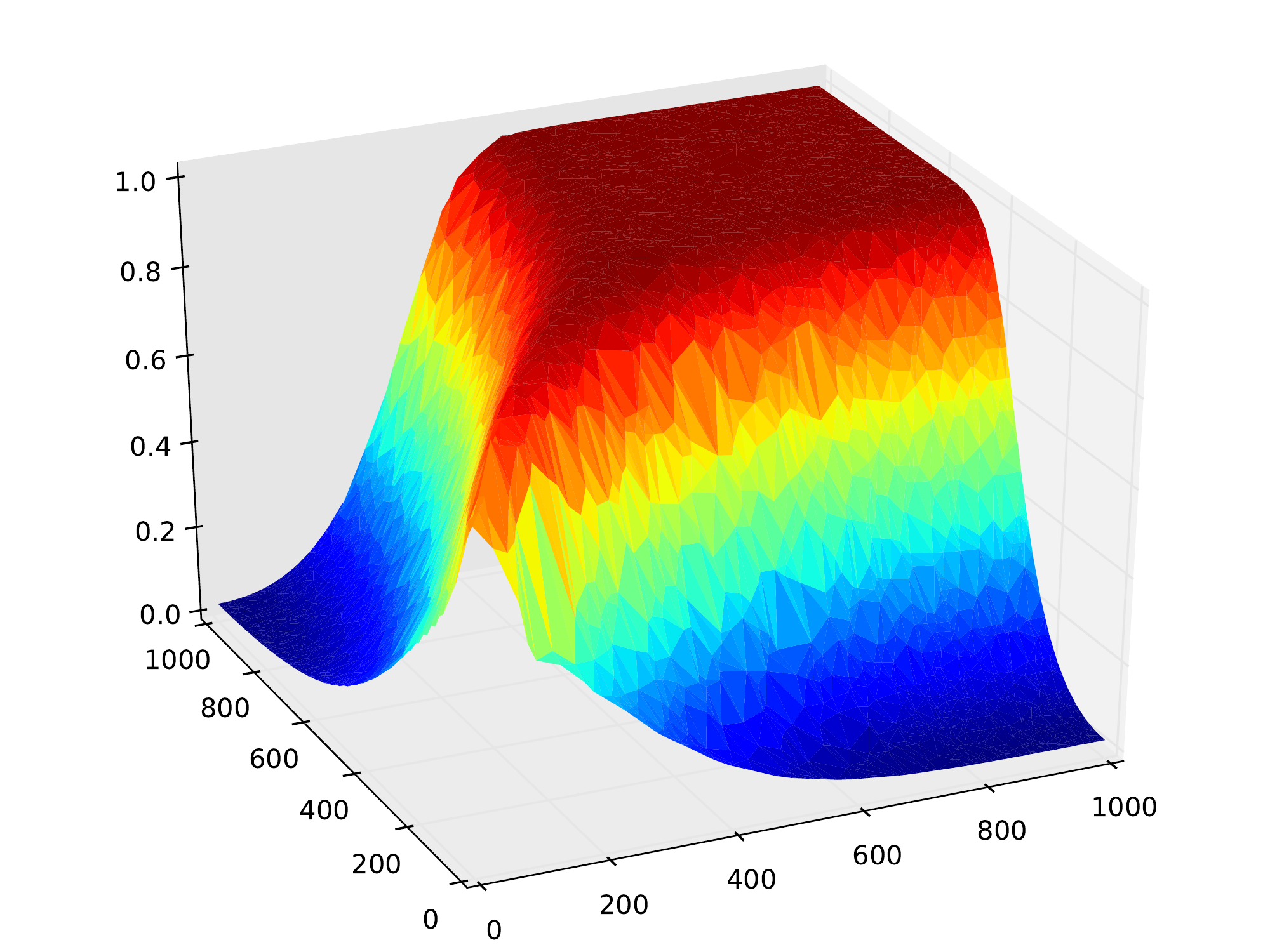}
		\caption{Surface plot at $t = 10$ years}
	\end{subfigure}%
	\begin{subfigure}{.4\textwidth}
		\centering
		\includegraphics[width=5.18cm,height=5cm]{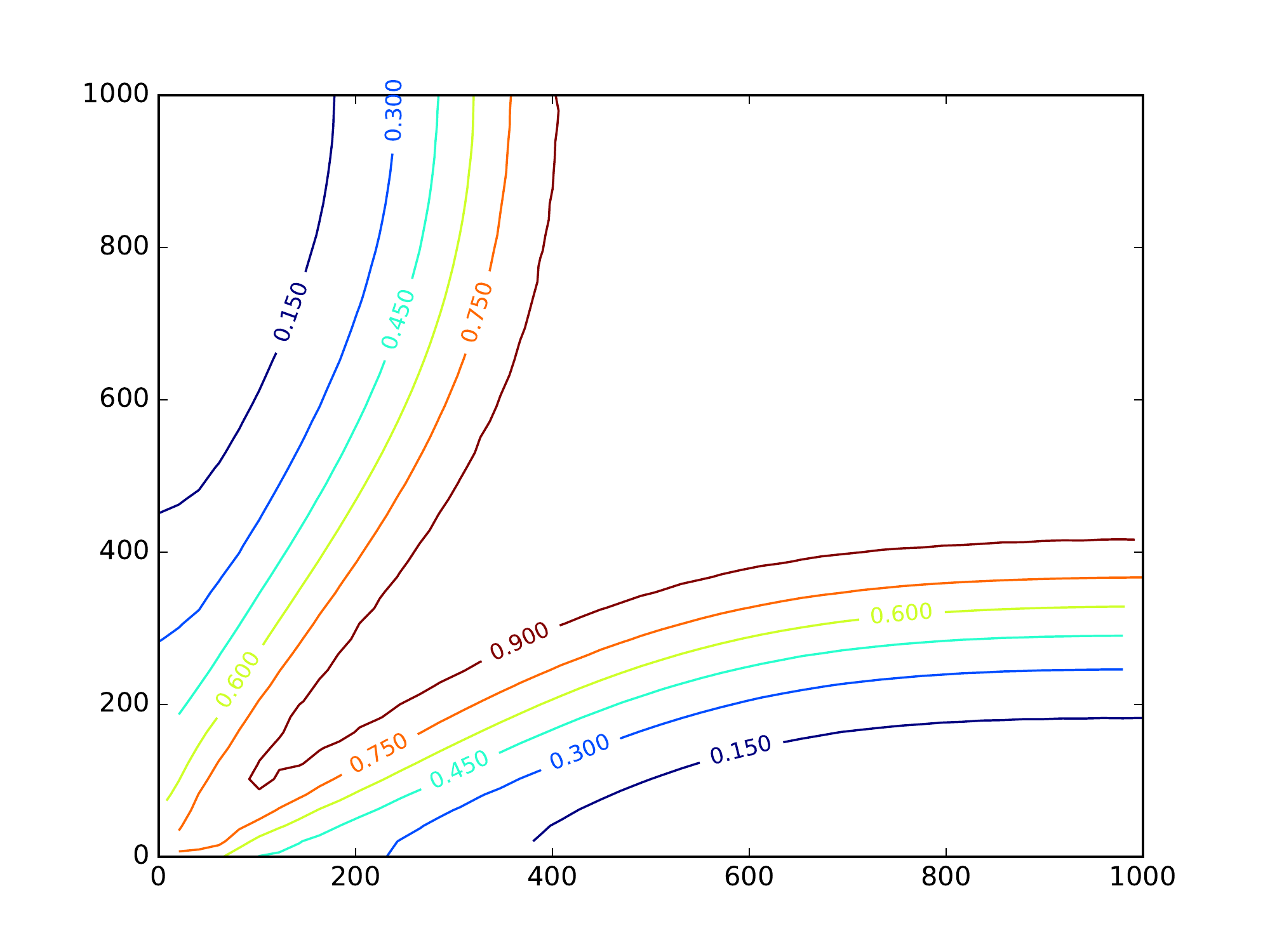}
		\caption{Contour plot at $t = 10$ years}
	\end{subfigure}
	\caption{Concentration of the invading solvent in Test \ref{test:peaceman2} with $k=1$ and $\Delta t = 18$ on Hexagonal Mesh 2 with constant permeability.}
	\label{fig:peaceman_test2hexa}
\end{figure}

\begin{test} \label{test:peaceman4}\rm
	For this test, we take a discontinuous permeability tensor $\BK = 80\BI$ except on the four subdomains $(200,400)\times(200,400)$, $(600,800)\times(200,400)$, $(200,400)\times(600,800)$ and $(600,800)\times(600,800)$ where instead $\BK = 20\BI$ (see Figure \ref{fig:heterogeneous}). We use a $40 \times 40$ Cartesian for this test in order to ensure that the regions of discontinuity are aligned with the edges (Cartesian Mesh 4 in Table \ref{tab:mesh_parameters2}). The results shown in Figure \ref{fig:peaceman_test4} are of great interest to us as they depict very different behaviour to those presented in \cite{droniou2007convergence}. Notably in \cite{droniou2007convergence}, by $t = 10$ years, the invading fluid has yet to subsume the two low permeability regions along the main diagonal. However, the HHO scheme depicts both blocks almost entirely saturated by $t = 10$ years. This implies the presence of a significantly higher amount of dispersion in the solution compared to that produced by the MFV scheme, which suggests that low order schemes may underestimate the amount of diffusion described by the model. Finally, we note that the region saturated with solvent is larger when the permeability is inhomogeneous compared to Test \ref{test:peaceman2} where it was not. This is another common phenomenon that has been well observed \cite{droniou2007convergence,wang2000approximation}.
\end{test}

\begin{figure}[h]
\centering
\input{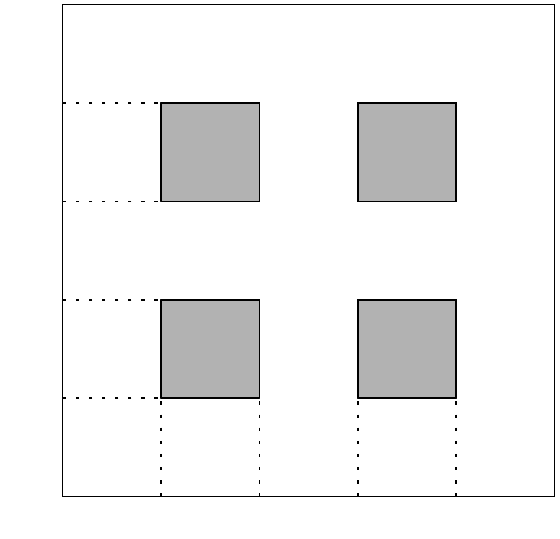_t}
\caption{Permeability tensor for Tests \ref{test:peaceman4} and \ref{test:peaceman5}. \label{fig:heterogeneous}}
\end{figure}

\begin{figure}[h]
	\centering
	\begin{subfigure}{.45\textwidth}
		\centering
		\includegraphics[scale=0.3]{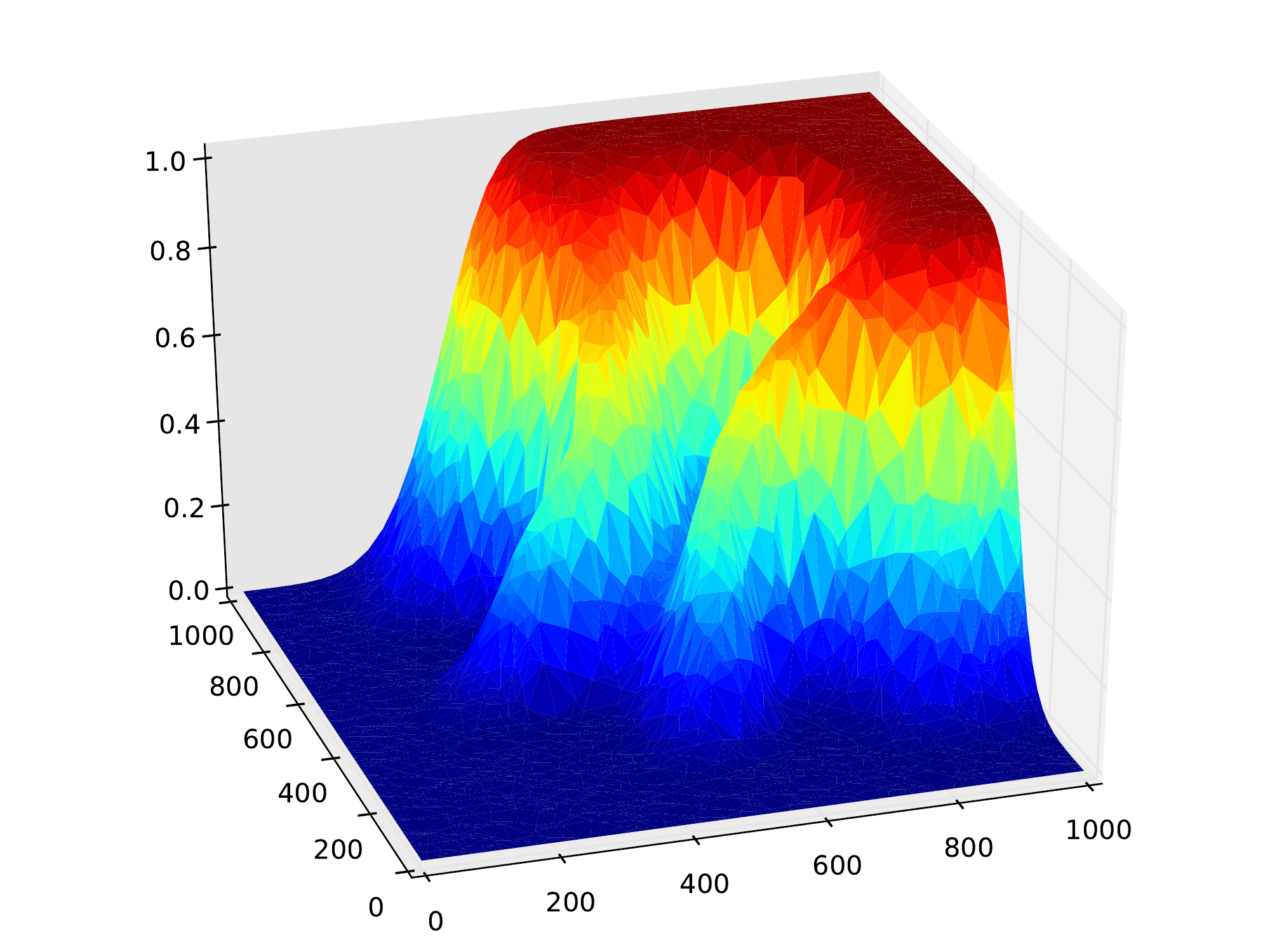}
		\caption{Surface plot at $t = 3$ years}
	\end{subfigure}%
	\begin{subfigure}{.45\textwidth}
		\centering
		\includegraphics[scale=0.3]{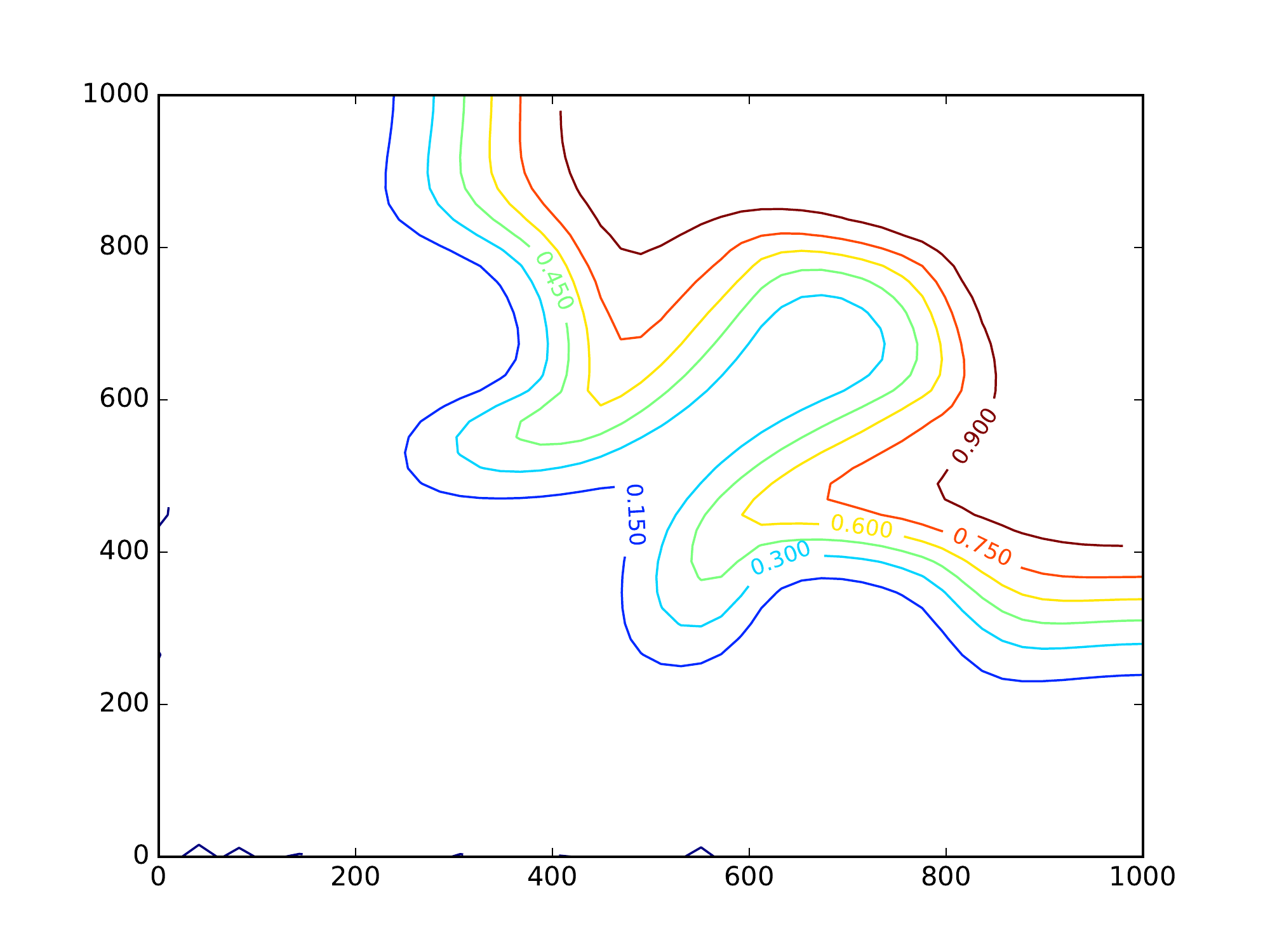}
		\caption{Contour plot at $t = 3$ years}
	\end{subfigure}
	\begin{subfigure}{.45\textwidth}
		\centering
		\includegraphics[scale=0.3]{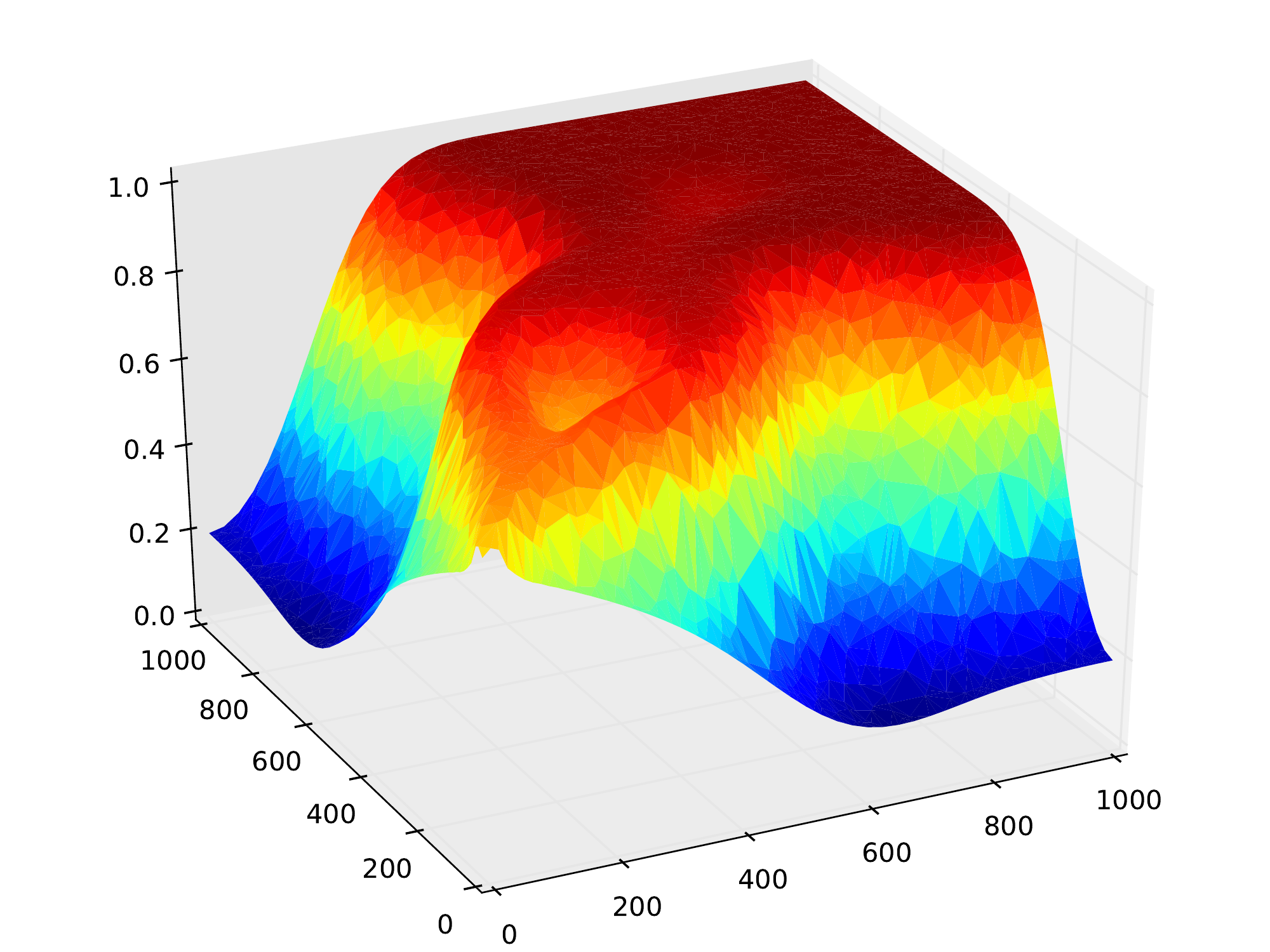}
		\caption{Surface plot at $t = 10$ years}
	\end{subfigure}%
	\begin{subfigure}{.45\textwidth}
		\centering
		\includegraphics[scale=0.3]{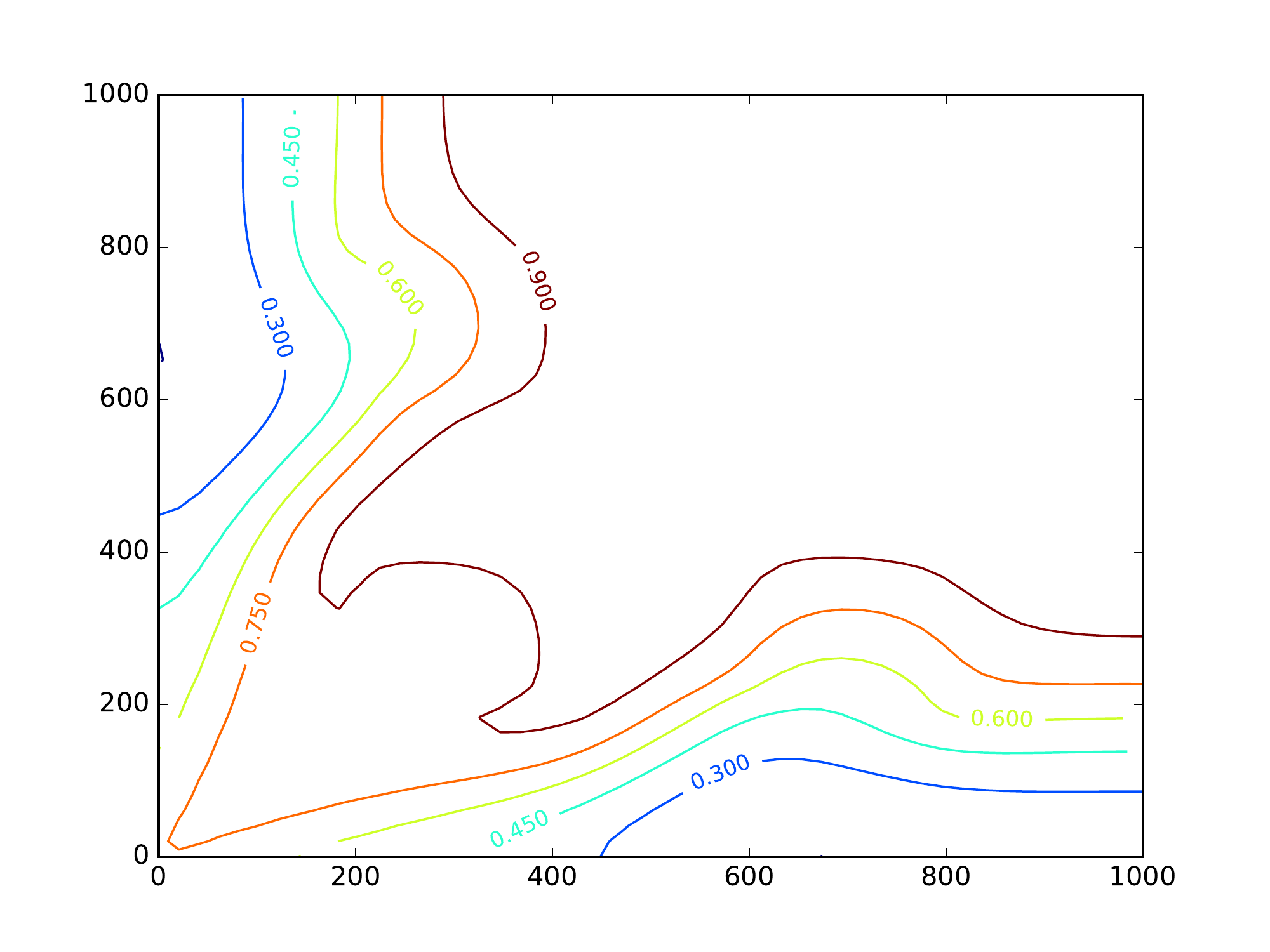}
		\caption{Contour plot at $t = 10$ years}
	\end{subfigure}
	\caption{Concentration of the invading solvent in Test \ref{test:peaceman4} with $k=1$ and $\Delta t = 18$ on a $40 \times 40$ Cartesian mesh with a discontinuous permeability.}
	\label{fig:peaceman_test4}
\end{figure}

\begin{test}\label{test:peaceman5}\rm
	We retain the parameters of Test \ref{test:peaceman4} and use a triangular mesh in place of the Cartesian mesh. The mesh is a $10 \times 10$ grid of the triangular pattern depicted in Figure \ref{fig:meshes} (Triangular Mesh 2 in Table \ref{tab:mesh_parameters2}), which ensures its alignment with the permeability discontinuities. The results can be seen in Figure \ref{fig:peaceman_test4_tri}, where we observe the same general distribution as in the Cartesian case, although the solvent seems slightly more dispersed. For such hybrid methods as the HHO method, 
it is well known that the main unknowns are the edge-based unknowns (see in particular Section \ref{sec:comp.cost});
it is therefore expected that, for a comparable mesh size, a mesh with fewer edges will perform slightly worse
than a mesh with more edges. The Cartesian mesh used in Test \ref{test:peaceman4} has 1600 cells, 3280 edges
and a size of $6.25$; the triangular mesh used here has 1400 cells, 2060 edges and
a size of $6.36$.
\end{test}

\begin{figure}[h]
\centering
	\begin{subfigure}{.45\textwidth}
  \centering
		\includegraphics[scale=0.3]{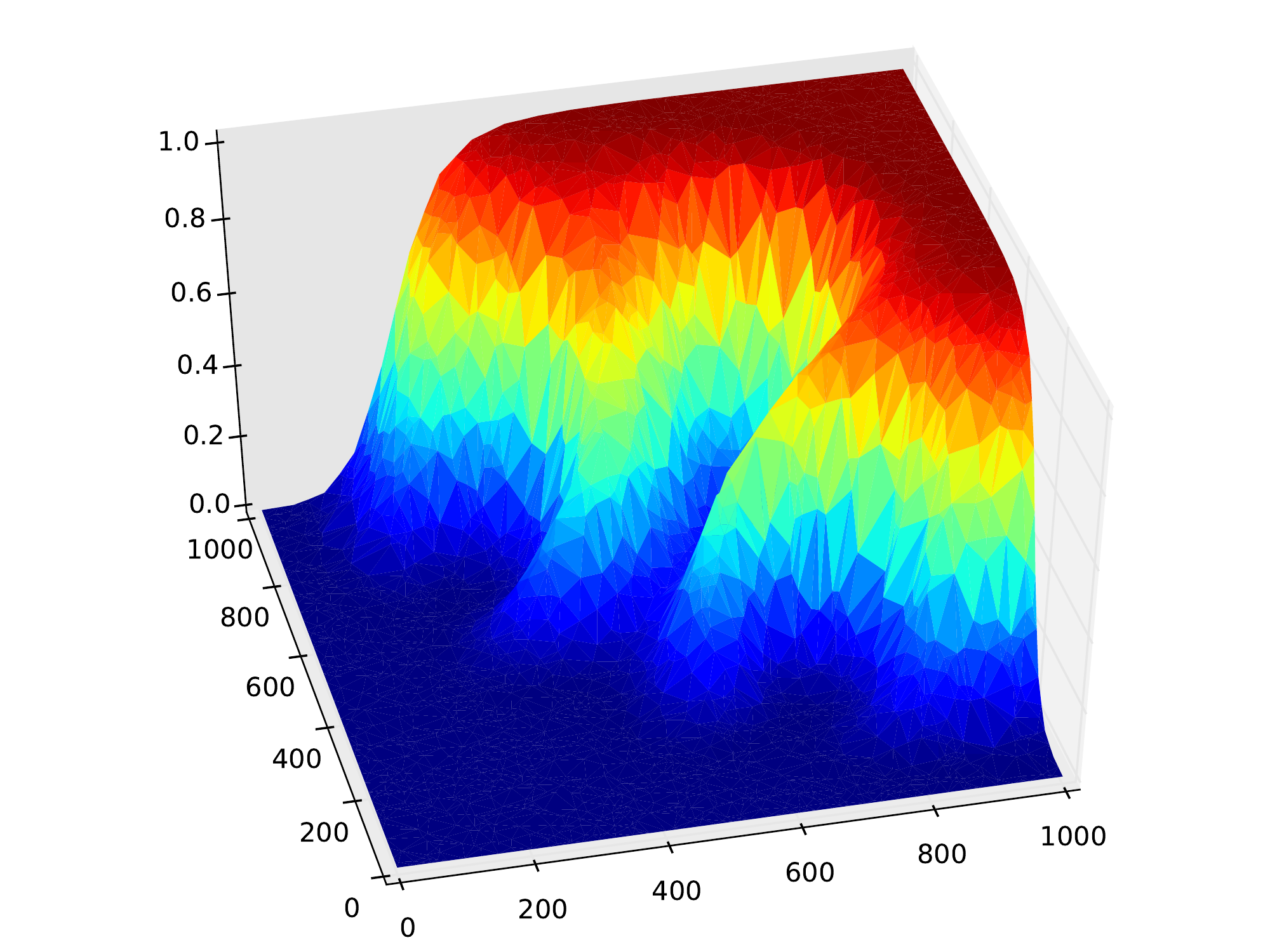}
		\caption{Surface plot at $t = 3$ years}
\end{subfigure}%
	\begin{subfigure}{.45\textwidth}
  \centering
		\includegraphics[scale=0.3]{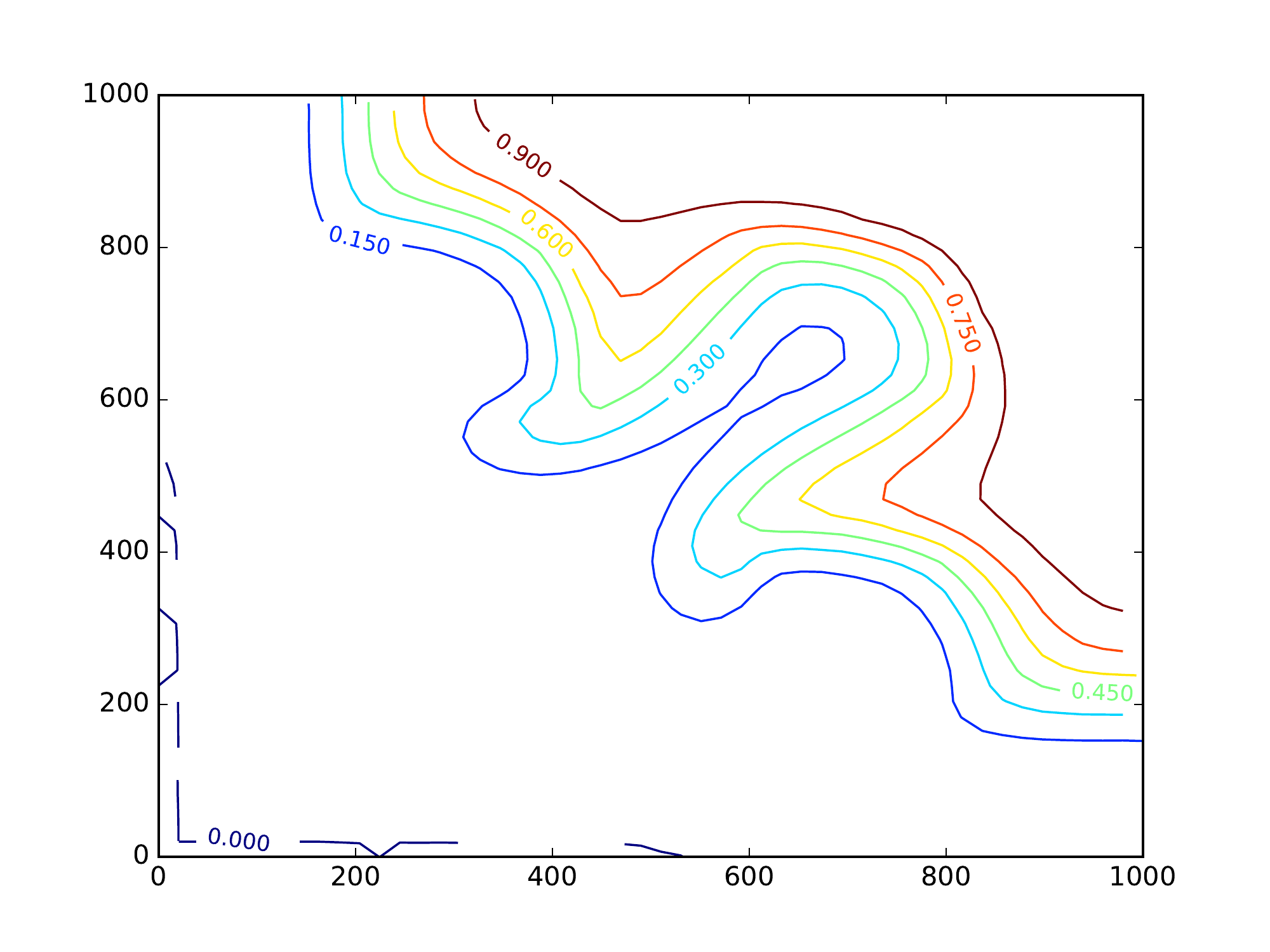}
		\caption{Contour plot at $t = 3$ years}
\end{subfigure}
	\begin{subfigure}{.45\textwidth}
  \centering
		\includegraphics[scale=0.3]{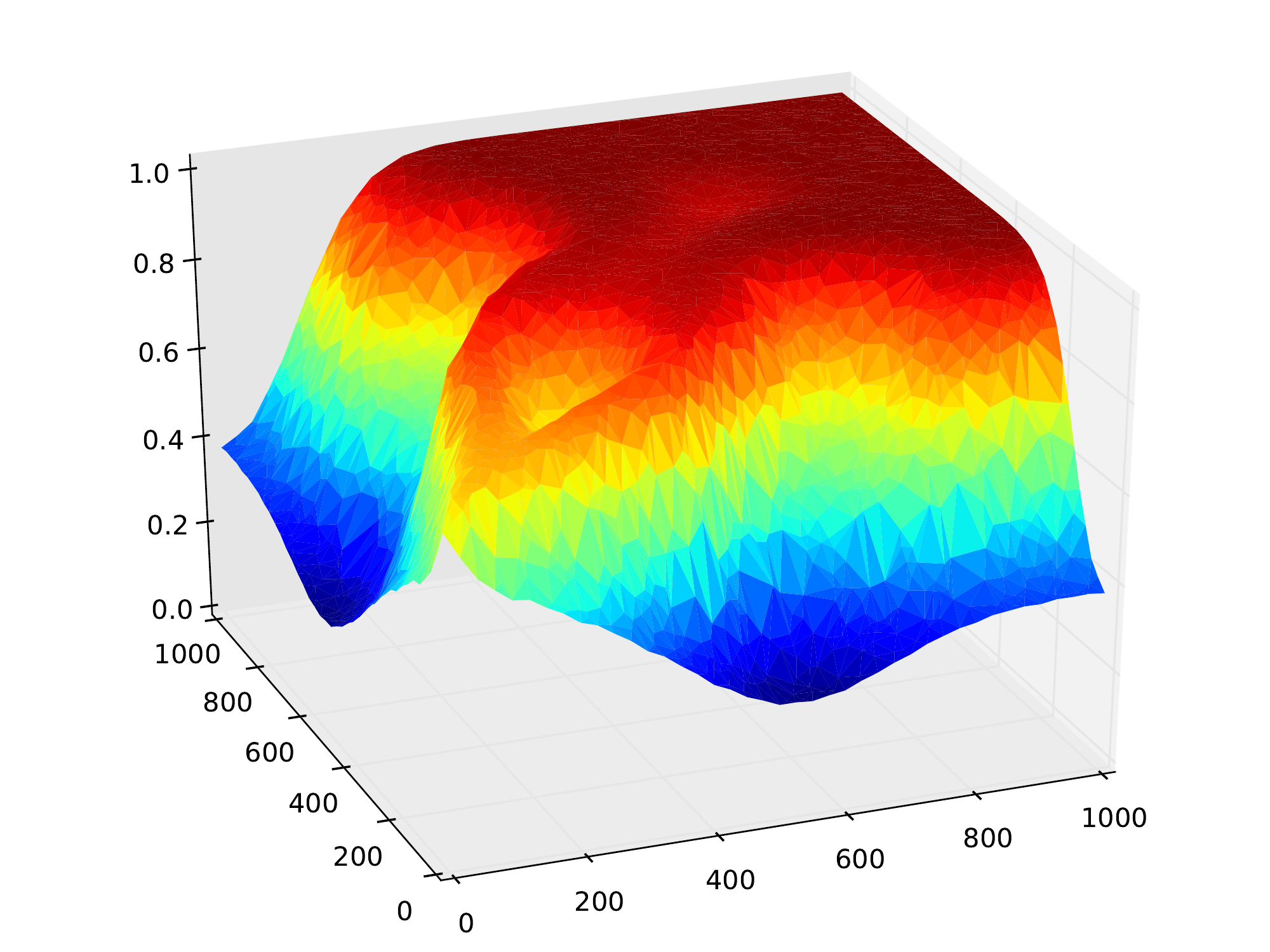}
		\caption{Surface plot at $t = 10$ years}
\end{subfigure}%
	\begin{subfigure}{.45\textwidth}
  \centering
		\includegraphics[scale=0.3]{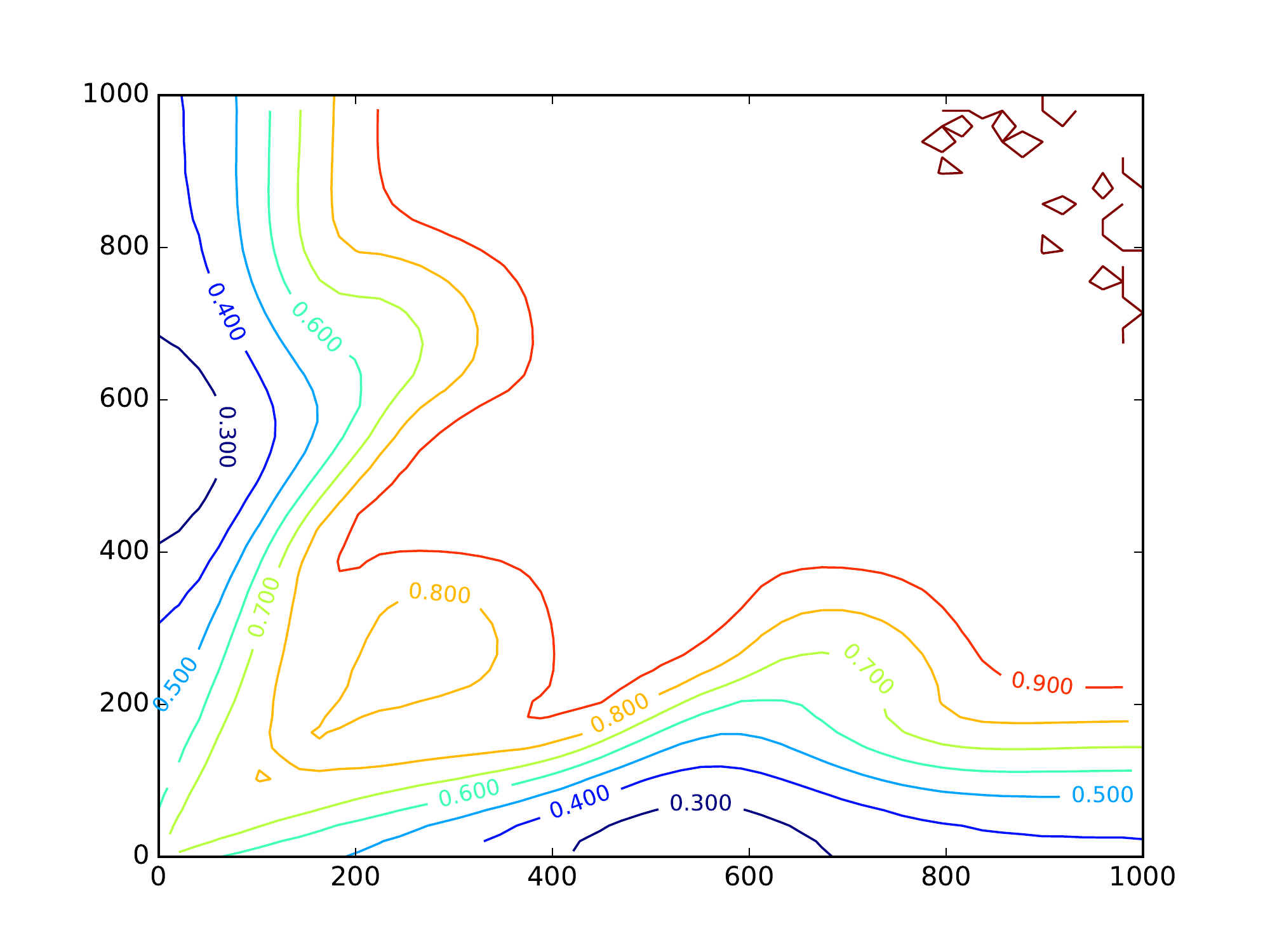}
		\caption{Contour plot at $t = 10$ years}
\end{subfigure}
	\caption{Concentration of the invading solvent in Test \ref{test:peaceman4} with $k=1$ and $\Delta t = 18$ on a Triangular mesh with a discontinuous permeability.}
	\label{fig:peaceman_test4_tri}
\end{figure}

\subsection{Comparison of higher order parameters} \label{sec:numres.comp}

Although the scheme is arbitrary order in space and can easily be extended to arbitrary order in time (by substituting the Crank-Nicolson time-stepping for a higher-order scheme such as backward differentiation), we argue that the $k=1$ scheme provides the best balance of accuracy and computational efficiency. In order to demonstrate that this is the case, we present the following tests.

\begin{test}\rm
To contrast the different solutions, we compare the principal quantity of interest to us, which is the total volume of oil recovered after ten years, as measured by the integral
\begin{equation*}
\int_{\Omega} \Phi(x) \pc_h^N(x).
\end{equation*}
For the $k = 1$ scheme as presented in Test \ref{test:peaceman2}, the total of volume of oil recovered can be measured to be $65.798\%$ of the total volume of the reservoir.
For the high-order tests, we range $k$ from $0$ to $3$, and replace the Crank-Nicolson time-stepping scheme with a high order backward differentiation formula of order $4$ in order to minimise the contribution of the temporal error. Additionally, in order to mitigate the majority of the extrapolation error produced by the pressure estimate when extrapolating $\pc_h$ (Equation \eqref{eqn:concentration_extrapolation}), we take a reduced time-step of $\Delta t = 7.2$ days (approximately $N = 500$ steps.)

Figure \ref{fig:oil_recovery_convergence} depicts the total recovery volume on each family of meshes in Table \ref{tab:mesh_parameters} with various polynomial degrees $k$. The $k = 0$ scheme is shown to perform quite poorly, producing results that are well out-of-line with the rest of the schemes. All of the other schemes however quickly converge to a similar estimate as the mesh size is refined, which is consistently within $1\%$ of the estimate produced by Test \ref{test:peaceman2} with only $k=1$, a Crank-Nicolson time-stepping, and $\Delta t=18$ days. We note that the scheme behaves well even on distorted meshes (the Kershaw mesh) until $k = 3$, where the linear system becomes too difficult to solve (Figure \ref{fig:oil_recovery_convergence} (c)). This solvability issue of HHO with high degrees
(remember that for $k=3$, the pressure equation is approximated with an order $2k=6$) on severely distorted
meshes has already been noticed even for the Poisson problem, and might be a consequence of rounding errors \cite{Dprivate}. A way to mitigate this poor conditioning of the system matrix on skewed meshes is
to change the local basis functions by applying a Gram--Schmidt orthonormalization process,
see \cite{BBCDPT12}. We however did not explore this option here as, in our experiments, the quality of the results do not significantly improve when using orders higher than $k = 1$ or $2$, and the computational cost increases drastically.

Finally, we also compare the high order schemes on the discontinuous permeability tensor of Test \ref{test:peaceman4}, using otherwise the same parameters. We use the mesh families, described in Table \ref{tab:mesh_parameters2}, whose edges are aligned with the discontinuities depicted in Figure \ref{fig:heterogeneous}. The results in Figure \ref{fig:oil_recovery_convergence_discontinuous} show that we obtain similar convergence pat\-terns to those of the homogeneous permeability tests. The $k=0$ scheme still produces results that are well out of line with the rest, while all $k \geq 1$ schemes convergence to a similar value as the mesh is refined.
\end{test}

\begin{figure}[h]
	\centering
	\begin{subfigure}{.49\textwidth}
		\centering
		\includegraphics[scale=0.3]{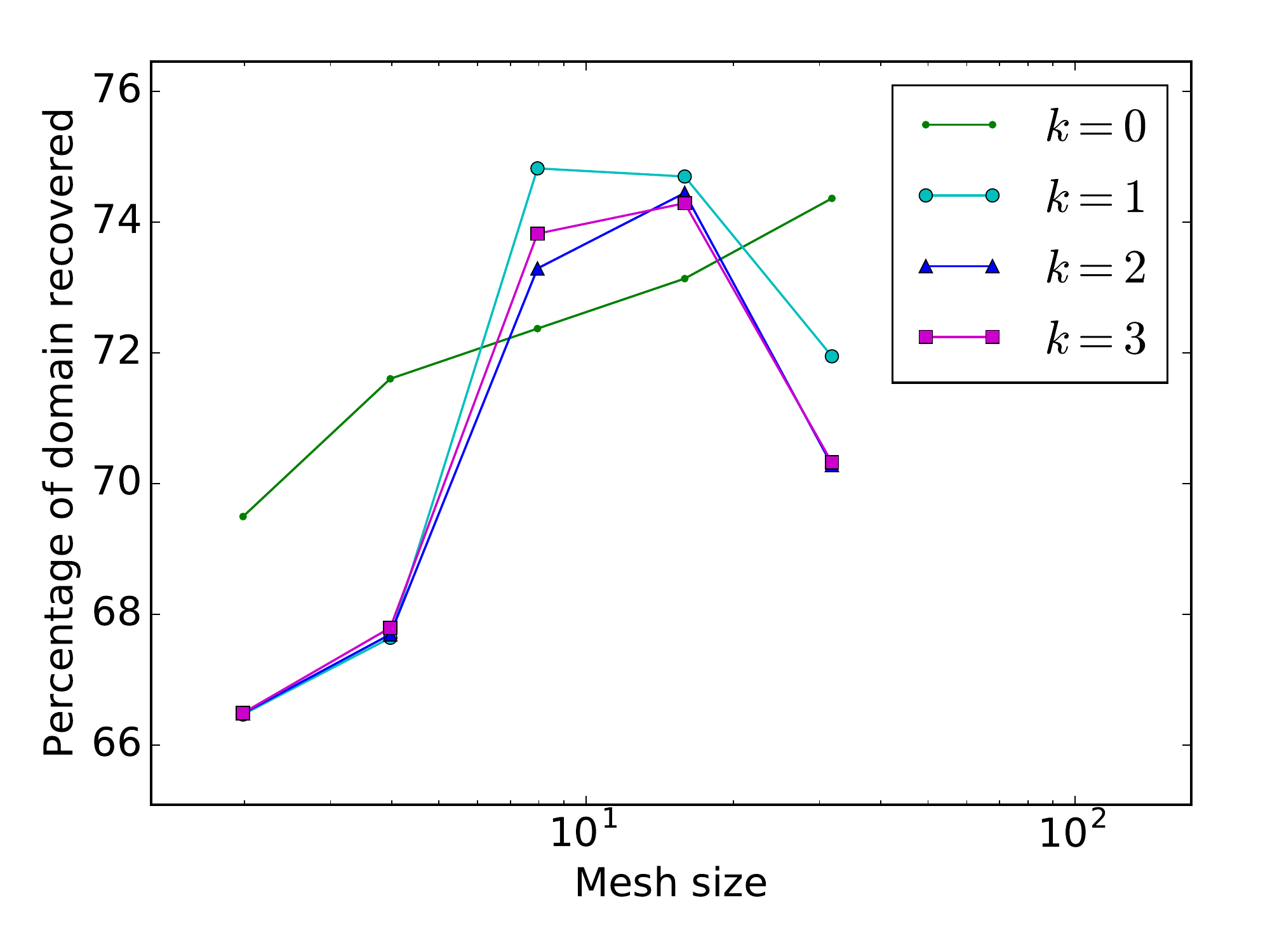}
		\caption{Recovery on triangular meshes}
	\end{subfigure}
	\begin{subfigure}{.49\textwidth}
		\centering
		\includegraphics[scale=0.3]{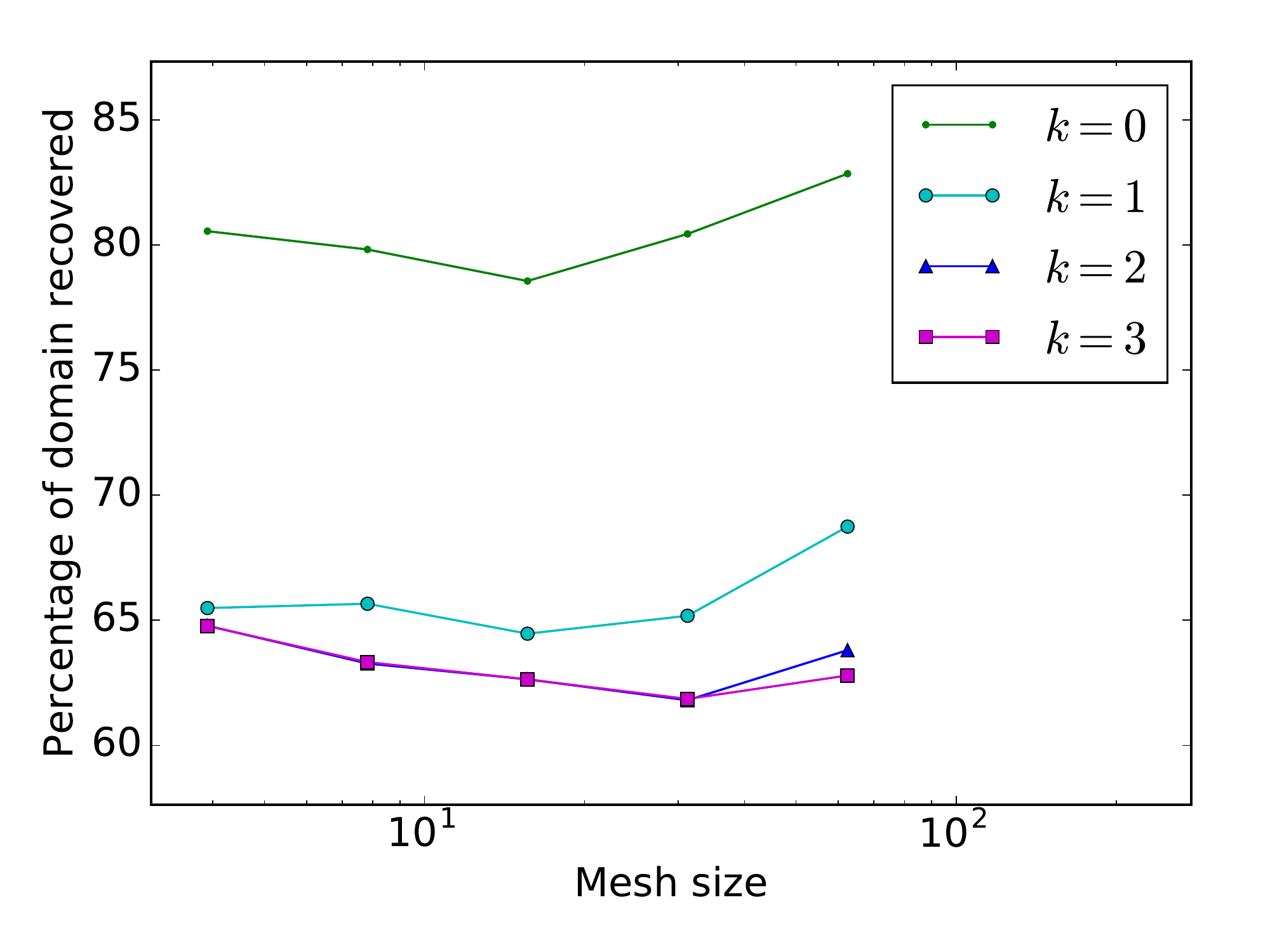}
		\caption{Recovery on Cartesian meshes}
	\end{subfigure}
	\begin{subfigure}{.49\textwidth}
		\centering
		\includegraphics[scale=0.3]{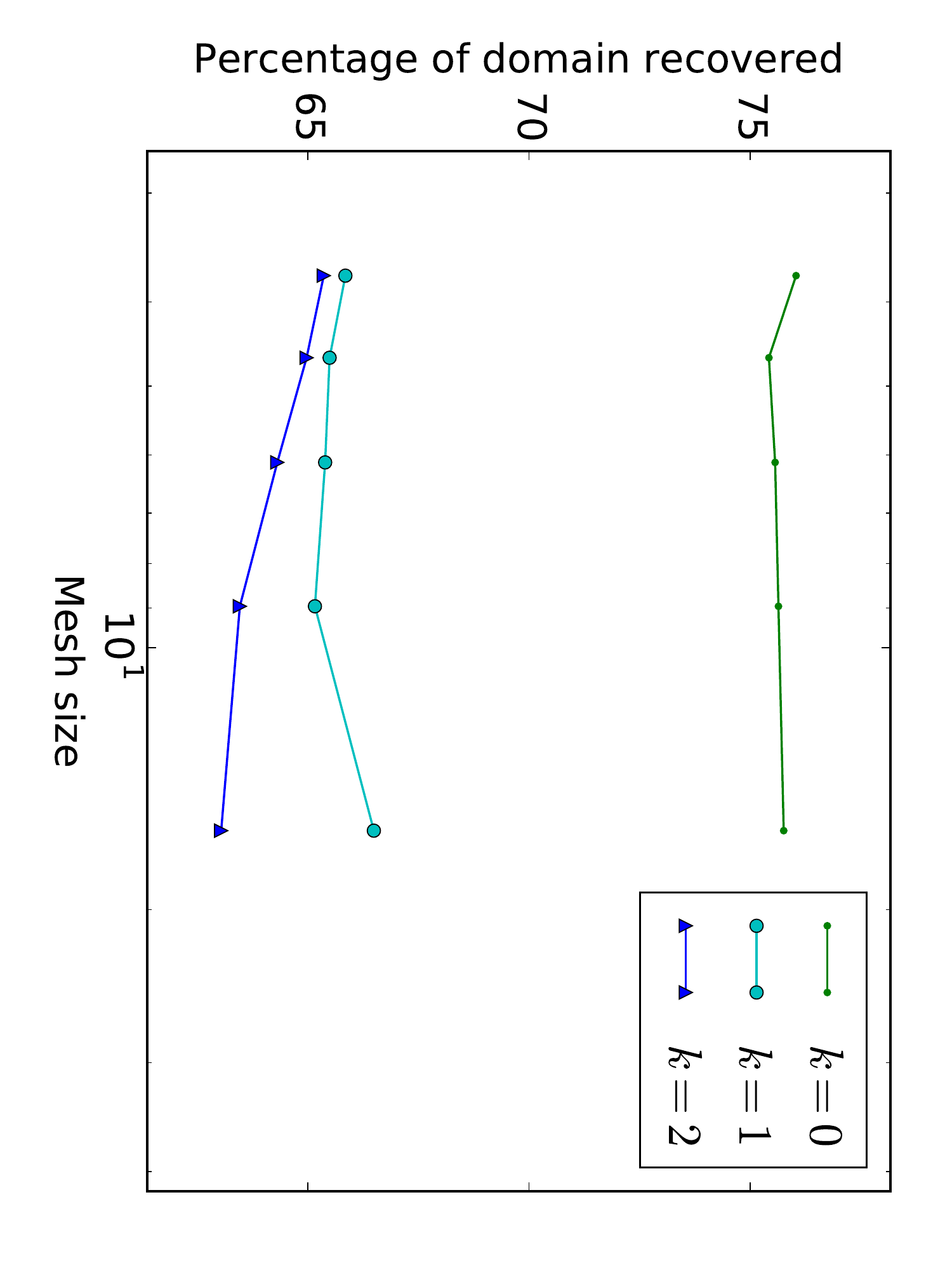}
		\caption{Recovery on Kershaw meshes}
	\end{subfigure}
	\begin{subfigure}{.49\textwidth}
		\centering
		\includegraphics[scale=0.3]{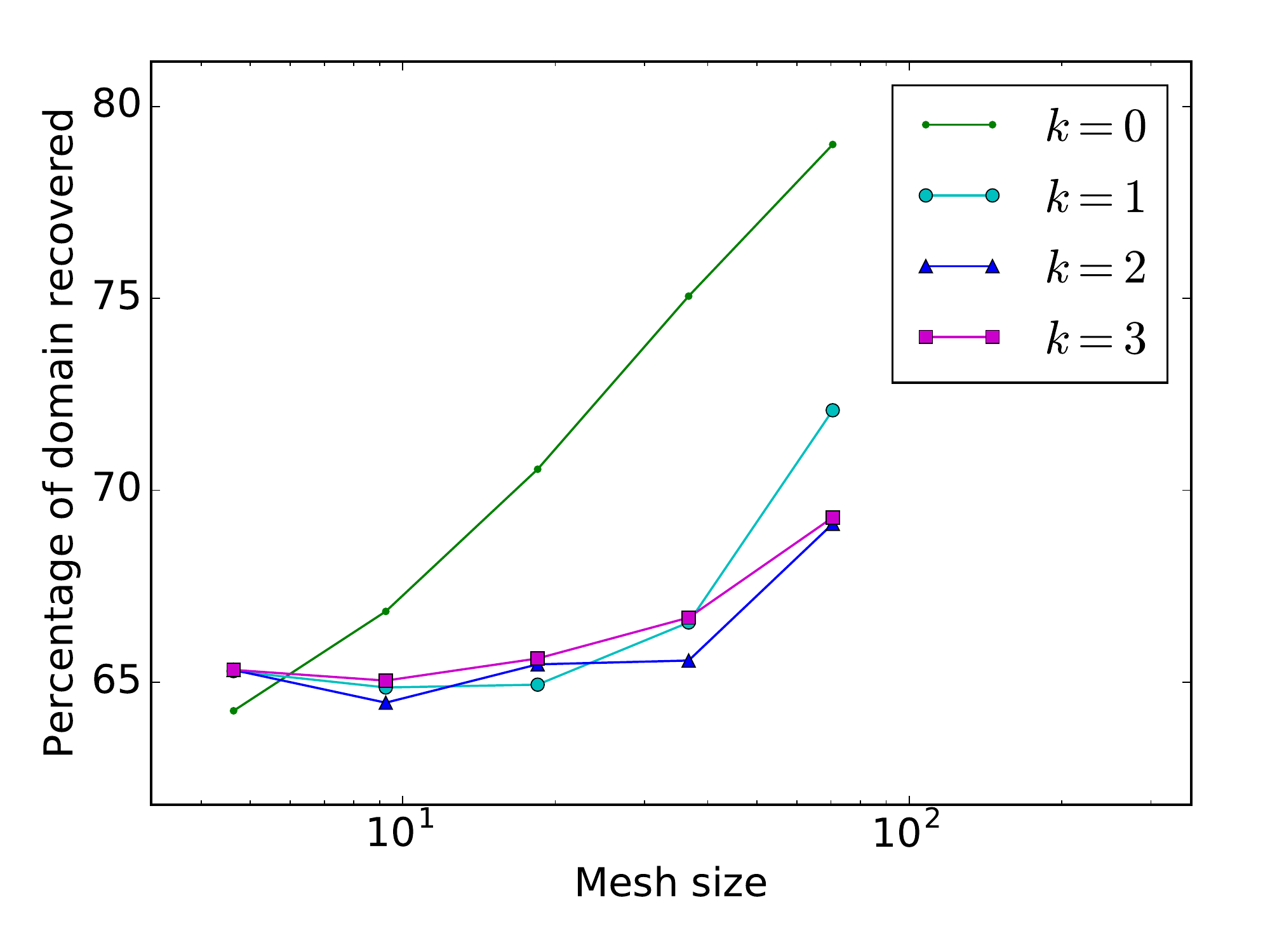}
		\caption{Recovery on hexagonal meshes}
	\end{subfigure}
	\caption{Total percentage of oil recovered from the reservoir with a uniform permeability,
after $10$ years and for various polynomial degrees and mesh sizes. To minimise the temporal error contribution, a high order backward difference time-stepping scheme with $\Delta t = 7.2$ is
used.}
	\label{fig:oil_recovery_convergence}
\end{figure}

\begin{figure}[h]
	\centering
	\begin{subfigure}{.49\textwidth}
		\centering
		\includegraphics[scale=0.3]{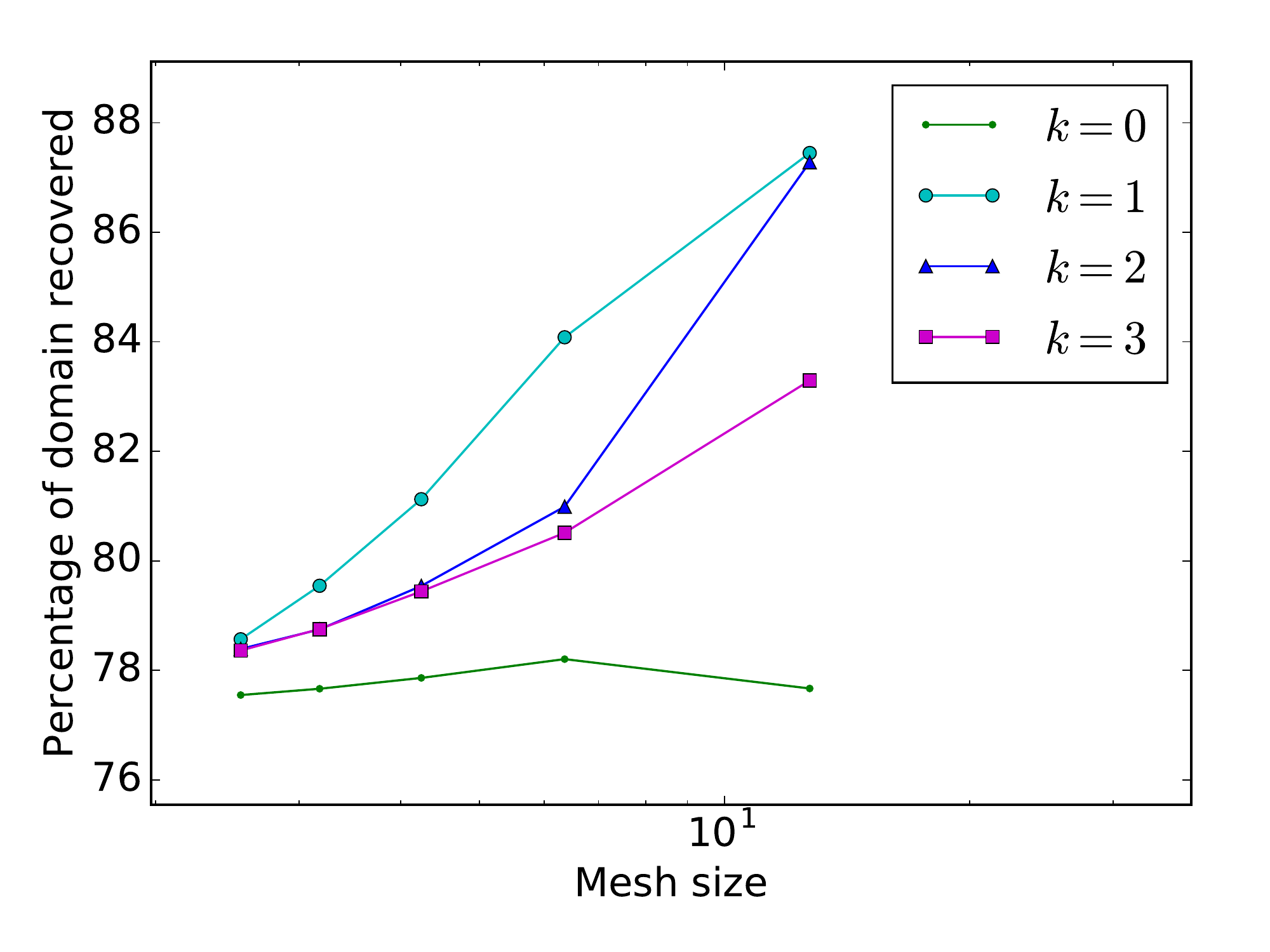}
		\caption{Recovery on triangular meshes}
	\end{subfigure}
	\begin{subfigure}{.49\textwidth}
		\centering
		\includegraphics[scale=0.3]{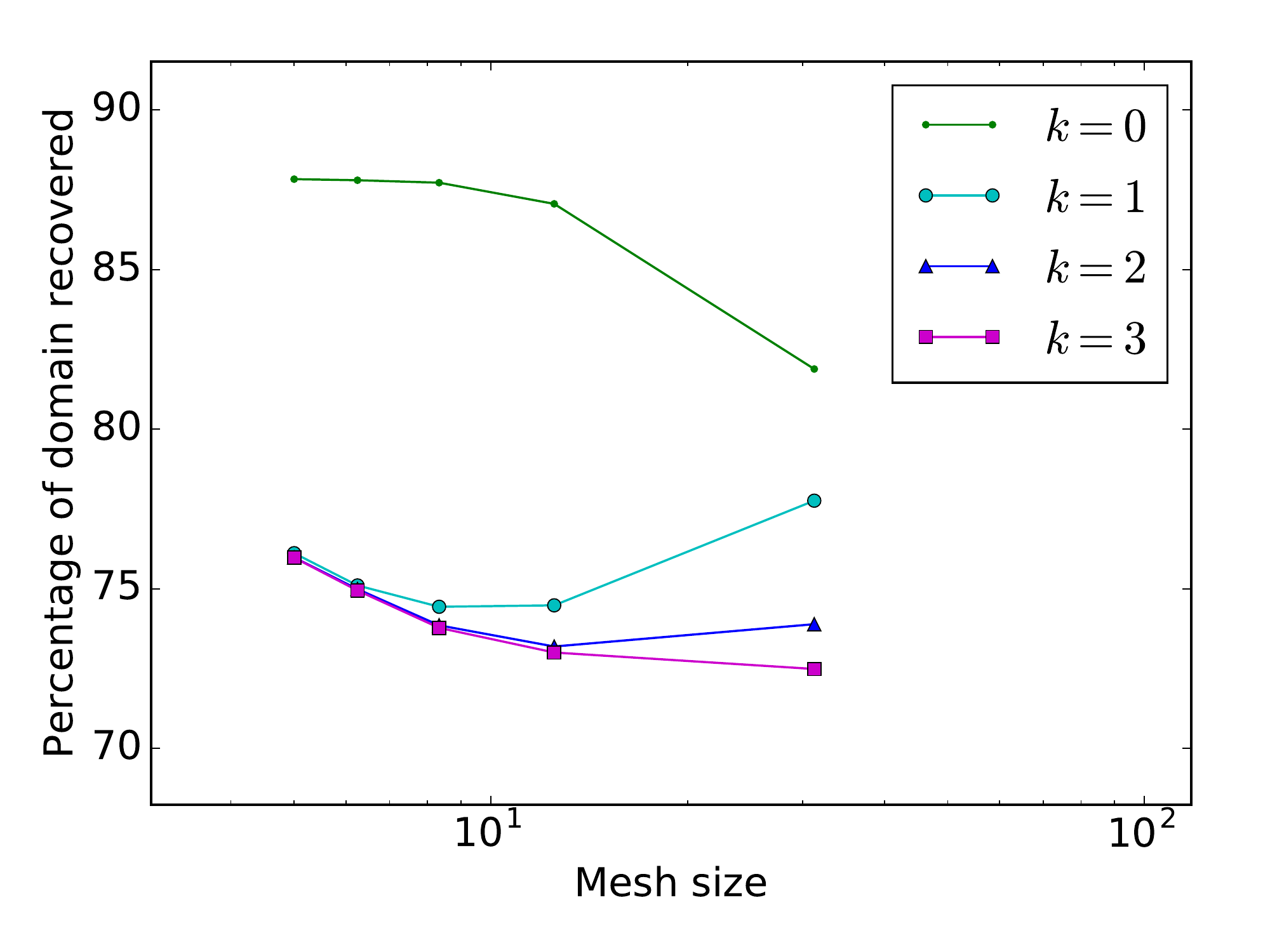}
		\caption{Recovery on Cartesian meshes}
	\end{subfigure}
	\caption{Total percentage of oil recovered from the reservoir with a discontinuous permeability tensor, after $10$ years for various polynomial degrees and mesh sizes.
 To minimise the temporal error contribution, a high order backward difference time-stepping scheme with $\Delta t = 7.2$ is
used}
	\label{fig:oil_recovery_convergence_discontinuous}
\end{figure}

\begin{test}\rm
The quality of the numerical approximations for various values of $k$ can also be observed visually. Depicted in Figures \ref{fig:compare_ks_t3} and \ref{fig:compare_ks} are the contour plots for the solution to Test \ref{test:peaceman2} using $k=0,1,2,3$ at time $t=3$ and $t=10$ respectively. The low-order solution using $k=0$ suffers from obvious grid effects in which the fluid mixture is progressing too rapidly along and clinging to the boundary of the domain. A similar effect is present in the results of the MFV scheme of \cite{droniou2007convergence}, suggesting that this artefact is a result of the low order of the scheme. Moving to a higher order scheme, even just $k = 1$, remedies this effect and shows the solvent mixture progressing in a physically realistic pattern.

The results of Figures \ref{fig:oil_recovery_convergence}, \ref{fig:oil_recovery_convergence_discontinuous}, \ref{fig:compare_ks_t3} and \ref{fig:compare_ks}
also show that there is little advantage in selecting a spatial order $k\ge 2$, since the results for
these higher order are qualitatively and quantitatively similar to those obtained with $k=1$.
\end{test}

\begin{figure}[h]
	\centering
	\begin{subfigure}{.49\textwidth}
		\centering
		\includegraphics[scale=0.3]{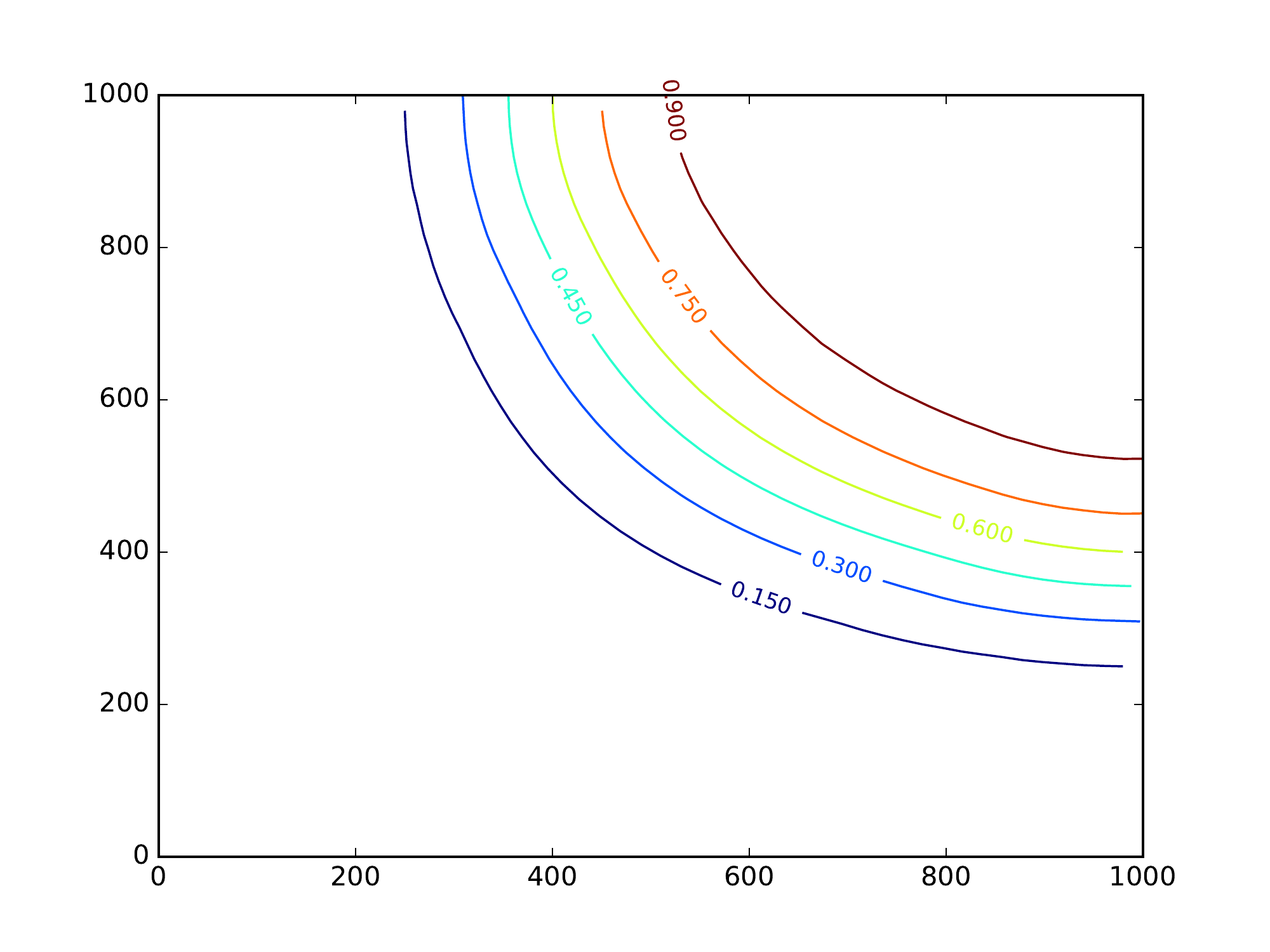}
		\caption{$k = 0$}
	\end{subfigure}%
	\begin{subfigure}{.49\textwidth}
		\centering
		\includegraphics[scale=0.3]{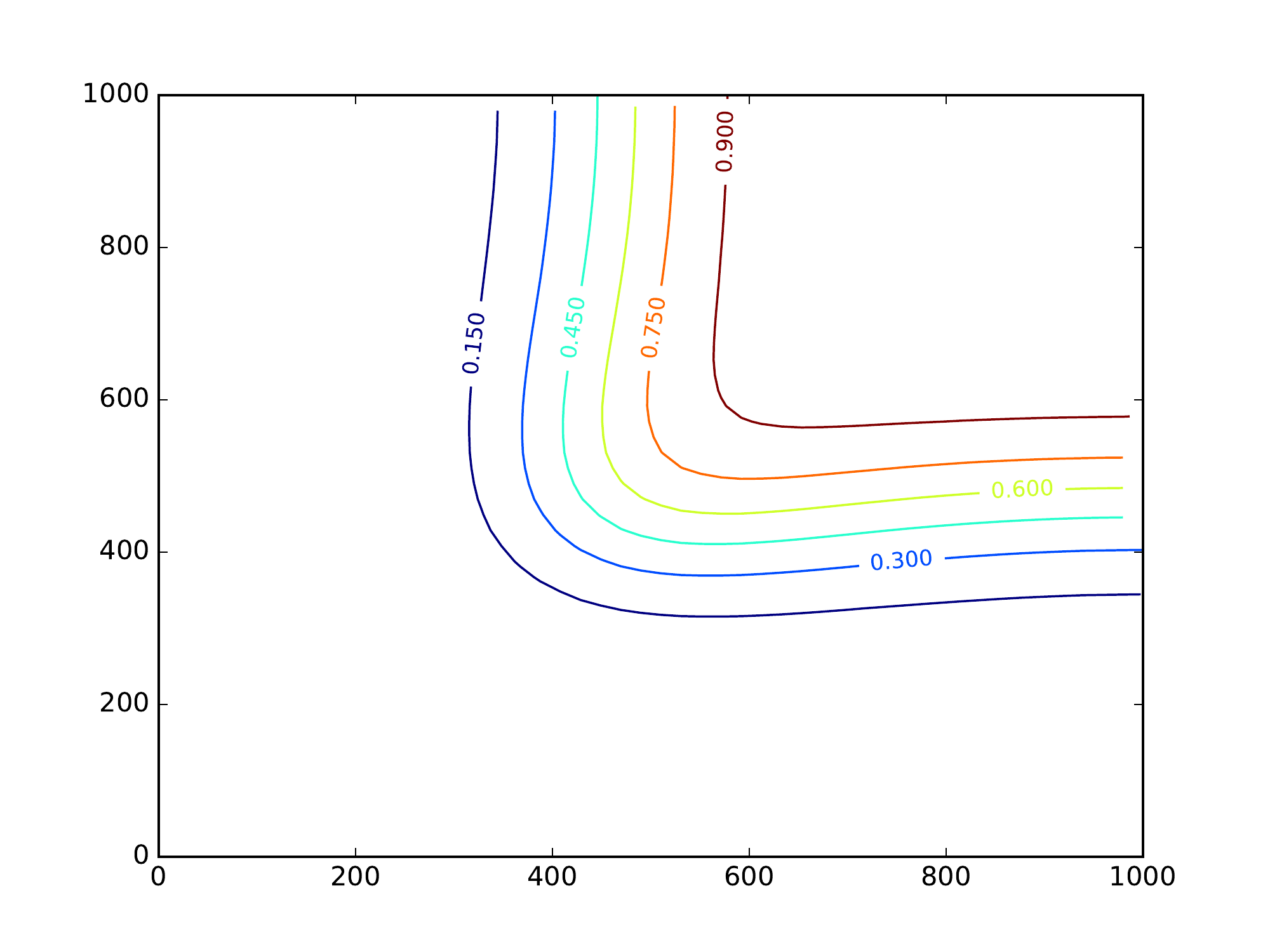}
		\caption{$k = 1$}
	\end{subfigure}
	\begin{subfigure}{.49\textwidth}
		\centering
		\includegraphics[scale=0.3]{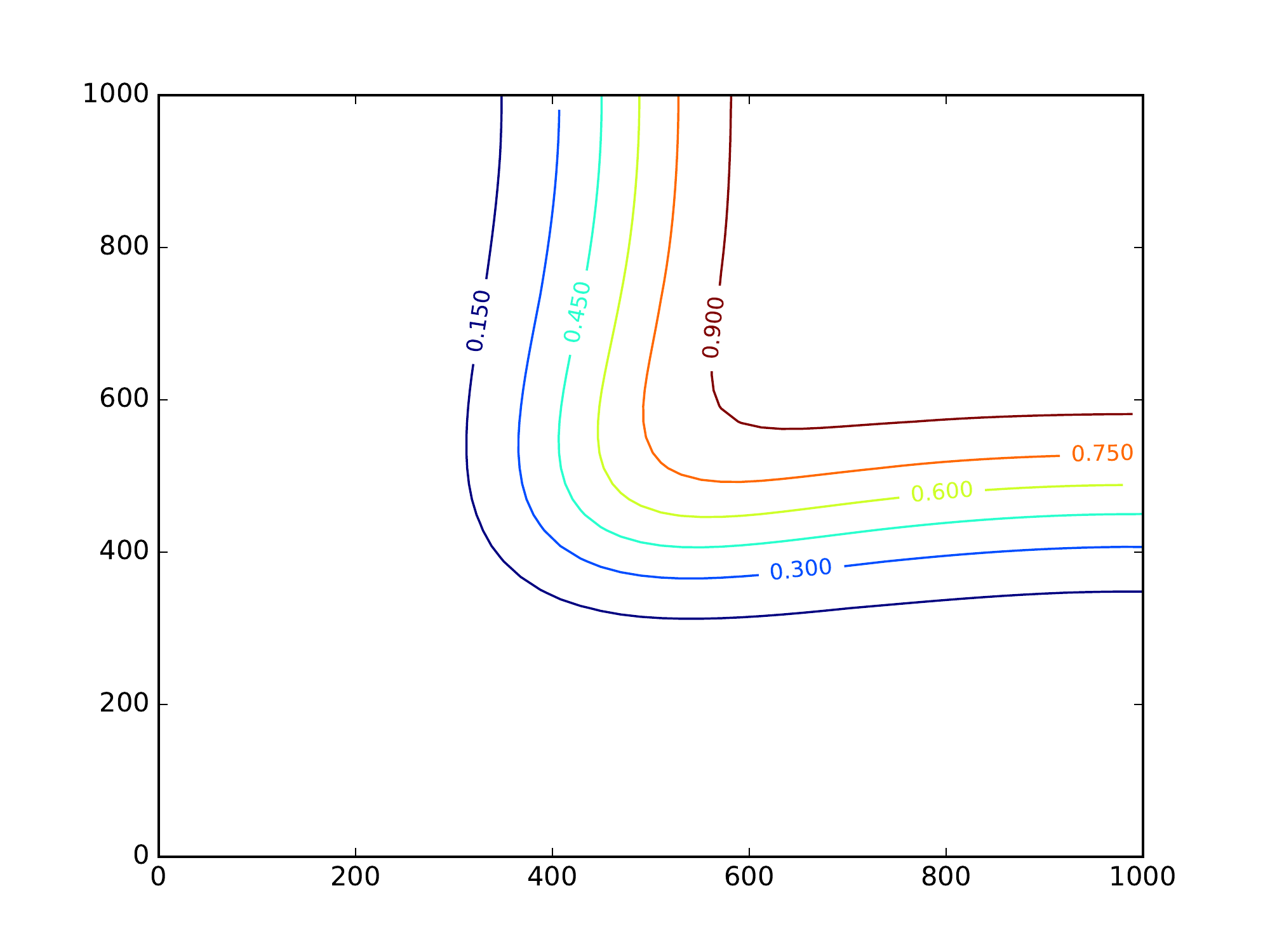}
		\caption{$k=2$}
	\end{subfigure}%
	\begin{subfigure}{.49\textwidth}
		\centering
		\includegraphics[scale=0.3]{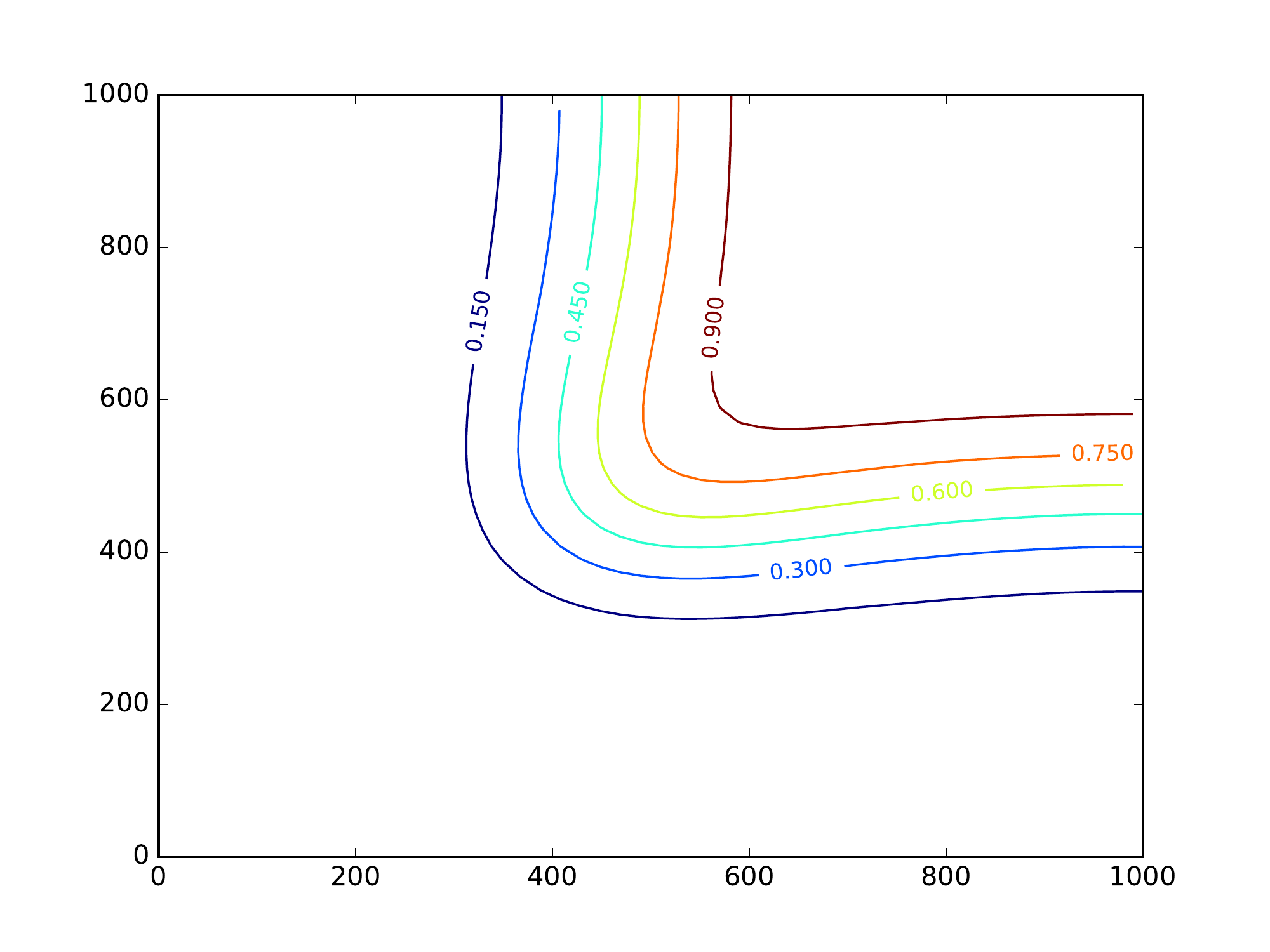}
		\caption{$k=3$}
	\end{subfigure}
	\caption{Comparison of the quality of the numerical approximation for $k = 0,1,2,3$ on a Cartesian mesh at time $t = 3$ years. For $k = 0$, the fluid mixture clings to and overly rapidly progresses along the reservoir boundary. Moreover, the expected fingering effect is not visible. These issues are remedied by already selecting $k=1$, and further increases of the order does
not noticeably impact the solution (at least visually).}
	\label{fig:compare_ks_t3}
\end{figure}

\begin{figure}[h]
\centering
\begin{subfigure}{.49\textwidth}
  \centering
  \includegraphics[scale=0.3]{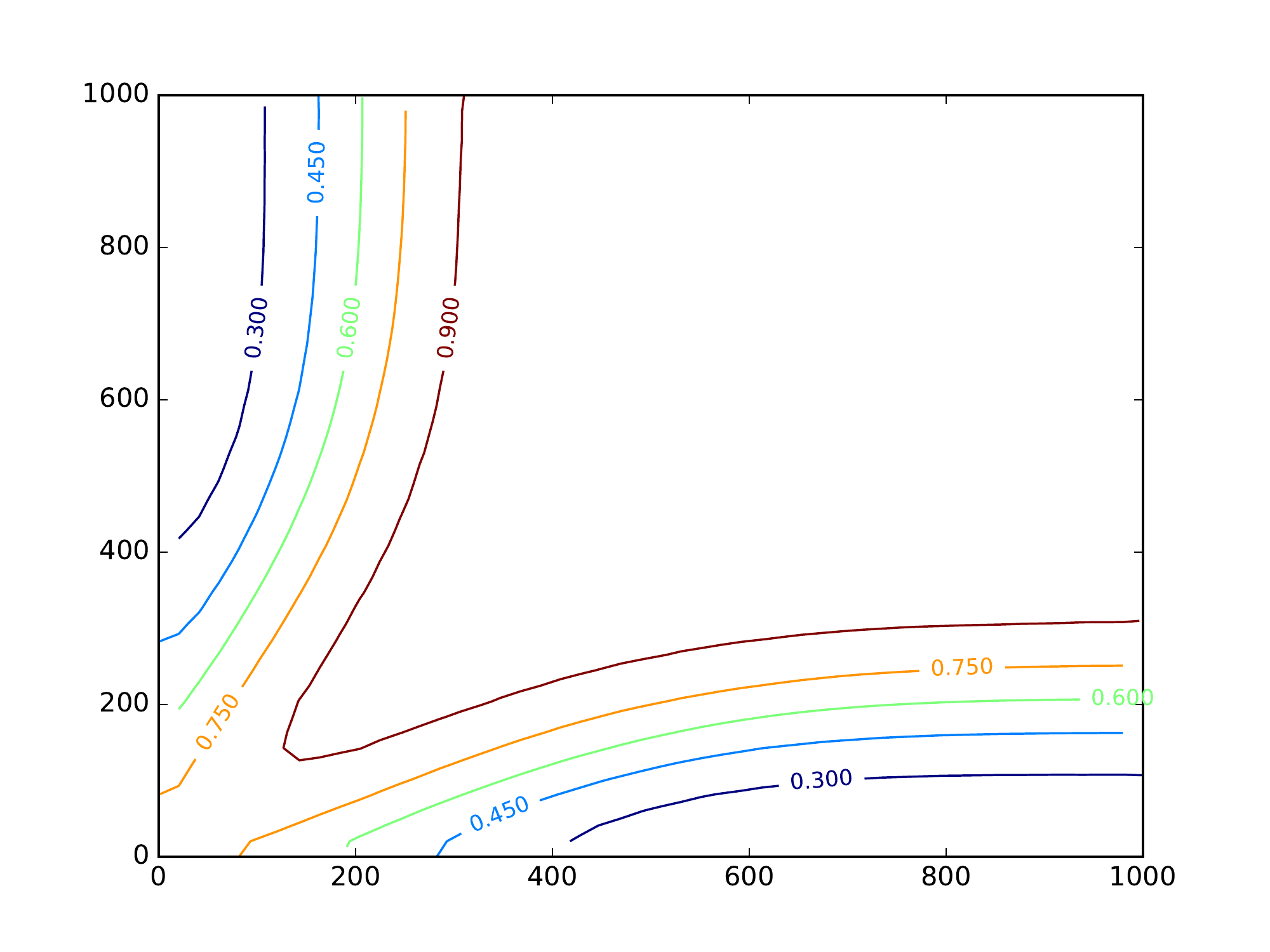}
  \caption{$k = 0$}
\end{subfigure}%
\begin{subfigure}{.49\textwidth}
  \centering
  \includegraphics[scale=0.3]{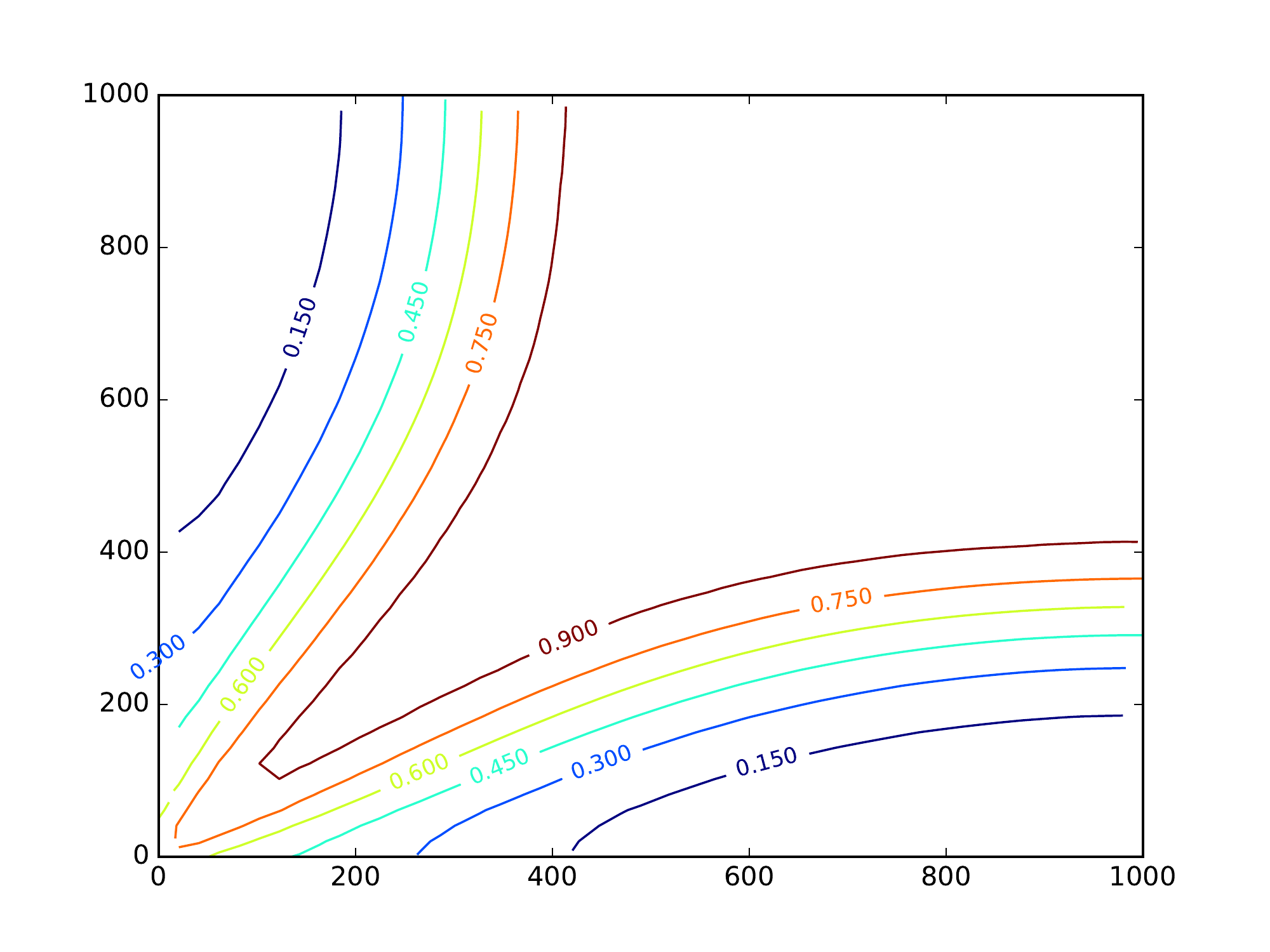}
  \caption{$k = 1$}
\end{subfigure}
\begin{subfigure}{.49\textwidth}
  \centering
  \includegraphics[scale=0.3]{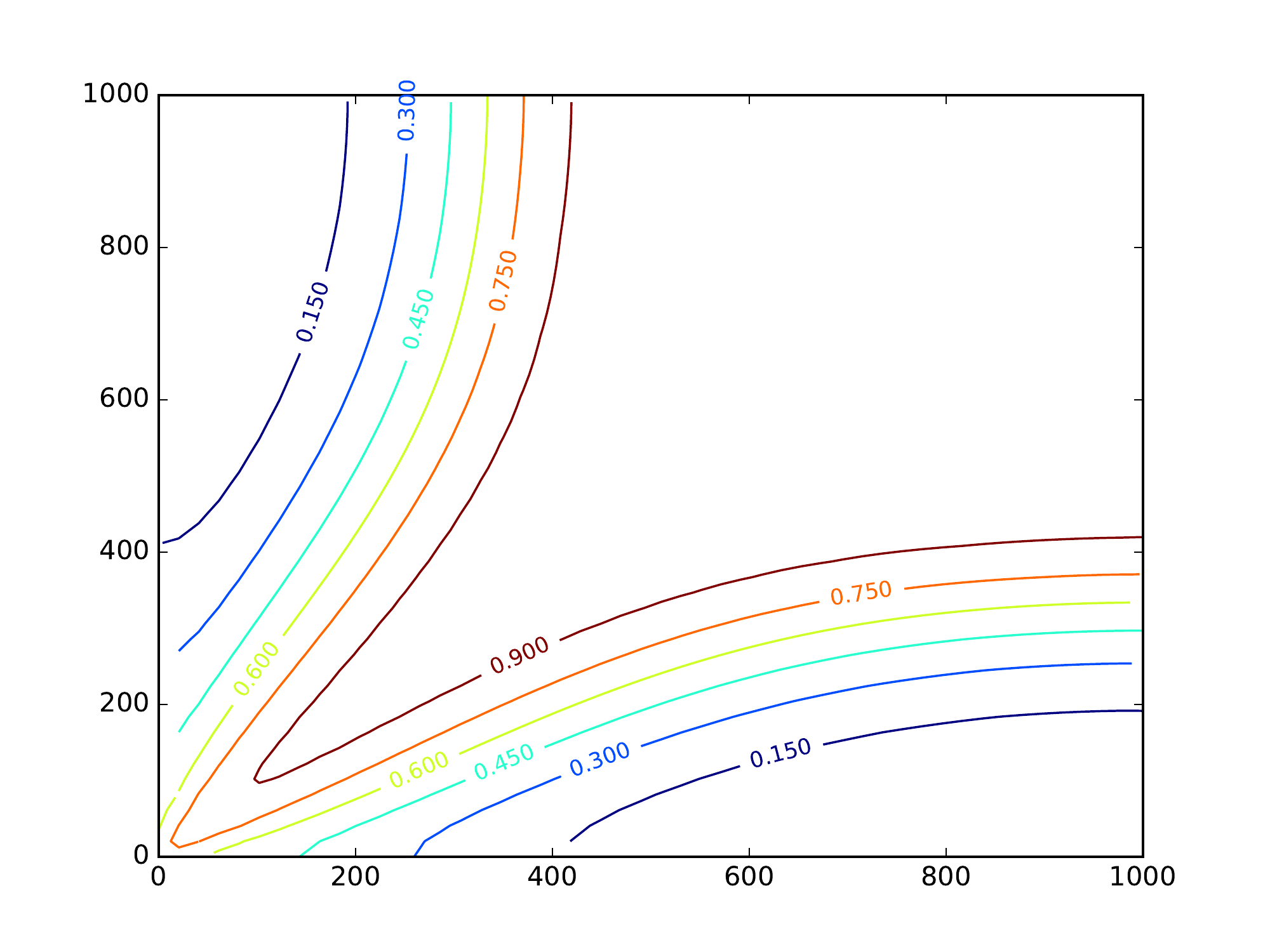}
  \caption{$k=2$}
\end{subfigure}%
\begin{subfigure}{.49\textwidth}
  \centering
  \includegraphics[scale=0.3]{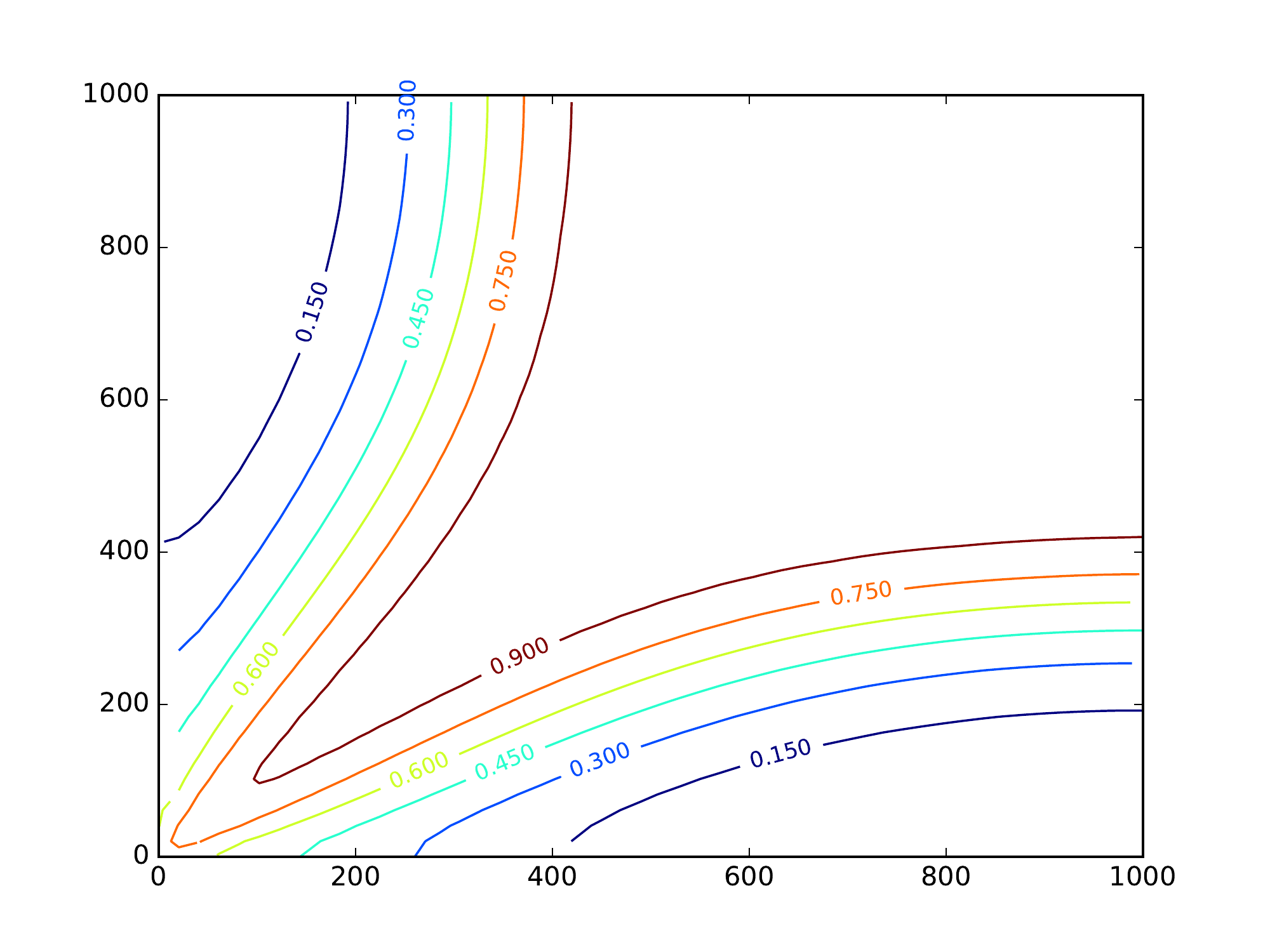}
  \caption{$k=3$}
\end{subfigure}
\caption{Comparison of the quality of the numerical approximation for $k = 0,1,2,3$ on a Cartesian mesh at time $t = 10$ years.}
\label{fig:compare_ks}
\end{figure}

\subsection{Computational cost}\label{sec:comp.cost}
High-order versions of the HHO scheme have been shown to produce very reliable results. This increased
accuracy however obviously comes with a higher computational cost than low-order schemes. The HHO
method has been designed to allow for a static condensation of the cell unknowns: by local Gaussian elimination, the systems \eqref{eqn:discrete_pressure_equation} and \eqref{eqn:discrete_advection_diffusion_equation}
can be expressed in terms of the face unknowns, resulting in
systems on the face unknowns only and with the same sparsity structure as the original equations.
On a given mesh $\scriptM_h = (\scriptT_h, \scriptF_h)$ with polynomials of degree $m$, the
global system is therefore of size
\[
\binom{m + d - 1}{d - 1}|\scriptF_h|,
\]
and has an $\mathcal O(m^{d-1})$ growth with respect to the polynomial degree.
Recall that the pressure is solved at a degree $m=2k$.
The cost of the time discretisations can be considered independent since it is clearly linear in the number of time-steps $N$. 

Figure \ref{fig:peaceman_computational_time} illustrates the relative costs of the methods for various degrees $k$. Depicted are the average times taken per step using $N = 100$ time-steps for the data given in Test \ref{test:peaceman2} on the mesh families in Table \ref{tab:mesh_parameters}. We emphasise that our implementation is not optimised for high performance, and that the tests are performed on a personal computer. These measurements are not intended to give an absolute estimate of the cost, only a comparison of running times of the various order schemes (this comparison is valid since all tests were done on the same computer). It is expected that, even if the times vary from one computer to the other, the relative positions of the curves corresponding to various $k$ will be similar to those in Figure \ref{fig:peaceman_computational_time}.

As predicted, the running times begin to grow very rapidly for fine meshes with high-order $k$. Since the number of degrees of freedom of the scheme is directly tied to the number of faces in the mesh, the execution time can be seen to be larger for meshes with a greater number of faces (see Table \ref{tab:mesh_parameters}).
Combined with the qualitative and quantitative results in Sections \ref{sec:numres} and \ref{sec:numres.comp},
these relative running times further supports our argument that the $k = 1$ scheme may be the best balance of accuracy and speed.

\begin{figure}[h]
	\centering
	\begin{subfigure}{.49\textwidth}
		\centering
		\includegraphics[scale=0.3]{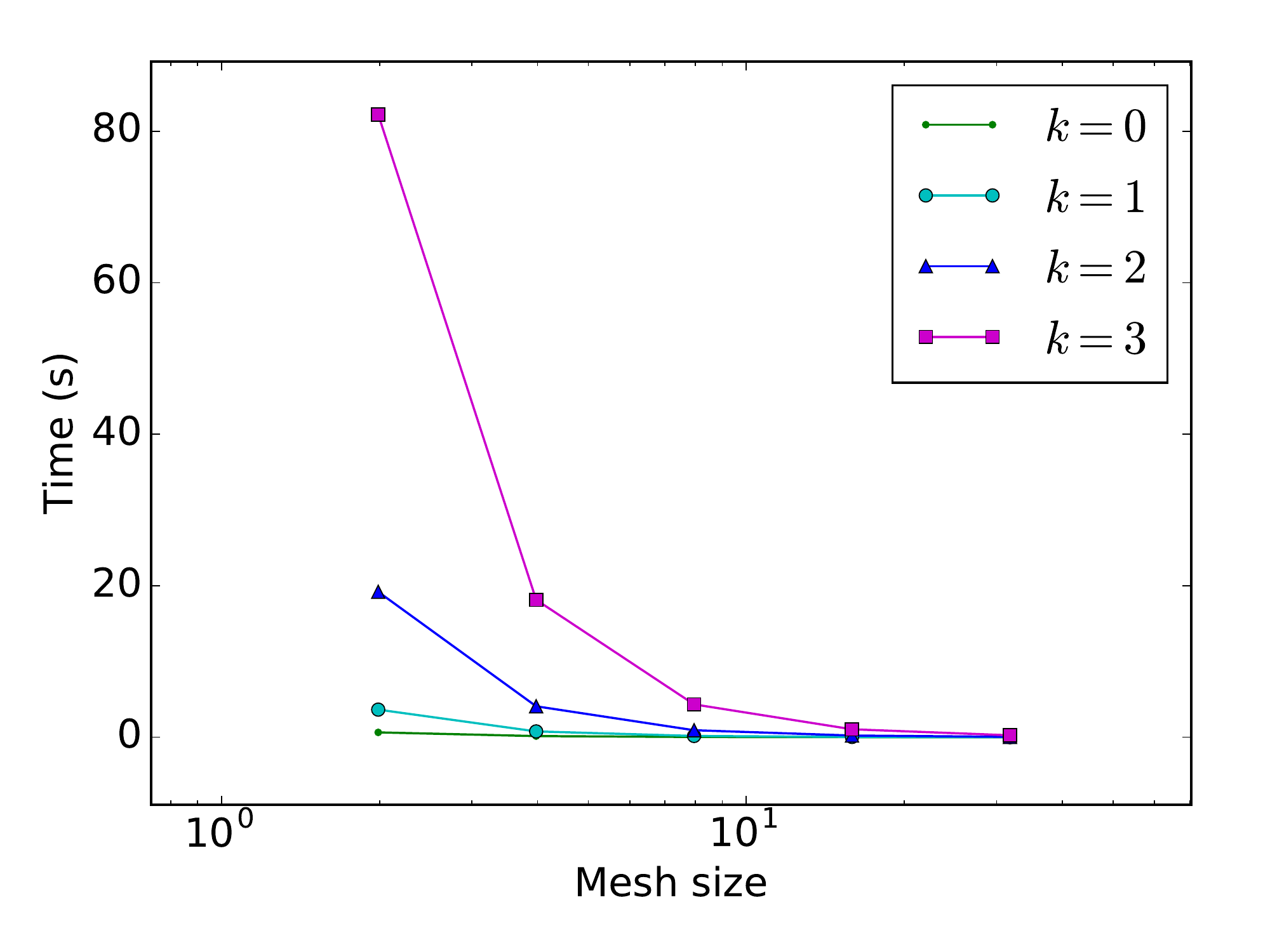}
		\caption{Execution time on triangular meshes}
	\end{subfigure}
	\begin{subfigure}{.49\textwidth}
		\centering
		\includegraphics[scale=0.3]{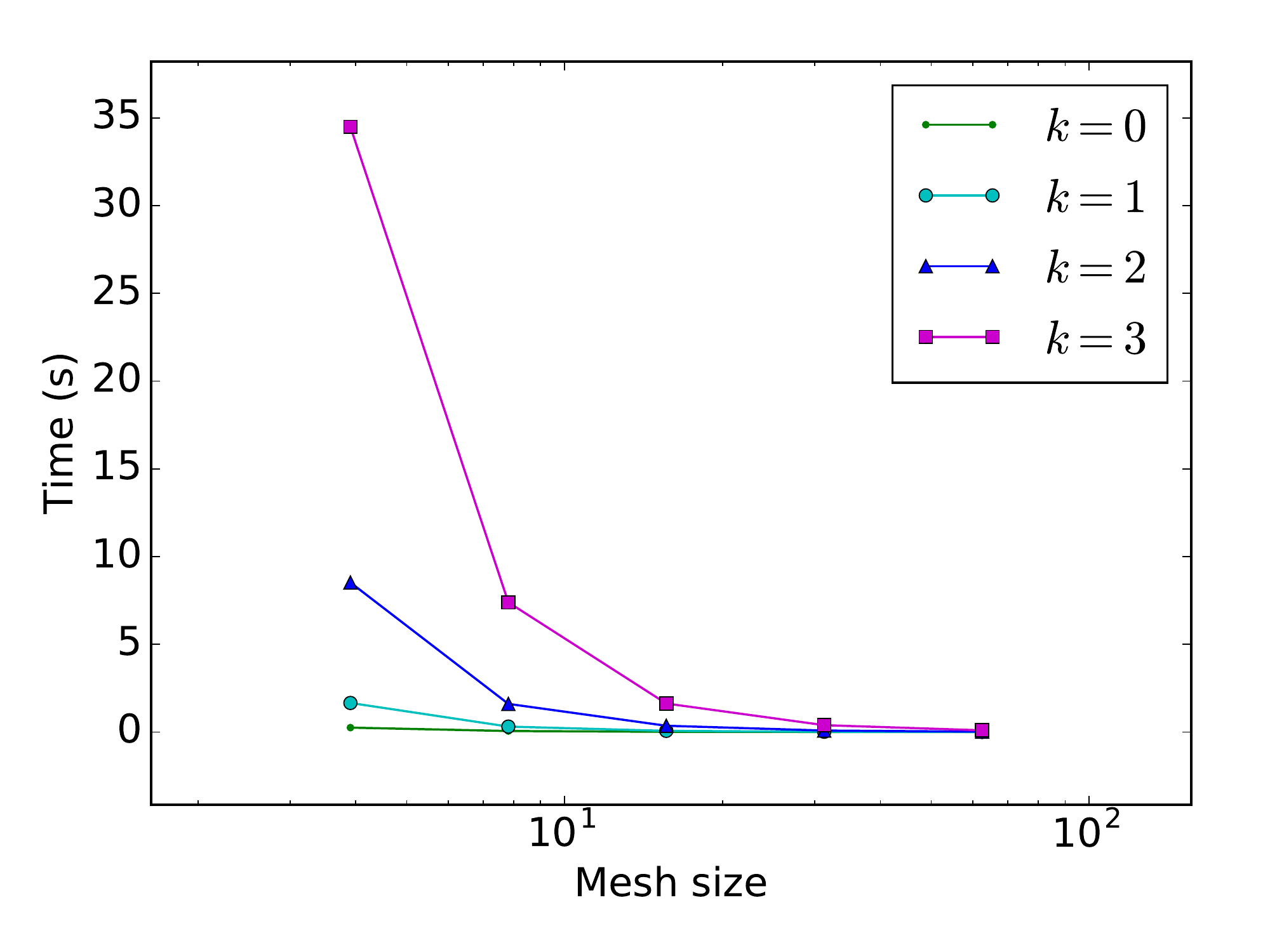}
		\caption{Execution time on Cartesian meshes}
	\end{subfigure}
	\begin{subfigure}{.49\textwidth}
		\centering
		\includegraphics[scale=0.3]{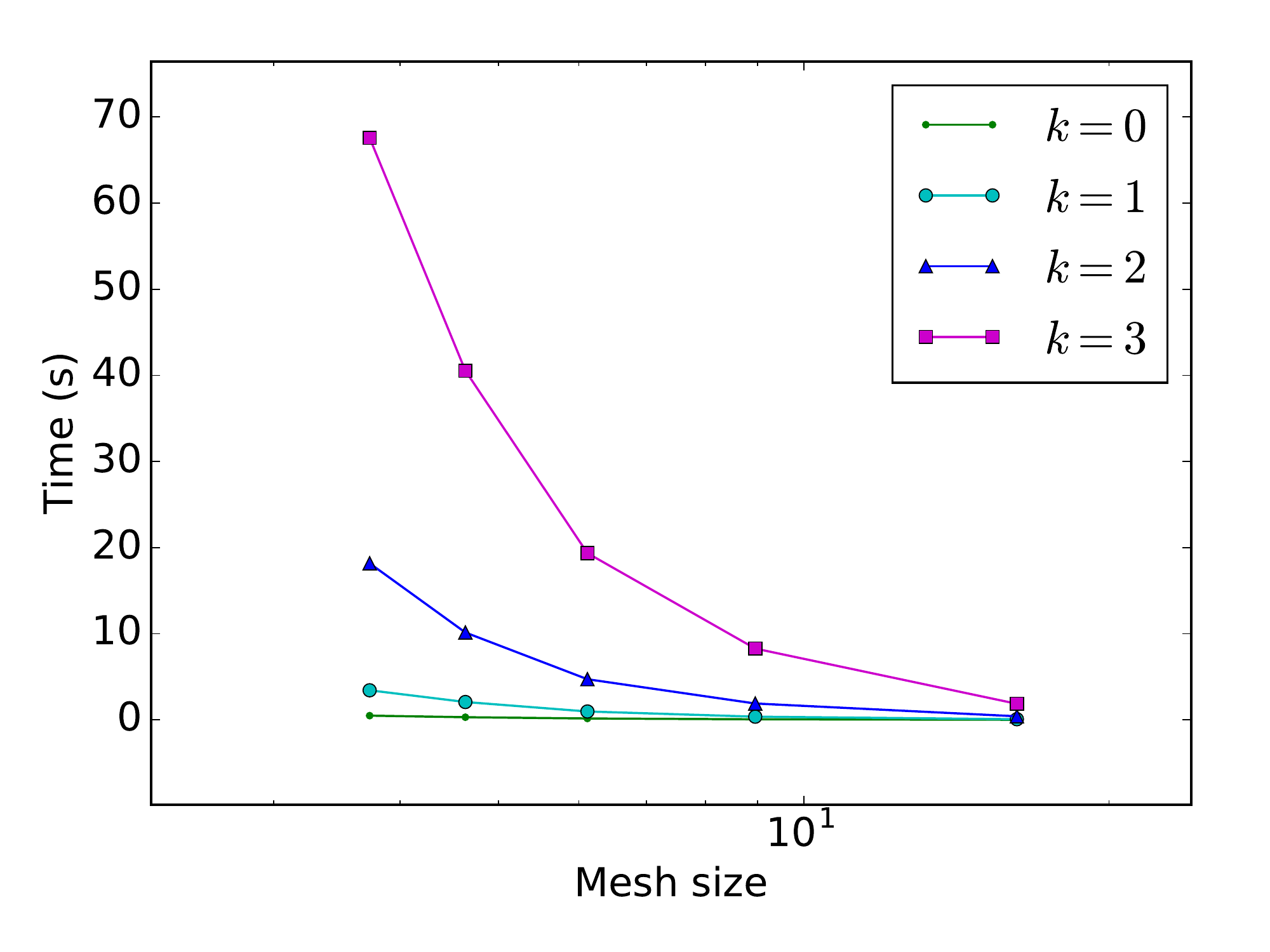}
		\caption{Execution time on Kershaw meshes}
	\end{subfigure}
	\begin{subfigure}{.49\textwidth}
		\centering
		\includegraphics[scale=0.3]{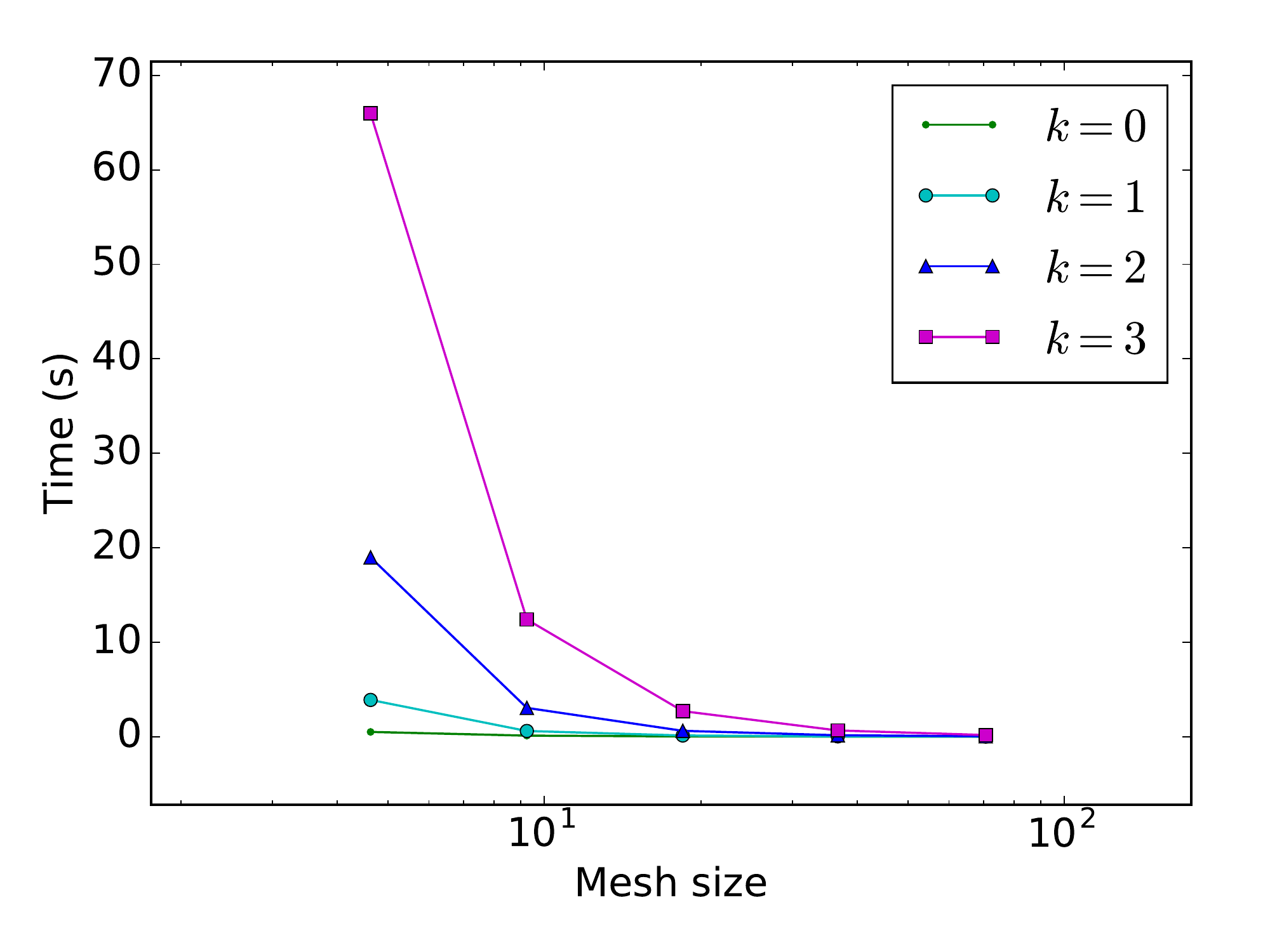}
		\caption{Execution time on hexagonal meshes}
	\end{subfigure}
	\caption{The average execution time for one time-step of Test \ref{test:peaceman2} on various meshes.}
	\label{fig:peaceman_computational_time}
\end{figure}

\section{Conclusion}\label{sec:conclusion}

We designed and implemented an arbitrary-order scheme for a miscible incompressible
flow model used in tertiary oil recovery. The scheme was based on the Hybrid High-Order method,
and is applicable on any kind of polygonal or polyhedral meshes.
To ensure the stability of the numerical approximation, the pressure equation has to be approximated
with an order twice as large as the order used for the concentration equation,
and special care must be taken in reconstructing advective Darcy fluxes from the approximate
pressure.

We produced several numerical tests on classical 2D test cases encountered in the literature. These
tests show that the best balanced of accuracy vs. computational cost is obtained by selecting a spatial order $k=1$ and
a Crank-Nicolson time stepping. The results are stable with respect to the geometry of the meshes, in the sense
that the total recovered oil is similar for all meshes and $k\ge 1$. Selecting an order $k=1$
fixes grid effects that are present with $k=0$ (as in low-order finite volume methods).

\section{Appendix: existence and stability of the solution to the scheme}\label{sec:appen}

Let $\scriptM_h = (\scriptT_h, \scriptF_h)$ be a mesh. As in \cite{di2011mathematical}, we take
a matching simplicial subdivision $\mathcal I_h$ of $\scriptM_h$ and $\varrho>0$
such that for any simplex $S\in\mathcal I_h$ of diameter $h_S$ and inradius $r_S$, $\varrho h_S\le r_S$, and
for all $T\in\scriptT_h$ and all $S\in\mathcal I_h$ such that $S\subset T$, $\varrho h_T \le h_S$.
In the following, $A\lesssim B$ means that $A\le CB$ for some $C$ depending only on $\Omega$ and $\varrho$
(not on $h$).



The following lemma states a stability result for the discrete elliptic bilinear form.
A proof can be found for $\Lambda={\rm Id}$ in \cite{di2014arbitrary}, and a sketch for
extending this to generic $\Lambda$ is given in \cite{di2015hybrid}.

\begin{lemma} \label{lemma:diffusive_bilinear_form_is_positive}
	Let $\bLambda$ be a bounded, symmetric, uniformly coercive tensor-valued function on $\Omega$. Then
 for any $\tf_h \in \dof^k_h$, the discrete diffusive bilinear form $\dbf_{h,\bLambda}$ \eqref{eqn:global_pressure_diffusion_bilinear_form} satisfies
	\begin{equation} \label{eqn:diffusive_bilinear_form_seminorm_inequality}
	\dbf_{h,\bLambda}(\tf_h, \tf_h) \gtrsim \sum_{T\in\scriptT_h} \| \bLambda^\frac{1}{2} \nabla \pf_T \|^2_{L^2(T)} + \sum_{T\in\scriptT_h} \sum_{F\in\scriptF_T} \frac{{\bLambda_{TF}}}{h_F} \|\pf_F - \pf_T \|^2_{L^2(F)}.
	\end{equation}
\end{lemma}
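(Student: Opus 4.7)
The proof goes by establishing the inequality cell-by-cell: the goal is to show that for each $T\in\scriptT_h$,
\[
\dbf_{T,\bLambda}(\tf_{T},\tf_{T}) \gtrsim \|\bLambda^{1/2}\nabla\pf_T\|^2_{L^2(T)} + \sum_{F\in\scriptF_T}\frac{\bLambda_{TF}}{h_F}\|\pf_F-\pf_T\|^2_{L^2(F)},
\]
then to sum over $T$. From the definition \eqref{eqn:pressure_diffusion_local_bilinear_form} one immediately has
\[
\dbf_{T,\bLambda}(\tf_{T},\tf_{T}) = \|\bLambda^{1/2}\nabla\ro_{T,\bLambda}^{m+1}\tf_T\|^2_{L^2(T)} + \stab_{\bLambda,T}(\tf_T,\tf_T),
\]
so the game is to bound both pieces on the right of the target inequality by these two nonnegative quantities.

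For the gradient piece, the plan is to test the defining equation \eqref{eqn:gradient_reconstruction} of $\ro_{T,\bLambda}^{m+1}$ against $w=\pf_T\in\Poly^m(T)\subset\Poly^{m+1}(T)$. This yields
\[
\|\bLambda^{1/2}\nabla\pf_T\|^2_{L^2(T)} = \int_T \bLambda\nabla\ro_{T,\bLambda}^{m+1}\tf_T\cdot\nabla\pf_T - \sum_{F\in\scriptF_T}\int_F(\pf_F-\pf_T)\,\nabla\pf_T\cdot\bLambda\nv_{TF}.
\]
The first term is handled by Cauchy--Schwarz. For the boundary term I would write $\nabla\pf_T\cdot\bLambda\nv_{TF}=(\bLambda^{1/2}\nabla\pf_T)\cdot(\bLambda^{1/2}\nv_{TF})$ and use the pointwise bound $|\bLambda^{1/2}\nv_{TF}|^2\le\bLambda_{TF}$ on $F$ coming from the very definition \eqref{def:LambdaF}, together with a discrete trace inequality to move from $L^2(F)$ to $L^2(T)$. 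A Young's inequality then gives
\[
\|\bLambda^{1/2}\nabla\pf_T\|^2_{L^2(T)} \lesssim \|\bLambda^{1/2}\nabla\ro_{T,\bLambda}^{m+1}\tf_T\|^2_{L^2(T)} + \sum_{F\in\scriptF_T}\frac{\bLambda_{TF}}{h_F}\|\pf_F-\pf_T\|^2_{L^2(F)}.
\]
(The implicit constants pick up a factor from the local anisotropy of $\bLambda$, which is standard in this setting.)

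It then suffices to show that the face jumps are themselves controlled by the stabilisation plus $\|\bLambda^{1/2}\nabla\ro_{T,\bLambda}^{m+1}\tf_T\|^2_{L^2(T)}$. Set $q:=\ro_{T,\bLambda}^{m+1}\tf_T-\pi^m_T\ro_{T,\bLambda}^{m+1}\tf_T$, so by \eqref{high_order_potential} we have $\RO_{T,\bLambda}^{m+1}\tf_T-\pf_T=q$. Since $\pf_F,\pf_T|_F\in\Poly^m(F)$, $\pi^m_F(\pf_F-\pf_T)=\pf_F-\pf_T$, and thus
\[
\pf_F-\pf_T = \pi^m_F(\pf_F-\RO_{T,\bLambda}^{m+1}\tf_T) + \pi^m_F q.
\]
Squaring, using the triangle inequality, summing over $F\in\scriptF_T$ with weights $\bLambda_{TF}/h_F$, and noting that the first contribution is exactly $\stab_{\bLambda,T}(\tf_T,\tf_T)$, it remains only to bound the $\pi^m_F q$ terms. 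Since $\pi^m_T q=0$ by construction and $q\in\Poly^{m+1}(T)$, a Poincaré-type inequality for polynomials with vanishing $\pi^m_T$-projection on the reference simplicial subdivision (rescaled via the regularity parameter $\varrho$) yields $\|q\|_{L^2(T)}\lesssim h_T\|\nabla q\|_{L^2(T)}$. Combined with a discrete trace inequality $\|q\|_{L^2(F)}^2\lesssim h_T^{-1}\|q\|_{L^2(T)}^2+h_T\|\nabla q\|_{L^2(T)}^2$ and the pointwise bound $\bLambda_{TF}|\nabla q|^2\lesssim\bLambda|\nabla q|\cdot|\nabla q|$ (using coercivity and boundedness of $\bLambda$), this gives
\[
\sum_{F\in\scriptF_T}\frac{\bLambda_{TF}}{h_F}\|\pi^m_F q\|^2_{L^2(F)} \lesssim \|\bLambda^{1/2}\nabla q\|^2_{L^2(T)} \lesssim \|\bLambda^{1/2}\nabla\ro_{T,\bLambda}^{m+1}\tf_T\|^2_{L^2(T)},
\]
where the last step uses an inverse inequality to absorb $\bLambda^{1/2}\nabla\pi^m_T\ro_{T,\bLambda}^{m+1}\tf_T$. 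Assembling these two bounds and summing over $T$ yields the claim.

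The main obstacle is the second step, and specifically the pair of estimates allowing one to control $\pi^m_F q$ by $\|\bLambda^{1/2}\nabla\ro_{T,\bLambda}^{m+1}\tf_T\|_{L^2(T)}$ uniformly in the mesh: this requires combining the Poincaré inequality for polynomials with vanishing cell projection, a trace inequality, and the fact that $\pi^m_T$ is stable in the $\bLambda^{1/2}\nabla(\cdot)$ norm (which is where the anisotropy of $\bLambda$ enters the implicit constants). All the other steps are Cauchy--Schwarz plus standard scaling arguments on shape-regular simplicial subdivisions.
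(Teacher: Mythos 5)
Your proof is correct and follows essentially the standard HHO coercivity argument: the paper itself does not prove this lemma but defers to \cite{di2014arbitrary,di2015hybrid}, and your two-step strategy (test the reconstruction identity \eqref{eqn:gradient_reconstruction} against $w=\pf_T$ to control $\|\bLambda^{1/2}\nabla\pf_T\|_{L^2(T)}$, then control the face jumps through the decomposition $\pf_F-\pf_T=\pi^m_F(\pf_F-\RO^{m+1}_{T,\bLambda}\tf_T)+\pi^m_F q$ with $q$ the zero-projection correction, using trace, Poincar\'e and inverse inequalities on the shape-regular simplicial subdivision) is precisely the argument found in those references. The only point worth flagging is one you already acknowledge: the hidden constants genuinely depend on the local anisotropy ratio of $\bLambda$, which is slightly at odds with the paper's stated convention that $\lesssim$ hides only constants depending on $\Omega$ and $\varrho$, but this dependence is unavoidable and is present in the cited proofs as well (and is harmless here since $\bkappa=\BK/\mu(c)$ has uniformly bounded anisotropy under \eqref{eqn:regularity_assumptions}).
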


The following lemma is the key ingredient in proving the stability of the solution
to the scheme. It holds true because of the specific choice of reconstructed Darcy velocity,
chosen to be compatible with the discretisation of the concentration equation.

\begin{lemma} \label{lemma:advective_bilinear_form_is_positive}
	Let $R$ be given by \eqref{eqn:concentration_reaction_terms} and $((\RU_T)_{T\in\scriptT_h}, (\fv_{TF})_{T\in\scriptT_h, F\in\scriptF_T})$ be the reconstructed Darcy velocity \eqref{eqn:discrete_fluxes}--\eqref{eqn:darcy_velocity_volumetric_reconstruction} corresponding to a solution $\dpdof_h^\hs$ of the discrete pressure equation \eqref{eqn:discrete_pressure_equation}. Then for any $\tf_h \in \dof^k_h$, the discrete advection--reaction bilinear form \eqref{eqn:global_advection_bilinear_form} satisfies
	\begin{equation} \label{eqn:advective_bilinear_form_seminorm_inequality}
	\dbf_{h,R,\RU}(\tf_h, \tf_h) \geq \int_\Omega \frac{2\Phi}{\Delta t} \pf^2_h + \sum_{T\in\scriptT_h} \sum_{F\in\scriptF_T} \frac{1}{2}\int_F |\fv_{TF}| (\pf_T - \pf_F)^2.
	\end{equation}
\end{lemma}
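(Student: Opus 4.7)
The plan is to decompose $\dbf_{h,R,\RU}(\tf_h,\tf_h)$ cell-by-cell, handle the advective derivative through integration by parts combined with a local mass conservation derived from Theorem \ref{thm:conservation_of_the_discrete_fluxes}, and then combine the resulting face terms with the stabilisation so that the one-sided $[\fv_{TF}]^-$ completes the $\fv_{TF}$ contribution into $|\fv_{TF}|$ through the elementary identity $\tfrac12\fv_{TF}+[\fv_{TF}]^- = \tfrac12|\fv_{TF}|$.

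For each $T$, taking $w=\pf_T\in\Poly^k(T)$ in Definition \ref{def:advective_derivative} and using $(\RU_T\cdot\nabla\pf_T)\pf_T = \tfrac12\RU_T\cdot\nabla\pf_T^2$ together with integration by parts yields
\begin{equation*}
-\int_T \pf_T\,\ad^k_{T,\RU}\tf_T = \tfrac12\int_T \divergence(\RU_T)\pf_T^2 - \tfrac12\sum_{F\in\scriptF_T}\int_F (\RU_T\cdot\nv_{TF})\pf_T^2 + \sum_{F\in\scriptF_T}\int_F \fv_{TF}(\pf_T-\pf_F)\pf_T.
\end{equation*}
The volumetric divergence is removed via the local mass balance
\begin{equation*}
-\int_T \RU_T\cdot\nabla\varphi + \sum_{F\in\scriptF_T}\int_F \fv_{TF}\,\varphi = \int_T (q^+-q^-)\,\varphi \qquad\forall\varphi\in\Poly^{2k}(T),
\end{equation*}
which follows from Theorem \ref{thm:conservation_of_the_discrete_fluxes} applied to the pressure equation tested against the discrete function concentrated in $T$ with cell value $\varphi$ and zero face values (the test lifts from $\dof^{2k}_{h,*}$ to all of $\dof^{2k}_h$ because $\dbf_{h,\bkappa}$ annihilates constants and $q^+-q^-$ has zero mean). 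Substituting $\varphi=\pf_T^2\in\Poly^{2k}(T)$, integrating by parts on the left, and using the identity $2\pf_T(\pf_T-\pf_F)=(\pf_T-\pf_F)^2+(\pf_T^2-\pf_F^2)$ on the remaining face contribution produces
\begin{equation*}
-\int_T \pf_T\,\ad^k_{T,\RU}\tf_T = \tfrac12\int_T(q^+-q^-)\pf_T^2 + \tfrac12\sum_{F\in\scriptF_T}\int_F \fv_{TF}(\pf_T-\pf_F)^2 - \tfrac12\sum_{F\in\scriptF_T}\int_F \fv_{TF}\pf_F^2.
\end{equation*}

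Summing over $T$, the last face term vanishes: interior faces contribute $\int_F (\fv_{T_1 F}+\fv_{T_2 F})\pf_F^2 = 0$ thanks to the conservation $\fv_{T_1 F}+\fv_{T_2 F}=0$ (in the $\pi^{2k}_F$-projected sense), and boundary faces contribute zero thanks to the discrete Neumann condition $\pi^{2k}_F\fv_{TF}=0$. Adding the reaction integral $\int_\Omega R\pf_h^2 = \int_\Omega\tfrac{2\Phi}{\Delta t}\pf_h^2 + \int_\Omega q^-\pf_h^2$ and the stabilisation $\sum_{T,F}\int_F[\fv_{TF}]^-(\pf_T-\pf_F)^2$, and combining the two face contributions via the aforementioned identity, gives
\begin{equation*}
\dbf_{h,R,\RU}(\tf_h,\tf_h) = \int_\Omega \tfrac{2\Phi}{\Delta t}\pf_h^2 + \tfrac12\int_\Omega(q^++q^-)\pf_h^2 + \tfrac12\sum_{T\in\scriptT_h}\sum_{F\in\scriptF_T}\int_F |\fv_{TF}|(\pf_T-\pf_F)^2,
\end{equation*}
and discarding the non-negative middle term (recall $q^\pm\ge 0$) yields the claim.

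The chief technical obstacle is establishing the two conservation properties of $\fv_{TF}$, namely interior-face balance $\fv_{T_1 F}+\fv_{T_2 F}=0$ and boundary-face vanishing, both in the $\pi^{2k}_F$-projected sense. These are the raison d'\^etre of the flux reconstruction of \cite{cockburn2016bridging} employed in Section \ref{sec:discrete.fluxes}, and must be extracted by testing the pressure equation against $\tf_h$ whose only nonzero component is a single face polynomial, carefully exploiting the definition \eqref{eqn:discrete_fluxes} including the $\br^{2k,\dagger}_{\partial T}$ correction that precisely encodes the discrete conservation structure; this is also where the choice of order $2k$ for the pressure (so that $\pf_F^2\in\Poly^{2k}(F)$ lies in the projection space) enters in an essential way.
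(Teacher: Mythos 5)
Your proof is correct and rests on the same pillars as the paper's: the conservation identity of Theorem \ref{thm:conservation_of_the_discrete_fluxes} combined with the discrete pressure equation \eqref{eqn:discrete_pressure_equation} tested against degree-$2k$ data built from the squares $\pf_T^2,\pf_F^2$ (exactly where the order $2k$ of the pressure discretisation is essential), followed by the identities $2a(a-b)=(a^2-b^2)+(a-b)^2$ and $\tfrac12\fv_{TF}+[\fv_{TF}]^-=\tfrac12|\fv_{TF}|$. The only real difference is packaging: the paper tests the pressure equation once with the single element $\widetilde{\tf}_h$ of \eqref{def:wtilde}, carrying cell values $\tfrac12\pf_T^2$ \emph{and} face values $\tfrac12\pf_F^2$ simultaneously, so the terms $\int_F\fv_{TF}\pf_F^2$ are absorbed into $l^{p,\hs}_h(\widetilde{\tf}_h)$ without ever being isolated; you instead split the same information into a cell-supported local mass balance plus separate interior-face cancellation and boundary-face vanishing of $\pi^{2k}_F\fv_{TF}$. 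Those two flux-conservation facts do hold and follow, exactly as you say, from face-supported test data in Theorem \ref{thm:conservation_of_the_discrete_fluxes} together with the pressure equation, and your observation that \eqref{eqn:discrete_pressure_equation} extends from $\dof^{2k}_{h,*}$ to all of $\dof^{2k}_h$ (since $\dbf_{h,\bkappa}$ annihilates constants and $q^+-q^-$ has zero mean) is correct and is in fact also used, tacitly, by the paper, since $\widetilde{\tf}_h$ is not zero-average either. One blemish: your first display invokes $\divergence(\RU_T)$ and $\RU_T\cdot\nv_{TF}$, which are not licensed for merely bounded measurable $\bkappa$ ($\RU_T=-\bkappa\nabla r^{2k+1}_{T,\bkappa}\dpdof_T$ need not belong to $H(\divergence;T)$); fortunately that integration by parts is undone when you substitute the mass balance, and your per-cell identity follows directly from the balance with $\varphi=\pf_T^2$ without ever differentiating $\RU_T$, so the detour is superfluous rather than fatal.
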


\begin{proof}
{\scshape Proof}: 
	Equation \eqref{eqn:local_advection_bilinear_form} and \eqref{eqn:global_advection_bilinear_form} give
	\begin{equation}
	\dbf_{h,R, \RU}(\tf_h, \tf_h) = \sum_{T\in\scriptT_h} \left\{ -\lip{T}{\pf_T}{(\ad_{T,\RU}^k \tf_{T})} + \lip{T}{R\pf_T^2}{}  + \stab^-_{\RU,T}(\tf_{T}, \tf_{T}) \right\} .
	\end{equation}
	Expanding via the definitions of the advective derivative $\ad_{T,\RU}^k$ (Definition \ref{def:advective_derivative}), the advection stabilisation $\stab^-_{\RU,T}$ \eqref{eqn:advective_stabilisation_form} and the reaction terms $R$
\eqref{eqn:concentration_reaction_terms} and collecting cell and face terms, we find
	\begin{equation} \label{eqn:advection_stability_terms}
	\begin{split}
	\dbf_{h,R, \RU}&(\tf_h, \tf_h) =\sum_{T\in\scriptT_h} \left\{ \lip{T}{\frac{2\Phi}{\Delta t}}{\pf_T^2} -  \lip{T}{(\RU_T \cdot \nabla \pf_T)}{\pf_T} + \lip{T}{q^-(t^\hs,\cdot) \pf_T^2}{} \right\} \\& + \sum_{T\in\scriptT_h} \sum_{F \in \scriptF_T} \left\{ -\lip{F}{\fv_{TF} (\pf_F - \pf_T)}{\pf_T} +  \lip{F}{[\fv_{TF}]^-\left(\pf_F - \pf_T\right)^2}{} \right\}.
	\end{split}
	\end{equation}
	Considering the second cell term of \eqref{eqn:advection_stability_terms}, we notice that
	\begin{equation} \label{eqn:stability_gradient_factor}
	-(\RU_T \cdot \nabla \pf_T)\pf_T =  -\RU_T \cdot \nabla \left(\frac{1}{2}\pf_T^2 \right).
	\end{equation}
	Next we make use of the identity $(a-b)a = a^2 - ab = \frac{1}{2}(a^2 - b^2) + \frac{1}{2}(a-b)^2$
	to rewrite the first face term of \eqref{eqn:advection_stability_terms} as
	\begin{equation} \label{eqn:stability_flux_inequality}
	\begin{split}
	-\lip{F}{(\fv_{TF} (\pf_F - \pf_T))}{\pf_T} &= \frac{1}{2} \lip{F}{\fv_{TF} (\pf_T^2 - \pf_F^2)}{} + \frac{1}{2} \lip{F}{\fv_{TF} (\pf_T - \pf_F)^2}{}.
	\end{split}
	\end{equation}
	
	Inspired by \eqref{eqn:stability_gradient_factor} and \eqref{eqn:stability_flux_inequality}, we set for any cell $T\in\scriptT_h$,
	\begin{equation}\label{def:wtilde}
	\widetilde{\tf}_{T} = \frac{1}{2} \left( \pf_T^2, \left( \pf_F^2 \right)_{F\in\scriptF_T} \right) \in \dof^{2k}_T,
	\end{equation}
	and use the conservation of the fluxes (Theorem \ref{thm:conservation_of_the_discrete_fluxes}) with $\widetilde{\tf}_T$
instead of $\tf_T$ to write
	\begin{equation} \label{eqn:stability_conservation_term}
	\begin{split}
	 - \frac{1}{2}&\lipd{T}{\RU_T}{\nabla \left(\pf_T^2 \right)} + \frac{1}{2} \sum_{F\in\scriptF_T} \lip{F}{\fv_{TF} (\pf_T^2 - \pf_F^2)}{}  \\
	&= - \lipd{T}{\RU_T}{\nabla \widetilde{\pf}_T } + \sum_{F\in\scriptF_T} \lip{F}{\fv_{TF} (\widetilde{\pf}_T - \widetilde{\pf}_F)}{}  = \dbf_{T,\bkappa^\hs} (\dpdof_{T}^\hs, \widetilde{\tf}_{T}).
	\end{split}
	\end{equation}
	Summing over the cells and using the fact that $\dpdof_{T}^\hs$ solves the discrete pressure equation \eqref{eqn:discrete_pressure_equation}, we deduce that
	\begin{equation} \label{eqn:stability_conservation_term2}
	\begin{split}
	- \frac{1}{2} \sum_{T\in\scriptT_h} &\lipd{T}{\RU_T}{\nabla \left(\pf_T^2 \right)} + \frac{1}{2} \sum_{T\in\scriptT_h} \sum_{F\in\scriptF_T} \lip{F}{\fv_{TF} (\pf_T^2 - \pf_F^2)}{}  \\
	&= l_h^{p,\hs}(\widetilde{\tf}_h) 
	= \sum_{T\in\scriptT_h} \frac{1}{2} \lip{T}{(q^+(t^\hs, \cdot) - q^-(t^\hs, \cdot))}{\pf_T^2}.
	\end{split}
	\end{equation}
	
	Gathering the results from \eqref{eqn:stability_flux_inequality}--\eqref{eqn:stability_conservation_term2} and substituting into \eqref{eqn:advection_stability_terms}, we have
	\begin{align}
	&\dbf_{h,R, \RU}(\tf_h, \tf_h)  \nonumber\\
	&= \sum_{T\in\scriptT_h} \left\{ \lip{T}{\frac{2 \Phi}{\Delta t}}{\pf_T^2} + \frac{1}{2} \lip{T}{(q^+(t^\hs, \cdot) - q^-(t^\hs, \cdot))}{\pf_T^2} + \lip{T}{q^-(t^\hs,\cdot) \pf_T^2}{} \right\}  \nonumber \\
	&+ \sum_{T\in\scriptT_h} \sum_{F \in \scriptF_T} \left\{\frac{1}{2} \lip{F}{\fv_{TF} (\pf_T - \pf_F)^2}{} + \lip{F}{[\fv_{TF}]^-\left(\pf_F - \pf_T\right)^2}{} \right\}.
 \label{eqn:stability_after_conservation}
	\end{align}
	Combining the second and third cell terms of \eqref{eqn:stability_after_conservation}, and using $\frac{1}{2} \fv_{TF} + [\fv_{TF}]^- = \frac{1}{2} |\fv_{TF}|$ and the non-negativity of $q^+$ and $q^-$, we find
	\begin{equation*}
	\begin{split}
	\dbf_{h,R, \RU}(\tf_h, \tf_h) ={}& \sum_{T\in\scriptT_h} \left\{ \lip{T}{\frac{2 \Phi}{\Delta t}}{\pf_T^2} + \frac{1}{2} \lip{T}{(q^+(t^\hs, \cdot) + q^-(t^\hs, \cdot))}{\pf_T^2} \right\}   \\
	&+ \sum_{T\in\scriptT_h} \sum_{F \in \scriptF_T} \frac{1}{2} \lip{F}{|\fv_{TF}| (\pf_T - \pf_F)^2}{}\\
	\geq{}& \lip{\Omega}{\frac{2\Phi}{\Delta t}}{} \pf_h^2 + \sum_{T\in\scriptT_h} \sum_{F \in \scriptF_T} \frac{1}{2} \lip{F}{|\fv_{TF}| (\pf_T - \pf_F)^2}{}.
	\end{split}
	\end{equation*}
	The proof is complete.
\end{proof}

\begin{remark}[Order $2k$ on the pressure, and choice of the Darcy fluxes]\label{rem:2k.flux}
The reason for discretising the pressure equation with an HHO scheme of order $2k$, instead of $k$, 
is found in \eqref{eqn:stability_conservation_term}. Obtaining this relation requires
the usage of $\widetilde{\tf}_T$, defined by \eqref{def:wtilde} and belonging to $\dof^{2k}_T$,
into \eqref{eqn:conservation_of_the_discrete_fluxes}.

Equation \eqref{eqn:stability_conservation_term} is an essential component of the
stability proof, and it also justifies our choice of Darcy flux and volumetric velocity \eqref{eqn:discrete_fluxes}
and \eqref{eqn:darcy_velocity_volumetric_reconstruction}.
\end{remark}

We are now ready to prove the existence, uniqueness and stability of the solution to the scheme.

\begin{proof}[Proof of Theorem \ref{thm:stability_of_crank_nicolson}]
	Let us first assume that we have a solution $(\dpdof_h,\dc_h)$ to the scheme, and
let us prove the \emph{a priori} estimate \eqref{est:cN}.
	By \eqref{eqn:discrete_advection_diffusion_equation}, we have for all $\tf_h \in \dof^k_h$
	\begin{equation}
	\dbf_{h,\dD, R, \RU}(\dch_h, \tf_h) = l^{c,\hs}_h(\tf_h).
	\end{equation}
	Select $\tf_h = \dch_h$ as the test function and expand by the definitions of the discrete linear forms to write
	\begin{equation*}
	\dbf_{h,\dD}(\dch_h, \dch_h) + \dbf_{h,R, \RU}(\dch_h, \dch_h) = l^{c,\hs}_h(\dch_h).
	\end{equation*}
	Using Lemma \ref{lemma:diffusive_bilinear_form_is_positive} (with $\bLambda=\dD$)
	and Lemma \ref{lemma:advective_bilinear_form_is_positive},
	the definition \eqref{eqn:concentration_rhs_linear_form} of $l^{c,\hs}$ then yields
	\begin{equation}\label{est:for_existence}
	\begin{split}
	\lip{\Omega}{\frac{2\Phi (\pch_h)^2}{\Delta t}}{} 
	+\sum_{T\in\scriptT_h}\sum_{F\in\scriptF_h} \frac{\beta}{h_F}\|\pc_F^\hs-\pc_T^\hs \|_{L^2(F)}^2\\
	\leq \lip{\Omega}{\left(q^+(t^\hs, \cdot) \hat{c}(t^\hs, \cdot) + \frac{2\Phi}{\Delta t}\pc_h^n\right)}{\pch_h}
	\end{split}
	\end{equation}
	where $\beta>0$ is a coercivity constant of $\dD$ ($\beta$ depends on $\Phi$, $d_m$, $d_l$ and $d_t$).
	Gathering the time-stepping terms together and dropping the second term in the left-hand side, we then write
	\begin{equation} \label{eqn:stability_inequality}
	\lip{\Omega}{\frac{2\Phi (\pch_h - \pc^n_h)}{\Delta t}}{\pch_h} \leq  \lip{\Omega}{\left(q^+(t^\hs, \cdot) \hat{c}(t^\hs, \cdot)\right)}{\pch_h}.
	\end{equation}
	Recalling the definition of the half time-stepped concentration \eqref{eqn:half_time_concentration}, we easily deduce
	\[
	\lip{\Omega}{\frac{2\Phi (\pch_h - \pc^n_h)}{\Delta t}}{\pch_h}
	= \lip{\Omega}{\Phi \frac{(\pc^{n+1}_h)^2 - (\pc^n_h)^2}{2\Delta t}}{}.
	\]
	Hence, using the Cauchy--Schwarz and Young's inequalities in the right-hand side of \eqref{eqn:stability_inequality} yield,
	for any $\varepsilon>0$,
	\[
	\begin{split}
	\lip{\Omega}{\Phi \frac{(\pc^{n+1}_h)^2 - (\pc^n_h)^2}{2\Delta t}}{} \leq{}&  \| q^+(t^\hs, \cdot) \hat{c}(t^\hs, \cdot) \|_{L^2(\Omega)} \|\pch_h \|_{L^2(\Omega)}\\
	\leq{}&  \frac{1}{2\varepsilon}\| q^+(t^\hs, \cdot) \hat{c}(t^\hs, \cdot) \|_{L^2(\Omega)}^2 
+ \frac{\varepsilon}{2}\|\pch_h \|_{L^2(\Omega)}^2.
	\end{split}
	\]
	Summing over the time steps $n=0,\ldots,N-1$, the sum telescopes in the left-hand side. Using $|\widehat{c}|\le 1$
	and, by convexity of the square function,
	\[
	(\pch_h)^2=\left(\frac{\pc_h^n+\pc_h^{n+1}}{2}\right)^2\le \frac{(\pc_h^n)^2+(\pc_h^{n+1})^2}{2}
	\]
	we infer
	\[
	\begin{split}
	\lip{\Omega}{\Phi \frac{(\pc^{N}_h)^2 - (\pc^0_h)^2}{2\Delta t}}{}
	\le{}& \frac{N}{2\varepsilon}\| q^+\|_{L^\infty(0,t_f;L^2(\Omega))}^2 \\
	&	+ \frac{\varepsilon}{4}\sum_{n=0}^{N-1}\left(\|\pc_h^n \|_{L^2(\Omega)}^2+\|\pc_h^{n+1} \|_{L^2(\Omega)}^2\right)\\
	\le{}& \frac{N}{2\varepsilon}\| q^+\|_{L^\infty(0,t_f;L^2(\Omega))}^2
	+ \frac{\varepsilon}{2}\sum_{n=0}^{N}\|\pc_h^n \|_{L^2(\Omega)}^2.
	\end{split}
	\]
	Applying the boundedness of $\Phi$ \eqref{eqn:permeability_assumptions},
	\[
	\int_\Omega{\Phi_* \frac{(\pc^{N}_h)^2}{2\Delta t}} \leq \int_\Omega{\Phi_*^{-1} \frac{(\pc^0_h)^2}{2\Delta t}} + \frac{N}{2\varepsilon} \|q^+\|^2_{L^\infty(0,t_f;L^2(\Omega))} + \frac{\varepsilon}{2} \sum_{n=0}^{N}\int_\Omega (\pc^n_h)^2.
	\]
	Multiplying both sides by $2\Delta t/\Phi_*$ and recalling that $N \Delta t = t_f$ yields
	\begin{equation} \label{eqn:stability_bounds3}
	\int_\Omega (\pc^{N}_h)^2 \leq \int_\Omega \frac{(\pc^0_h)^2}{\Phi_*^2} + \frac{t_f}{\varepsilon\Phi_* } \|q^+\|^2_{L^\infty(0,t_f;L^2(\Omega))} + \frac{\varepsilon t_f }{\Phi_*N} \sum_{n=0}^{N} \int_\Omega (\pc^n_h)^2.
	\end{equation}
Take $\varepsilon=\frac{\Phi_*}{2t_f}$, so that $\frac{\varepsilon t_f}{\Phi_* N}=\frac{1}{2N}<1$.
Applying the Gronwall inequality of \cite[Lemma 5.1]{HR90} yields
\[
\| \pc^N_h\|_{L^2(\Omega)}^2\le \exp\left(\frac{1}{2N}\sum_{n=0}^N \frac{2N}{2N-1}\right)
\left(\frac{\|\pc^0_h\|_{L^2(\Omega)}^2}{\Phi_*^2} + \frac{t_f}{\varepsilon\Phi_* } \|q^+\|^2_{L^\infty(0,t_f;L^2(\Omega))}\right).
\]
The proof of \eqref{est:cN} is complete since $\frac{t_f}{\varepsilon\Phi_* }=\frac{2t_f^2}{\Phi_*^2}$
and $\frac{1}{2N}\sum_{n=0}^N \frac{2N}{2N-1}=\frac{N+1}{2N-1}\le 2$. The estimate was obtained
for $\pc_h^N$ but the same reasoning shows that it holds for $\pc_h^n$ for all $n=0,\ldots,N$.

\medskip

The existence and uniqueness of $(\dpdof_h,\dc_h)$ follows easily. At each iteration of Algorithm \ref{algo1},
$\dpdof_h^\hs$ is sought as a solution of the linear system \eqref{eqn:discrete_pressure_equation}--\eqref{eqn:discrete_pressure_equation:norm}.
If $l^{p,\hs}_h=0$, plugging $\tf_h=\dpdof_h^\hs$ in \eqref{eqn:discrete_pressure_equation} and
using Lemma \ref{lemma:diffusive_bilinear_form_is_positive} shows that the only possible solution to
this linear system is zero
(by \eqref{eqn:diffusive_bilinear_form_seminorm_inequality}, all cell unknowns must be constant and, working from
neighbour to neighbour, equal to all face unknowns and to all other cell unknowns; then
\eqref{eqn:discrete_pressure_equation:norm} fixes this constant uniform value to zero).
Hence, the matrix corresponding to \eqref{eqn:discrete_pressure_equation}--\eqref{eqn:discrete_pressure_equation:norm} is invertible, which means that this system has a unique solution $\dpdof_h^\hs$ at each time step.
After the pressure is fixed, $\dc_h^\hs$ is sought as a solution to the linear
equation \eqref{eqn:discrete_advection_diffusion_equation}. If $l^{c,\hs}_h=0$,
the right-hand side of \eqref{est:for_existence} vanishes, which shows that all cell and face
degrees of freedom are equal to 0.
Hence, the matrix of \eqref{eqn:discrete_advection_diffusion_equation} has a trivial
kernel, which shows the existence and uniqueness of $\dc_h^\hs$ solution to this equation.

\end{proof}

\section{Appendix: Implementation of the Scheme}\label{appen:implementation}

We present here algorithms for computing the local operators that define the numerical scheme. A fully functional implementation of the scheme in \texttt{C++} along with all of the tests present in Section \ref{sec:tests} can be found at 

\centerline{
\href{https://github.com/DanielLiamAnderson/hho-peaceman}{\texttt{https://github.com/DanielLiamAnderson/hho-peaceman}}.}

 The code for handling the mesh is a preliminary version of what later became the \texttt{DiSk++} library by Cicuttin et al. \cite{CICUTTIN2017}.

\subsection{A basis for the function spaces}

In order to realise the algorithms for solving the pressure and concentration equations, we first need to express our function spaces and test functions concretely. We recall the space $\Poly^m(K)$ of polynomials of degree $\le m$ over the domain $K$, and decompose its elements in terms of the following basis functions. Denote by $(\bar{x}_K, \bar{y}_K)$ the centre of mass of $K$, and by $h_K$ its diameter. The basis functions for 2D elements $T \in \scriptT_h$ are given by
\begin{equation} \label{eqn:polynomial_basis}
\varphi^T_{r,s} : T \to \R, \qquad \varphi^T_{r,s}(x,y) = \left( \frac{x - \bar{x}_T}{h_T} \right)^{r} \left( \frac{y - \bar{y}_T}{h_T} \right)^{s},
\end{equation}
for $r,s \geq 0$ and $r + s \leq m$. The basis functions for 1D elements $F \in \scriptF_h$ are given by
\begin{equation}
\varphi^F_r : F \to \R, \qquad \varphi^F_r(x,y) = \left( \frac{(x-\bar{x}_F)(x_0 - \bar{x}_F)}{h_F^2} + \frac{(y-\bar{y}_F)(y_0 - \bar{y}_F)}{h_F^2} \right)^r,
\end{equation}
for $0 \leq r \leq m$ where $(x_0,y_0)$ is one of the endpoints of $F$. We then take, for each cell $T \in \scriptT_h$ and face $F \in \scriptF_h$, the set of all basis functions covering the entire mesh, where each function is extended to $\Omega$ by defining $\varphi = 0$ outside its initial domain:
\begin{equation}
\scriptB^m = \left(\bigcup_{T \in \scriptT_h} \{\varphi^T_{r,s}\}_{r+s\leq m} \right) \bigcup \left( \bigcup_{F \in \scriptF_h} \{\varphi^F_{r}\}_{r \leq m} \right).
\end{equation}

The algorithms for solving the pressure and concentration equations then become square linear systems by evaluating the discrete equations \eqref{eqn:discrete_pressure_equation} and \eqref{eqn:discrete_advection_diffusion_equation} at all basis functions $\tf_h \in \scriptB^m$. Since each basis function is only non-zero in a single element of the meshed domain, the resulting linear systems will be sparse. The following notation are referred to throughout the implementation.
\begin{center}
	\begin{tabular}{ l l }
		$\scriptB^m$	& the set of all basis functions over mesh elements up to degree $m$, \\
		$\scriptB^{m}_K$	&	the basis functions on the cell or face $K$ of degree up to $m$,  \\
		$\scriptB^{m,1}_K$	&	the basis functions on the cell or face $K$ of degree at least $1$, up to $m$,  \\
		$\scriptB^{m}_{\underline{T}}$	&	the basis functions on the cell $T$ and all adjacent faces of degree up to $m$.
	\end{tabular}
\end{center}

\begin{remark}
	The space $\scriptB^{m,1}_K$ is useful when considering gradients of high-order basis functions, since gradients of degree zero basis functions are identically zero.
\end{remark}

\subsection{Numerical quadrature}

The assembly of the scheme matrices requires the numerical integration of products of arbitrary order polynomial basis functions. To ensure no loss of accuracy or stability, sufficiently accurate numerical quadrature rules must be used. Most of the integrals that we are required to evaluate consist of the product of two or three degree $m$ polynomials. Based on this observation, for polynomial degrees of freedom of order $m$, we employ numerical quadrature schemes that are exact for polynomials of degree up to $3m$. To integrate cell polynomials, we use the numerical quadrature schemes introduced by Dunavant in \cite{dunavant1985high}. The Dunavant quadrature rules provide exact integrals for fixed degree polynomial functions on triangular domains so we will split each cell of the mesh into triangular sub-elements such that each face of the cell corresponds to one sub-element. Quadrature for edge polynomials is performed using standard Gaussian quadrature rules for one-dimensional domains.

\begin{figure}[H]
	\centering
	\begin{tikzpicture}
	\draw (0,0) -- (1,2);
	\draw (1,2) -- (5,2);
	\draw (5,2) -- (6,-1);
	\draw (6,-1) -- (3,-2);
	\draw (3,-2) -- (0,0);
	
	\draw[fill] (2.9,0.2) circle [radius=0.1];
	\node [below left] at (2.9,0.2) {$\bar{x}_T$};
	
	\draw [decorate,decoration={brace,amplitude=10pt}]
	(1,2.1) -- (5,2.1) node [black,midway,yshift=0.6cm]
	{\footnotesize $F$};
	
	\draw (3, 1.2) node {$K_{TF}$};
	
	\draw [dashed] (0,0) --  (2.9,0.2);
	\draw [dashed] (1,2) --  (2.9,0.2);
	\draw [dashed] (5,2) --  (2.9,0.2);
	\draw [dashed] (6,-1) --  (2.9,0.2);
	\draw [dashed] (3,-2) --  (2.9,0.2);
	\end{tikzpicture}
	\caption{A cell split into triangular sub-elements to facilitate numerical quadrature.} \label{fig:split_cell_into_subelements}
\end{figure}
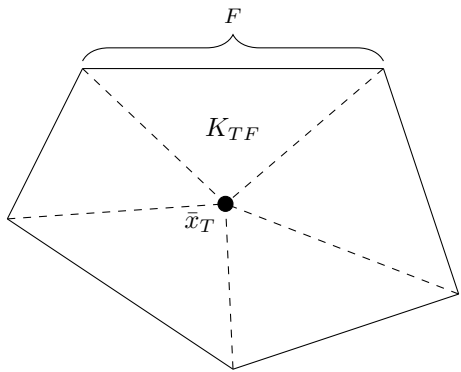

In the algorithms that follow, $K_{TF}$ stands for the triangular sub-element of the cell $T$ adjacent to the edge $F \in \scriptF_T$ as depicted in Figure \ref{fig:split_cell_into_subelements}.

\subsection{Assembly of the local operators for the pressure scheme}

The majority of the integrals in \eqref{eqn:discrete_pressure_equation} will be zero, so the local operators $\dbf_{T,\bLambda}$ can be efficiently assembled for each cell and then combined to form a sparse matrix for the scheme. These local constructions are outlined in the following algorithms. Algorithm \ref{algo:local_reconstruction} demonstrates the computation of local gradient reconstruction operator $\nabla \ro^{m+1}_{T, \bLambda}$ by inverting a high-order mass matrix of the basis gradient functions. The local gradient reconstruction is used in the assembly of the local diffusion operator $\dbf_{T,\bLambda}$ and in the reconstruction of the fluxes from the solution to the pressure equation. Algorithm \ref{algo:local_diffusion_bilinear_form} outlines the assembly of the local diffusion operator. The key step is the computation of the projections from the high-order correction, which is accomplished by computing a mass matrix of mixed high and low-order basis functions. Finally, Algorithm \ref{algo:local_load_vector} presents the implementation of the pressure source term, i.e.\ the right-hand side of the equation.

\subsection{Computation of the numerical fluxes}

We now present a simple algorithm to compute the numerical fluxes $\fv_{TF}$, that does not require us to
compute the whole right-hand side of \eqref{eqn:discrete_fluxes}. Let $T \in \scriptT_h$ and use the conservativity of the fluxes (Theorem \ref{thm:conservation_of_the_discrete_fluxes}) with a cell-absent test function $\tf_{T}=(0,\tf_F) \in \{0\}\times\dof^{2k}_{\partial T}$ to obtain
\begin{equation}\label{bal.eq.algo}
\dbf_{T,\bkappa}(\dpdof_{T}, \tf_{T}) = - \sum_{F\in\scriptF_T} (\fv_{TF}, \pf_F)_{L^2(F)}.
\end{equation}
Select an edge $F \in \scriptF_T$, then write the flux $\fv_{TF}$ in terms of the basis $\scriptB^{2k}_F$ like so
\begin{equation}\label{eqn:computed_fluxes}
\fv_{TF} = \sum_{i=0}^{|\scriptB^{2k}_F| - 1} \lambda^F_i \varphi^F_i.
\end{equation}
The balance equation \eqref{bal.eq.algo} now reads
\begin{equation} \label{eqn:balance_of_basis_functions}
\dbf_{T,\bkappa}(\dpdof_T, \tf_T) 
= - \sum_{i=0}^{|\scriptB^{2k}_F| - 1} \lambda_i^F (\varphi^F_i, \pf_F)_{L^2(F)},
\end{equation}
which results in a square linear system for $\{\lambda^F_i\}_i$ in terms of the basis functions ${\scriptB^{2k}_F}$. The construction of the numerical fluxes is shown in Algorithm \ref{algo:flux_reconstruction}.

\subsection{Assembly of the local operators for the concentration scheme}

The computation of the advective derivative is similar to that of the local gradient reconstruction. Algorithm \ref{algo:local_advective_derivative} shows how to efficiently compute $\ad_{T,\RU}^k$ for each cell $T \in \scriptT_h$ by solving a local problem involving the mass matrix of the cell. The local advection-reaction operator is computed as shown in Algorithm \ref{algo:local_advection_reaction_bilinear_form} by assembling together the advective derivatives of the local test functions combined with a mass-reaction matrix that accounts for the value of the reaction terms. Lastly, we present Algorithm \ref{algo:concentration_local_load_vector}, which computes the right-hand side of the concentration equation.

\vfill
\thanks{\textbf{Acknowledgement}: this research was supported by the Australian Government through the Australian Research Council's Discovery Projects funding scheme (pro\-ject number DP170100605).
	The authors would also like to thank Daniele Di Pietro and Matteo Cicuttin for giving us access
	to the {\sf hho} software platform\footnote{\emph{Agence pour la Protection des Programmes} deposit number IDDN.FR.001.220005.000.S.P.2016.000.10800}, which served as an invaluable starting point for our implementation
 and whose development was funded by Agence Nationale de la Recherche project HHOMM ANR-15-CE40-0005.
}

\begin{algorithm}[h] 
	\caption{Computation of the local gradient reconstruction matrix\label{algo:local_reconstruction}} 
	\begin{algorithmic}[1] 
		\State {Set $M_{\varphi_i, \varphi_j}=0$ for all $\varphi_i, \varphi_j \in \scriptB^{m+1,1}_{T}$}
		\State {Set $B_{\varphi_i, \varphi_j}=0$ for all $\varphi_i \in \scriptB^{m+1,1}_{T}, \varphi_j \in \scriptB^{m}_{\underline{T}}$}
		\For{each edge $F \in F_T$}
		\LineComment{Compute volumetric terms corresponding to the sub-element of $F$}
		\For{each quadrature node $(x, w_x)$ of $K_{TF}$}
		
		\For{each basis function $\varphi_i \in \scriptB^{m+1,1}_T$}
		
		\For{each basis function $\varphi_j \in \scriptB^{m+1,1}_T$}
		\State $M_{\varphi_i, \varphi_j} \leftarrow M_{\varphi_i, \varphi_j} + w_x (\bLambda(x) \nabla \varphi_i(x)) \cdot \nabla \varphi_j(x)$
		\EndFor
		
		\For{each basis function $\varphi_j \in \scriptB^{m}_T$}
		\State $B_{\varphi_i, \varphi_j} \leftarrow B_{\varphi_i, \varphi_j} + w_x (\bLambda(x) \nabla \varphi_i(x)) \cdot \nabla \varphi_j(x)$
		\EndFor
		
		\EndFor
		
		\EndFor
		
		\LineComment{Compute edge terms}
		\For{each quadrature node $(x, w_x)$ of $F$}
		\For{each basis function $\varphi_i \in \scriptB^{m+1,1}_T$}
		
		\For{each basis function $\varphi_j \in \scriptB^{m}_F$}
		\State $B_{\varphi_i, \varphi_j} \leftarrow B_{\varphi_i, \varphi_j} + w_x \nabla \varphi_i(x) \cdot (\bLambda(x) \nv_{TF}) \varphi_j(x)$
		\EndFor
		\For{each basis function $\varphi_j \in \scriptB^{m}_T$}
		\State $B_{\varphi_i, \varphi_j} \leftarrow B_{\varphi_i, \varphi_j} - w_x \nabla \varphi_i(x) \cdot (\bLambda(x) \nv_{TF}) \varphi_j(x)$
		\EndFor
		
		\EndFor
		\EndFor
		
		\EndFor
		\State {Set $G = M^{-1} B$} 		\Comment{Solve for the gradient reconstruction}
	\end{algorithmic} 
\end{algorithm}

\begin{algorithm}[h] 
	\caption{Assembly of the local diffusion matrix\label{algo:local_diffusion_bilinear_form}} 
	\begin{algorithmic}[1] 
		\LineComment{Compute mass matrices}
		\State {Set $M^{TT}_{\varphi_i,\varphi_j} = 0$ for all $\varphi_i, \varphi_j \in \scriptB^{m+1}_T$}
		\State {Set $M^{TF}_{\varphi_i,\varphi_j} = 0$ for all $\varphi_i \in \scriptB^{m}_F,\ \varphi_j \in \scriptB^{m+1}_T$ for all edges $F \in \scriptF_T$}
		\State {Set $M^{FF}_{\varphi_i,\varphi_j} = 0$ for all $\varphi_i, \varphi_j \in \scriptB^{m}_F$ for all edges $F \in \scriptF_T$}
		\For{each edge $F \in F_T$}
		
		\LineComment{Compute cell-on-cell mass matrix}
		\For{each quadrature node $(x, w_x)$ of $K_{TF}$}
		\For{each basis function $\varphi_i \in \scriptB^{m+1}_T$}
		\For{each basis function $\varphi_j \in \scriptB^{m+1}_T$}
		\State $M^{TT}_{\varphi_i, \varphi_j} \leftarrow M^{TT}_{\varphi_i, \varphi_j} + w_x \varphi_i(x) \varphi_j(x)$
		\EndFor
		\EndFor
		\EndFor
		
		\LineComment{Compute cell-on-edge and edge-on-edge mass matrices}
		\For{each quadrature node $(x, w_x)$ of $F$}
		\For{each basis function $\varphi_i \in \scriptB^{m}_F$}
		\For{each basis function $\varphi_j \in \scriptB^{m+1}_T$}
		\State $M^{TF}_{\varphi_i, \varphi_j} \leftarrow M^{TF}_{\varphi_i, \varphi_j} + w_x \varphi_i(x) \varphi_j(x)$
		\EndFor
		\EndFor
		
		\For{each basis function $\varphi_i \in \scriptB^{m}_F$}
		\For{each basis function $\varphi_j \in \scriptB^{m}_F$}
		\State $M^{FF}_{\varphi_i, \varphi_j} \leftarrow M^{FF}_{\varphi_i, \varphi_j} + w_x \varphi_i(x) \varphi_j(x)$
		\EndFor
		\EndFor
		\EndFor
		\EndFor
		
		\LineComment{Compute the volumetric term}
		\State {Compute the gradient reconstruction $G$ (Algorithm \ref{algo:local_reconstruction})}
		\State {Set $A = B^{tr} G$} \Comment{$B^{tr}$ is the transpose of $B$ from Algorithm \ref{algo:local_reconstruction}}
		
		\LineComment{Compute the local reconstruction cell projection matrix}
		\State {Set $M^{TT, m}_{\varphi_i, \varphi_j} = M^{TT}_{\varphi_i, \varphi_j}$ for all $\varphi_i, \varphi_j \in \scriptB^{m}_T$} 
		\State {Set $M^{TT, m + 1}_{\varphi_i, \varphi_j} = M^{TT}_{\varphi_i, \varphi_j}$ for all $\varphi_i \in \scriptB^{m}_T, \varphi_j \in \scriptB^{m+1,1}_T$} 
		
		\State {Solve $M^{TT, m} P_T = M^{TT, m+1}$ for $P_T$}
		
		\LineComment{Compute the edge terms}
		\For{each edge $F \in F_T$}
		\LineComment{Compute the edge projection matrix}
		\State {Set $P_F = (M^{FF})^{-1}$}
		\State {Set $M^{TF,m}_{\varphi_i, \varphi_j} = M^{TF}_{\varphi_i, \varphi_j}$ for all $\varphi_i \in \scriptB^{m}_F, \varphi_j \in \scriptB^{m}_T$} 
		\State {Set $M^{TF,m+1}_{\varphi_i, \varphi_j} = M^{TF}_{\varphi_i, \varphi_j}$ for all $\varphi_i \in \scriptB^{m}_F, \varphi_j \in \scriptB^{m+1,1}_T$} 
		\LineComment{Compute projections}
		\State {Set $B_F = P_F M^{TF,m+1} G - I_F$} \Comment{$I_F=$ identity matrix on the edge terms}
		\State {Set $B_T = P_F M^{TF,m} (I_T - P_T)$} \Comment{$I_T=$ identity matrix on the cell terms}
		\State {Set $B_{RF} = B_F + B_T$}
		\LineComment{Compute local stabilisation terms}
		\State $A \leftarrow A + \frac{\bLambda_{TF}}{h_F} B_{RF}^{tr} M^{FF} B_{RF}$
		\EndFor 
	\end{algorithmic} 
\end{algorithm}

\begin{algorithm}[h] 
	\caption{Assembly of the pressure source vector\label{algo:local_load_vector}} 
	\begin{algorithmic}[1] 
		\State Set $b_{\varphi_i} = 0$ for all $\varphi_i \in \scriptB^{2k}_{\underline{T}}$
		\For{each edge $F \in F_T$}
		\For{each quadrature node $(x, w_x)$ of $K_{TF}$}
		\For{each basis function $\varphi_i \in \scriptB^{2k}_T$}
		\State $b_{\varphi_i} \leftarrow b_{\varphi_i} + w_x \varphi_i(x) (q^+(t^\hs, x) - q^-(t^\hs, x))$
		\EndFor
		\EndFor
		\EndFor
	\end{algorithmic} 
\end{algorithm}

\begin{algorithm}[h] 
	\caption{Computation of the local fluxes\label{algo:flux_reconstruction}} 
	\begin{algorithmic}[1]
		\State {Compute the local diffusion operator matrix $A$ with $m = 2k$ (Algorithm \ref{algo:local_diffusion_bilinear_form})}
		\For{each edge $F \in \scriptF_T$}
		\State Set $\alpha_{\varphi_i} = 0$ for all $\varphi_i \in \scriptB^{2k}_F$
		
		\For{each basis function $\varphi_i \in \scriptB^{2k}_F$}
		\State $\alpha_{\varphi_i} \leftarrow - (\dpdof_T \cdot A_{:, \varphi_i})$	\Comment{$A_{:,\varphi_i}$ is the column of $A$ corresponding to $\varphi_i$}
		\EndFor
		
		\LineComment{Build the Gram matrix}
		\State {Set $G_{\varphi_i, \varphi_j} = 0$ for all $\varphi_i, \varphi_j \in \scriptB^{2k}_F$}
		\For{each quadrature node $(x, w_x)$ on $F$}
		\For{each basis function $\varphi_i \in \scriptB^{2k}_F$}
		\For{each basis function $\varphi_j \in \scriptB^{2k}_F$}
		\State $G_{\varphi_i, \varphi_j} \leftarrow G_{\varphi_i, \varphi_j} + w_x \varphi_i(x) \varphi_j(x)$
		\EndFor
		\EndFor
		\EndFor
		
		\State {Set $\lambda^F = G^{-1} \alpha$}
		\EndFor
		\State {Compute $\fv_{TF}$ in terms of $\lambda^F$ as in \eqref{eqn:computed_fluxes}}
	\end{algorithmic} 
\end{algorithm}

\begin{algorithm}[h] 
	\caption{Assembly of the local advective-reactive derivative matrix\label{algo:local_advective_derivative}} 
	\begin{algorithmic}[1] 
		\State {Set $M_{\varphi_i, \varphi_j} = 0$ for all $\varphi_i, \varphi_j \in \scriptB^k_T$}
		\State {Set $B_{\varphi_i, \varphi_j} = 0$ for all $\varphi_i \in \scriptB^k_T, \varphi_j \in \scriptB^k_{\underline{T}}$}
		
		\For{each edge $F \in \scriptF_T$}
		\LineComment{Compute the volumetric terms}
		\For{each quadrature node $(x, w_x)$ in $K_{TF}$}
		\For{each basis function $\varphi_i \in \scriptB^k_T$}
		\For{each basis function $\varphi_j \in \scriptB^k_T$}
		\State $M_{\varphi_i, \varphi_j} \leftarrow M_{\varphi_i, \varphi_j} + w_x \varphi_i(x) \varphi_j(x)$
		\State $B_{\varphi_i, \varphi_j} \leftarrow B_{\varphi_i, \varphi_j} + w_x (\fv_{TF}(x) \cdot \nabla \varphi_i(x)) \varphi_j(x)$
		\EndFor
		\EndFor
		\EndFor
		
		\LineComment{Compute the edge terms}
		\For{each quadrature node $(x, w_x)$ on $F$}
		\For{each basis function $\varphi_i \in \scriptB^k_T$}
		\For{each basis function $\varphi_j \in \scriptB^k_F$}
		\State $B_{\varphi_i, \varphi_j} \leftarrow B_{\varphi_i, \varphi_j} + w_x (\fv_{TF}(x) \cdot \nv_{TF}) \varphi_i(x) \varphi_j(x)$
		\EndFor
		\For{each basis function $\varphi_j \in \scriptB^k_T$}
		\State $B_{\varphi_i, \varphi_j} \leftarrow B_{\varphi_i, \varphi_j} - w_x (\fv_{TF}(x) \cdot \nv_{TF}) \varphi_i(x) \varphi_j(x)$
		\EndFor
		\EndFor
		\EndFor
		\EndFor
		\LineComment{Solve for the advective-reaction derivative}
		\State {Set $G = M^{-1}B$}
	\end{algorithmic} 
\end{algorithm}

\begin{algorithm}[h] 
	\caption{Assembly of the local advection-reaction matrix\label{algo:local_advection_reaction_bilinear_form}} 
	\begin{algorithmic}[1] 
		\State {Set $A_{\varphi_i,\varphi_j} = 0$ for all $\varphi_i,\varphi_j \in \scriptB^k_{\underline{T}}$}
		\State {Compute the advective-reactive derivative $G$ (Algorithm \ref{algo:local_advective_derivative})}
		\For{each edge $F \in \scriptF_T$}
		\LineComment{Compute the volumetric terms}
		\For{each quadrature node $(x, w_x)$ in $K_{TF}$}
		\For{each basis function $\varphi_i \in \scriptB^k_T$}
		\For{each basis function $\varphi_j \in \scriptB^k_T$}
		\State $A_{\varphi_i, \varphi_j} \leftarrow A_{\varphi_i, \varphi_j} + w_x \varphi_i(x) \varphi_j(x) \mu(x)$
		\EndFor
		\EndFor
		\For{each basis function $\varphi_i \in \scriptB^k_{\underline{T}}$}
		\LineComment{Compute $\ad_{T,\RU}^k \tf_{\underline{T}}$}
		\State {Set $v = 0$}
		\For{each basis function $\varphi_j \in \scriptB^k_T$}
		\State $v \leftarrow v + \varphi_j(x) G_{\varphi_j, \varphi_i}$
		\EndFor
		\For{each basis function $\varphi_j \in \scriptB^k_T$}
		\State $A_{\varphi_i, \varphi_j} \leftarrow A_{\varphi_i, \varphi_j} - w_x \varphi_j(x) v$
		\EndFor
		\EndFor
		\EndFor
		
		\LineComment{Compute the edge terms}
		\For{each quadrature node $(x, w_x)$ on $F$}
		\For{each basis function $\varphi_i \in \scriptB^k_T$}
		\For{each basis function $\varphi_j \in \scriptB^k_F$}
		\State $A_{\varphi_i, \varphi_j} \leftarrow A_{\varphi_i, \varphi_j} - w_x [\fv_{TF}(x) \cdot \nv_{TF}]^- \varphi_i(x) \varphi_j(x)$
		\State $A_{\varphi_j, \varphi_i} \leftarrow A_{\varphi_j, \varphi_i} - w_x [\fv_{TF}(x) \cdot \nv_{TF}]^- \varphi_i(x) \varphi_j(x)$
		\EndFor
		\For{each basis function $\varphi_j \in \scriptB^k_T$}
		\State $A_{\varphi_i, \varphi_j} \leftarrow A_{\varphi_i, \varphi_j} + w_x [\fv_{TF}(x) \cdot \nv_{TF}]^- \varphi_i(x) \varphi_j(x)$
		\EndFor
		\EndFor
		\For{each basis function $\varphi_i \in \scriptB^k_F$}
		\For{each basis function $\varphi_j \in \scriptB^k_F$}
		\State $A_{\varphi_i, \varphi_j} \leftarrow A_{\varphi_i, \varphi_j} + w_x [\fv_{TF}(x) \cdot \nv_{TF}]^- \varphi_i(x) \varphi_j(x)$
		\EndFor
		\EndFor
		\EndFor
		\EndFor
	\end{algorithmic} 
\end{algorithm}

\begin{algorithm}[h] 
	\caption{Assembly of the concentration source vector\label{algo:concentration_local_load_vector}} 
	\begin{algorithmic}[1] 
		\State Set $\bb_{\varphi_i} = 0$ for all $\varphi_i \in \scriptB^{k}_{\underline{T}}$
		\For{each edge $F \in F_T$}
		\For{each quadrature node $(x, w_x)$ of $K_{TF}$}
		\For{each basis function $\varphi_i \in \scriptB^{k}_T$}
		\State $b_{\varphi_i} \leftarrow b_{\varphi_i} + w_x \varphi_i(x) (q^+(t^\hs, x)\hat{c}(t^\hs, x) + \frac{2\Phi}{\Delta t} \pc^n_h(x))$
		\EndFor
		\EndFor
		\EndFor
	\end{algorithmic} 
\end{algorithm}

\clearpage
\bibliographystyle{abbrv}
\bibliography{ref}

\end{document}